\documentclass[journal,10pt]{IEEEtran}
\ifCLASSINFOpdf
\else
\fi
\hyphenation{op-tical net-works semi-conduc-tor}

\usepackage{comment} 

\usepackage{amsmath,graphicx}

\usepackage{pstricks,graphicx}
\usepackage{setspace}
\usepackage{psfrag}
\usepackage{amssymb}
\usepackage{amsmath}
\usepackage{amsthm}

\newtheorem{remark}{Remark}

\newtheorem{lemma}{Lemma}

\newtheorem{theorem}{Theorem}

\usepackage{pstool}

\usepackage{multirow}

 \usepackage{algorithm}
 \usepackage{algorithmic}
 
\begin{document}



\title{Revisiting the Primal-Dual Method of Multipliers for  Optimisation over Centralised Networks}

\graphicspath{{figures/}}
%

\author{Guoqiang~Zhang, Kenta Niwa and W. Bastiaan Kleijn
\thanks{G.~Zhang is with the School of Electrical and Data Engineering,  University of Technology, Sydney, Australia. Email: {guoqiang.zhang@uts.edu.au}}

\thanks{K. Niwa is with both Communication Science Laboratories and Computer and Data Science Laboratories, Nippon Telegraph and Telephone Corporation (NTT).
Email: {kenta.niwa.bk@hco.ntt.co.jp}}

\thanks{W.~Bastiaan~Kleijn is Victory University of Wellington.
Email: {bastiaan.kleijn@ecs.vuw.ac.nz}}
}

\maketitle

\begin{abstract}
The primal-dual method of multipliers (PDMM) was originally designed for solving a decomposable optimisation problem over a general network. In this paper, we revisit PDMM for optimisation over a centralized network. We first note that the recently proposed method \emph{FedSplit}  \cite{Pathak2021} implements PDMM for a centralized network. In \cite{Pathak2021},  Inexact FedSplit (i.e., gradient based FedSplit) was also studied both empirically and theoretically.  We identify the cause for the poor reported performance of Inexact FedSplit, which is due to the improper initialisation in the gradient operations at the client side. To fix the issue of Inexact FedSplit, we propose two versions of Inexact PDMM, which are referred to as gradient-based PDMM (GPDMM) and accelerated GPDMM (AGPDMM), respectively. AGPDMM accelerates GPDMM at the cost of transmitting two times the number of parameters from the server to each client per iteration compared to GPDMM. We provide a new convergence bound for GPDMM for a class of convex optimisation problems. Our new bounds are tighter than those derived for Inexact FedSplit. We also investigate the update expressions of AGPDMM and SCAFFOLD to find their similarities. It is found that when the number $K$ of gradient steps at the client side per iteration is $K=1$, both AGPDMM and SCAFFOLD reduce to vanilla gradient descent with proper parameter setup. Experimental results indicate that AGPDMM converges faster than SCAFFOLD when $K>1$ while GPDMM converges slightly worse than SCAFFOLD. 
 
\end{abstract}


\begin{IEEEkeywords}
Distributed optimisation, PDMM, FedSplit, SCAFFOLD.
\end{IEEEkeywords}

%
\IEEEpeerreviewmaketitle

\vspace{-4mm}
\section{Introduction}
\vspace{-2mm}
In the last decade, distributed optimisation \cite{Boyd11ADMM} has drawn increasing attention due to the demand for massive-data processing and easy remote access to ubiquitous computing units (e.g., a computer or a mobile phone) over a network.  Its basic principle is to allocate the data over a set of computing units instead of one server and then allow the computing units to collaborate with each other in a distributed manner to iteratively obtain a global solution (e.g., a machine learning (ML) model) of an optimisation problem which is formulated via the data.  In general, the typical challenges faced by distributed optimisation include, for instance, data-heterogeneity across the network,  expensive communication, data-privacy requirements, massive scalability,  and heterogeneous local computational resources \cite{Dimakis10GossipAlg, Li19Fed}. Depending on the applications, various methods have been developed for addressing one or more challenges in the considered network (e.g., \cite{Boyd06gossip, Richtarik16Dist, Li14Dist}). 


Considering the application of distributed optimisation for learning an ML model, distributed learning \cite{Zhang16PDMM, Kenta20KDD} over a decentralized (i.e., peer-to-peer (P2P)) network and federated learning \cite{Kairouz19Fed} over a centralised (i.e., server-client topology) network have been two of the most active research topics in recent years. In a P2P network, network nodes can be connected arbitrarily in an equal relationship. In this situation, distributed optimisation methods are designed to be node-independent w.r.t. local computation and communication to  enable network scalability.  The algorithms in the literature can be roughly classified as either average-consensus based or primal-dual based.  

In brief, the average-consensus approach  \cite{Li20gradientTrack, Li19GradTrack, Blot16Gossip} allows the network nodes to share and average (or fuse) the estimated models to be learned among neighbours iteratively until reaching global consensus.  On the other hand, the primal-dual approach \cite{Zhang16PDMM, Kenta20KDD, Rajawat20PDMM} intends to explicitly represent the neighbouring consensus requirements via linear equality constraints in terms of neighbouring model variables and then iteratively solve the reformulated optimisation problem via either Peaceman-Rachford  (PR) splitting or Douglas-Rachford (DR) splitting (e.g., \cite{OksendalBook03, Ryu16Mono}). In particular, the alternating direction method of multipliers (ADMM) \cite{Giselsson17ADMM}  and the primal-dual method of multipliers (PDMM) \cite{Zhang16PDMM,Sherson17PDMM} are two known algorithms based on DR splitting and PR splitting, respectively. One major advantage of the second approach is that it is able to handle heterogeneous\footnote{Alternatively referred to as non i.i.d. data across different network nodes. } data implictly by imposing linear equality constraints w.r.t. model variables. 

Federated learning focuses on networks with server-client topologies  \cite{Kairouz19Fed}. In the learning procedure, the server is responsible for collecting, fusing, and broadcasting information from/to all the clients while each client only needs to communicate with the server directly, which makes it easily implementable. In general, federated learning is more time-effective through global information collection and spread at the cost of limited scalability than distributed learning over a P2P network \cite{Li19Fed}. The algorithms developed for a P2P network (e.g.,  \cite{Kenta20KDD}) can often be utilised for federated learning by viewing the server-client structure as a special type of P2P network. Recent developed algorithms for federated learning include, for example, FEDAC \cite{Yuan20Fed}, FedSplit \cite{Pathak2021},  and SCAFFOLD \cite{Karimireddy20SCAFFOLD}. SCAFFOLD can be viewed as belonging to the primal-dual approach due to the introduced covariates in its update expressions for compensating the functional heterogeneity. 

In this paper, we revisit the primal-dual method of multipliers (PDMM) proposed in \cite{Zhang16PDMM, Connor17PDMM}. The method was originally designed to solve a decomposable optimisation problem over a graphical model $\mathcal{G}=(\mathcal{V},\mathcal{E})$:
\begin{align}
\hspace{-3mm}\min_{\{\boldsymbol{x}_i\}}\hspace{-0.6mm}\Big(\hspace{-0.3mm} \sum_{i\in \mathcal{V}} f_i(\boldsymbol{x}_i) \hspace{-0.3mm}\Big) \; \textrm{s. t.} \; \boldsymbol{B}_{i|j} \boldsymbol{x}_i \hspace{-0.6mm}=\hspace{-0.6mm}  \boldsymbol{B}_{j|i}  \boldsymbol{x}_j \; \forall (i,j)\in \mathcal{E}, 
\label{equ:optiGeneral}
\end{align}
where the notation $\textrm{s.~t.}$ stands for ``subject to", $\mathcal{V}$ and $\mathcal{E}$ represent the sets of nodes and undirected edges respectively, and $f_i(\cdot)$ denotes the local function at node $i\in \mathcal{V}$. The two constant matrices $\boldsymbol{B}_{i|j}$ and $\boldsymbol{B}_{j|i}$ specify the linear equality constraint for $(i,j)\in \mathcal{E}$.  As PDMM belongs to PR splitting, it enjoys the benefit that PR splitting gives the best convergence bounds with proper parameter setups for a certain class of functions \cite[Remark 4]{Giselsson17ADMM}. The recent work \cite{Kenta20KDD} has successfully applied Inexact PDMM (or gradient based PDMM) for training deep neural networks (DNNs) over P2P networks to the case of heterogeneous data.  In \cite{Rajawat20PDMM}, the authors successfully extend PDMM by incorporating SAGA, L-SVRG, and SVRG++ over P2P networks. The performance of PDMM for centralised networks remains to be explored. 


This paper studies the relationship between PDMM, and the two methods FedSplit and SCAFFOLD from the literature for optimisation over centralised networks. Our contributions are three-fold.  Firstly, it is found that PDMM reduces to FedSplit when applied to a centralized network.  We identify the cause for the poor reported performance of Inexact FedSplit (i.e., gradient based FedSplit) in \cite{Pathak2021},  as being due to the improper parameter initialisation at the client side per iteration.  

Secondly, to correct the issue of Inexact FedSplit, we propose two versions of inexact PDMM, which are referred to as gradient-based PDMM (GPDMM) and  accelerated GPDMM (AGPDMM), respectively. It is noted that GPDMM only needs to transmit one variable (a combination of a primal variable and a dual variable) per iteration between the server and clients. To accelerate the convergence speed of GPDMM,  AGPDMM is designed to transmit two variables (a primal variable and a dual variable) per iteration from the server to the clients.  Linear convergence rates for strongly convex and sublinear convergence rates for general convex cases are then established for GPDMM, which lead to tighter convergence bounds than those in \cite{Pathak2021}.  We note that, in principle, the analysis results in \cite{Connor17PDMM, Rajawat20PDMM} for GPDMM over a decentralied network also hold for centralised networks. However, \cite{Connor17PDMM} only shows the convergence of  GPDMM while the recent work \cite{Rajawat20PDMM} only shows the sublinear convergence rates. 

Thirdly, it is found that both AGPDMM and SCAFFOLD reduce to the vanilla gradient descent operation under proper parameter setup when the number $K$ of gradient steps at the client side per iteration is set to $K=1$. Experimental results show that GPDMM produces slightly worse performance than SCAFFOLD which transmits two variables between the server and clients per iteration. On the other hand, AGPDMM converges faster than SCAFFOLD when $K>1$.

\vspace{-3mm}
\section{Problem Description}
\vspace{-1mm}
\noindent\textbf{Notation and definition of a convex conjugate function}:   We use bold small letters to denote vectors and bold capital letters to denote matrices.  In particular, $\boldsymbol{I}$ denotes the identity matrix. The superscript $(\cdot)^T$ represents the transpose operator. Given a vector $\boldsymbol{y}$, we use $\|\boldsymbol{y}\|$ to denote its $l_2$ norm. Given a graphical model $\mathcal{G}=(\mathcal{V}, \mathcal{E})$, we use $\mathcal{N}_i$ to denote the set of neighbours for node $i$.  Suppose $h:\mathbb{R}^n\rightarrow \mathbb{R}\cup \{+\infty\}$ is a closed, proper and convex function.  Then the conjugate of $h(\cdot)$ is defined as \cite{SawaragiBook85}[Definition 2.1.20]
\begin{align}
h^{\ast}(\boldsymbol{\delta})\stackrel{\Delta}{=} \max_{\boldsymbol{y}} \boldsymbol{\delta}^T\boldsymbol{y}-h(\boldsymbol{y}), \label{equ:conj_def}
\end{align}
where the conjugate function $h^{\ast}$ is again a closed, proper and convex function. 


\noindent\textbf{Problem settings}: As a special case of (\ref{equ:optiGeneral}), we focus on a network of one server responsible for coordinating the learning process of $m$ clients, which can be represented as  
\begin{align}
\hspace{-3mm}\min_{\{\boldsymbol{x}_s, \boldsymbol{x}_i\in\mathbb{R}^d \}}\hspace{-0.6mm}\left(\hspace{-0.3mm} \sum_{i=1}^m f_i(\boldsymbol{x}_i) \hspace{-0.3mm}\right) \; \textrm{s. t.} \;  \boldsymbol{x}_s \hspace{-0.6mm}=\hspace{-0.6mm}  \boldsymbol{x}_i \; i=1,\ldots, m,
\label{equ:optiFed}
\end{align}
where the edge set $\mathcal{E}$ in the graph is $\mathcal{E}=\{(i,s)\}_{i=1}^m$,  the server function $f_s(\boldsymbol{x}_s)=0$, and each client function $f_i: \mathbb{R}^{d}\rightarrow \mathbb{R}$ is both continuously differentiable with the Lipschitz continuous gradient $L>0$ \cite{Zhou18Duality}
\begin{align}
&\hspace{-3mm}f_i(\boldsymbol{y}_i) \geq f_i(\boldsymbol{x}_i) \hspace{-0.6mm}+\hspace{-0.6mm} \nabla f_i(\boldsymbol{x})^T(\boldsymbol{y}_i \hspace{-0.6mm}- \hspace{-0.6mm}\boldsymbol{x}_i)  \hspace{-0.6mm}\nonumber \\
&\hspace{8mm}+  \frac{1}{2L}\| \nabla f_i(\boldsymbol{x}_i)  \hspace{-0.7mm}- \hspace{-0.7mm} \nabla f_i(\boldsymbol{y}_i) \|^2, \label{equ:gradLips2} 
\end{align}
and (strongly) convex
\begin{align}
&\hspace{-3mm}f_i(\boldsymbol{y}_i) \geq f_i(\boldsymbol{x}_i) \hspace{-0.6mm}+\hspace{-0.6mm} \nabla f_i(\boldsymbol{x})^T(\boldsymbol{y}_i \hspace{-0.6mm}- \hspace{-0.6mm}\boldsymbol{x}_i)  \hspace{-0.6mm}+ \hspace{-0.6mm} \frac{\mu}{2}\| \boldsymbol{x}_i  \hspace{-0.7mm}- \hspace{-0.7mm} \boldsymbol{y}_i \|^2, \label{equ:muStrong} 
\end{align}
for all $ \boldsymbol{y}_i\in \mathbb{R}^d, \boldsymbol{x}_i \in \mathbb{R}^d$. It is noted that convergence analysis for GPDMM will be conducted for both strong convexity ($\mu>0$) and general convexity ($\mu=0$) later on.

It is worth noting that (\ref{equ:gradLips2}) is essential to prove the linear convergence speed of GPDMM later on. In principle, the gradient difference $\|\nabla f_i(\boldsymbol{x}_i) - \nabla f_i(\boldsymbol{y}_i)\|^2$ is able to capture how the estimates of the dual variables of the method evolve over iterations. 

The Lagrangian function for (\ref{equ:optiFed}) can be constructed as
\begin{align}
\mathcal{L}(\boldsymbol{x}_s, \{\boldsymbol{x}_i, \boldsymbol{\delta}_i\})= \sum_{i=1}^m f_i(\boldsymbol{x}_i) + \sum_{i=1}^m\boldsymbol{\delta}_{i}(\boldsymbol{x}_s -\boldsymbol{x}_i), \label{equ:fed_Lag}
\end{align}
where $\{\boldsymbol{\delta}_i\}$ are the Lagrangian multipliers, and can also be viewed as the \emph{dual} variables as opposed to the primal variables $\boldsymbol{x}_s$ and $ \{\boldsymbol{x}_i\}$. We assume there exists a saddle point  $\boldsymbol{x}_s^{\star}, \{\boldsymbol{x}_i^{\star}, \boldsymbol{\delta}_i^{\star}\}$ for (\ref{equ:fed_Lag}). The corresponding KKT conditions are given by 
\begin{align}
\nabla f_i(\boldsymbol{x}_i^{\star}) = \boldsymbol{\delta}_i^{\star}\; \forall i,  \quad \boldsymbol{x}_i^{\star} = \boldsymbol{x}_s^{\star} \; \forall i, \quad \sum_{i=1}^m \boldsymbol{\delta}_i^{\star} = 0.   \label{equ:KKT3}
\end{align} 
The research goal is to obtain a good estimation of $\boldsymbol{x}_s^{\star}$ via local computation and communication between the server and the $m$ clients after a reasonably number of iterations. We will propose two versions of Inexact PDMM by inspection of the update expressions of PDMM later on to reduce the computational complexity of PDMM per iteration.



\vspace{-2mm}
\section{Relationship between PDMM and FedSplit}
\vspace{-1mm}
\label{sec:PDMM_Fedsplit}
In this section, we first briefly describe the updating procedure of PDMM for both the general problem (\ref{equ:optiGeneral}) and the special case (\ref{equ:optiFed}).  We will then explain that the recently developed method \emph{FedSplit} is identical to PDMM for solving the special problem (\ref{equ:optiFed}). After that, the poor performance for Inexact FedSplit in \cite{Pathak2021} will be studied. 
 
\vspace{-2mm} 
\subsection{ PDMM}
\vspace{-1mm}
\noindent \textbf{Iterates over a general graph}: Before introducing the method, we first present the dual problem for (\ref{equ:optiGeneral}), which can be obtained by constructing and optimising the so-called (primal) Lagrangian function 
\begin{align}
\hspace{-1mm}&\hspace{0mm}\max_{ \{\boldsymbol{\delta}_{ij}\} }\min_{\{\boldsymbol{x}_i\}} \Big( \sum_{i\in \mathcal{V}}\hspace{-0.6mm} f_i(\boldsymbol{x}_i)\hspace{-0.6mm}-\hspace{-2mm}\sum_{(i,j)\in \mathcal{E}}\hspace{-1.5mm}\boldsymbol{\delta}_{ij}^{T}(\boldsymbol{B}_{i|j}\boldsymbol{x}_i\hspace{-0.6mm}-\hspace{-0.8mm}\boldsymbol{B}_{j|i}\boldsymbol{x}_j\hspace{-0.3mm}) \Big) \nonumber \\
\hspace{-1mm}&\stackrel{(a)}{\footnotesize \Longleftrightarrow} \hspace{-2mm} \max_{ \{\boldsymbol{\lambda}_{i|j}, \boldsymbol{\lambda}_{j|i}\} } \min_{\{\boldsymbol{x}_i\}} \hspace{-1mm} \sum_{i\in \mathcal{V}}\hspace{-1.3mm}  \Big( f_i(\boldsymbol{x}_i) \hspace{-0.6mm} -\hspace{-0.6mm}  \boldsymbol{x}_i^T \hspace{-1.2mm}\sum_{j\in \mathcal{N}_i}\hspace{-1.5mm} \boldsymbol{B}_{i|j}^T\boldsymbol{\lambda}_{i|j}\hspace{-0.6mm} \Big),  \hspace{-1.2mm}\; \left\{\hspace{-2mm} \begin{array}{l}  \boldsymbol{\lambda}_{i|j} \hspace{-0.7mm}=\hspace{-0.7mm} -\hspace{-0.7mm} \boldsymbol{\lambda}_{j|i} \\ \forall (i,j)\in \mathcal{E} \end{array}\right. 
 \nonumber \\
\hspace{-1mm}&\stackrel{(b)}{\footnotesize \Longleftrightarrow} \hspace{-1.5mm} \max_{ \{\boldsymbol{\lambda}_{i|j}, \boldsymbol{\lambda}_{j|i}\} } \sum_{i\in \mathcal{V}}\hspace{-0.6mm}  -f_i^{\ast} \Big(  \sum_{j\in \mathcal{N}_i}\boldsymbol{B}_{i|j}^T\boldsymbol{\lambda}_{i|j} \Big), \hspace{-1.2mm}\; \left\{\hspace{-2mm} \begin{array}{l}  \boldsymbol{\lambda}_{i|j} = -\boldsymbol{\lambda}_{j|i} \\ \forall (i,j)\in \mathcal{E} \end{array}\right. \hspace{-4mm} , 
 \label{equ:dualGen}
\end{align}
where $\boldsymbol{\delta}_{ij}$ is the Lagrangian multiplier (or the dual variable) for each constraint $\boldsymbol{B}_{i|j}\boldsymbol{x}_i=\boldsymbol{B}_{j|i}\boldsymbol{x}_j$, which by using the lifting technique \cite{Zhang16PDMM}, can be further replaced by two dual variables $(\boldsymbol{\lambda}_{i|j},\boldsymbol{\lambda}_{j|i})$ under the constraint  $\boldsymbol{\lambda}_{i|j}=-\boldsymbol{\lambda}_{j|i}$ in step $(a)$. The variable $\boldsymbol{\lambda}_{i|j}$ is owned by node $i$ and is related to neighbour $j$.  It is noted that $\mathcal{N}_i$ denotes the set of neighbours for node $i$. $f_i^{\ast}$ in step $(b)$ is the conjugate function of $f_i$ (see (\ref{equ:conj_def}) for the definition).  We use $\boldsymbol{\lambda}_i$ to denote the vector by concatenating all $\boldsymbol{\lambda}_{i|j}$, $j\in\mathcal{N}_i$. Finally, we let $\boldsymbol{\lambda}=[\boldsymbol{\lambda}_1^T,\ldots,\boldsymbol{\lambda}_{|\mathcal{V}|}^T]^T$ and $\boldsymbol{x}=[\boldsymbol{x}_1^T,\ldots,\boldsymbol{x}_{|\mathcal{V}|}^T]^T$, where the dimension of $\boldsymbol{\lambda}$ depends on the network topology.

Instead of solving the primal problem (\ref{equ:optiGeneral}) or the dual one (\ref{equ:dualGen}) separately, PDMM is designed to iteratively approach a saddle point of an augmented primal-dual Lagrangian function obtained by combining  (\ref{equ:optiGeneral}) and (\ref{equ:dualGen}) \cite{Zhang16PDMM}:
\begin{align}
\hspace{-4mm}\mathcal{L}_{\rho}(\boldsymbol{x},\boldsymbol{\lambda}&)=\sum_{i\in \mathcal{V}} \hspace{-0.6mm}\Big[f_i(\boldsymbol{x}_i)+\hspace{-1.5mm}\sum_{j\in \mathcal{N}_i}\hspace{-0.5mm}\boldsymbol{\lambda}_{j|i}^T(\boldsymbol{B}_{i|j}\boldsymbol{x}_i) \nonumber\\
&\hspace{-3mm}-\hspace{-0.5mm}f_i^{\ast}\Big(\sum_{j\in\mathcal{N}_i}\boldsymbol{B}_{i | j}^T\boldsymbol{\lambda}_{i|j}\Big)\hspace{-0.5mm}\Big]
\hspace{-0.5mm}+h_{\rho}(\boldsymbol{x})\hspace{-0.5mm}-\hspace{-0.5mm}g_{\rho}(\boldsymbol{\lambda})
\label{equ:PDLag2} 
\end{align}
where $h_{\rho}(\boldsymbol{x})$ and $g_{\rho}(\boldsymbol{\lambda})$ are defined as
\begin{align}
\hspace{-1mm}h_{\rho }(\boldsymbol{x})=&\hspace{-1mm}\sum_{(i,j)\in \mathcal{E}} \hspace{-0.5mm}\frac{\rho }{2}\left\|\boldsymbol{B}_{i | j}\boldsymbol{x}_{i}-\boldsymbol{B}_{j | i}\boldsymbol{x}_{j}\right\|^2\label{equ:quadFunP}\\
g_{\rho}(\boldsymbol{\lambda})=&\sum_{(i,j)\in \mathcal{E}}\frac{1}{2\rho }\left\|\boldsymbol{\lambda}_{i|j}+\boldsymbol{\lambda}_{j|i}\right\|^2\hspace{-1mm},
\label{equ:quadFunD}
\end{align}
where $\rho>0$. $\mathcal{L}_{\rho}$ is convex in $\boldsymbol{x}$ and concave in $\boldsymbol{\lambda}$. 

Synchronous PDMM  optimises $\mathcal{L}_{\rho}$ by updating $\boldsymbol{x}$ and $\boldsymbol{\lambda}$ simultaneously per iteration through node-oriented computation. At iteration $r$, each $i$ computes a new estimate $\boldsymbol{x}_i^{r+1}$  by locally solving a small-size optimisation problem based on the neighbouring estimates $\{\boldsymbol{x}^r_j | j \in \mathcal{N}_i \}$ and $\{\boldsymbol{\lambda}^r_{j|i} | j \in \mathcal{N}_i\}$ from the last iteration:
\begin{align}
\hspace{-0mm}\boldsymbol{x}_i^{r+1} =&\arg\min_{\boldsymbol{x}_i} \Big[f_i(\boldsymbol{x}_i)+ \sum_{j\in \mathcal{N}_i}(\boldsymbol{\lambda}_{j|i}^{r})^T\boldsymbol{B}_{i|j}\boldsymbol{x}_i \nonumber
\end{align}
\begin{align}
\hspace{-0mm}&\hspace{1mm}+ \sum_{j\in \mathcal{N}_i}\frac{\rho }{2}\| \boldsymbol{B}_{i|j}\boldsymbol{x}_i  \hspace{-0.6mm} - \hspace{-0.6mm}  \boldsymbol{B}_{j|i}\boldsymbol{x}_j^{r}  \|^2 \Big] \quad i\in \mathcal{V} \label{equ:x_update} 
\end{align}
In principle, each estimate $\boldsymbol{\lambda}_{i}^{r+1}$ can be obtained similarly by solving a small-size optimisation problem that involves the conjugate function $f_i^{\ast}$ from (\ref{equ:PDLag2}). It is shown in \cite{Zhang16PDMM} that once $\boldsymbol{x}_i^{r+1} $ is obtained, $\{\boldsymbol{\lambda}_{i}^{r+1}\}$ can be computed directly as: 
\begin{align}
\hspace{-2mm}\boldsymbol{\lambda}_{i|j}^{r+1} \hspace{-1mm}=& \rho (\boldsymbol{B}_{j|i}\boldsymbol{x}_j^r \hspace{-0.7mm}-\hspace{-0.7mm} \boldsymbol{B}_{i|j}\boldsymbol{x}_i^{r+1} ) 
\hspace{-0.7mm}-\hspace{-0.7mm} \boldsymbol{\lambda}_{j|i}^r \; i\in \mathcal{V}, j\in \mathcal{N}_i. \label{equ:lambda_update_2nd}  
\end{align}
One can also design an asynchronous updating procedure for PDMM, where the network nodes are activated asynchronously for parameter updating at different iterations (see \cite{Zhang16PDMM} for more details). 


We note that the above description of $\mathcal{L}_{\rho}$ and the update expressions (\ref{equ:x_update})-(\ref{equ:lambda_update_2nd}) for PDMM builds a foundation for the convergence analysis later on. The general linear constraints $\{\boldsymbol{B}_{i|j} \boldsymbol{x}_i \hspace{-0.6mm}=\hspace{-0.6mm}  \boldsymbol{B}_{j|i}  \boldsymbol{x}_j\}$ in (\ref{equ:optiGeneral}) enable PDMM to cover a broader class of problems than those methods which only focus on the special constraints $\{\boldsymbol{x}_i \hspace{-0.6mm}=\hspace{-0.6mm} \boldsymbol{x}_j\}$.  Another nice property of PDMM is that two dual variables $(\boldsymbol{\lambda}_{i|j}, \boldsymbol{\lambda}_{j|i})$ are introduced per linear constraint, which makes the update expressions node-oriented, thus facilitating practical implementation. It is shown in \cite{Sherson17PDMM} that PDMM can be alternatively derived from the PR splitting by using monotone operator theory \cite{Ryu16Mono}.  

\noindent \textbf{Iterates over the server-client graph for (\ref{equ:optiFed})}: We now consider applying PDMM to the problem (\ref{equ:optiFed}) by setting $\boldsymbol{B}_{i|s} =\boldsymbol{B}_{s|i}=\boldsymbol{I}$ for all the edges $(i,s)\in \mathcal{E}$. Instead of performing synchronous updates, we let the server compute the estimates $(\boldsymbol{x}_s^{r+1},\{\boldsymbol{\lambda}_{s|i}^{r+1}\})$  only after receiving the estimates $\{\boldsymbol{x}_i^{r+1}, \boldsymbol{\lambda}_{i|s}^{r+1}\}$ from the clients at iteration $r$. That is, at iteration $r$, the server uses the most up-to-date estimates $\{\boldsymbol{x}_i^{r+1}, \boldsymbol{\lambda}_{i|s}^{r+1}\}$ from the clients instead of the old estimates  $\{\boldsymbol{x}_i^{r}, \boldsymbol{\lambda}_{i|s}^{r}\}$ in computing $(\boldsymbol{x}_s^{r+1},\{\boldsymbol{\lambda}_{s|i}^{r+1}\})$. By inspection of (\ref{equ:x_update})-(\ref{equ:lambda_update_2nd}), one can then derive  the following update expressions with a slight index modification: 
\begin{align}
&\hspace{-2mm}  \textrm{clients}\hspace{-1mm}\left\{ \hspace{-2mm}\begin{array}{l}
\hspace{-0mm}\boldsymbol{x}_i^{r+1} \hspace{-1mm}=\hspace{-1mm} \arg\min_{\boldsymbol{x}_i} \hspace{-1mm} \Big[f_i(\boldsymbol{x}_i) \hspace{-0.7mm}+\hspace{-0.7mm} \frac{\rho}{2}\|\boldsymbol{x}_i  \hspace{-0.7mm} - \hspace{-0.7mm} \boldsymbol{x}_s^{r} \hspace{-0.7mm} +\hspace{-0.7mm}  \boldsymbol{\lambda}_{s|i}^{r}/\rho  \|^2 \Big]  \\
\hspace{0mm}\boldsymbol{\lambda}_{i|s}^{r+1} = \rho (\boldsymbol{x}_s^r \hspace{-0.7mm}-\hspace{-0.7mm} \boldsymbol{x}_i^{r+1} ) 
\hspace{-0.7mm}-\hspace{-0.7mm} \boldsymbol{\lambda}_{s|i}^r \end{array}\right. \label{equ:client_update} \\
&\hspace{-1mm} \textrm{server} \hspace{-1mm}\left\{ \hspace{-2mm}\begin{array}{l}
\hspace{-0mm}\boldsymbol{x}_s^{r+1} \hspace{-0.7mm}=\hspace{-0.7mm} \frac{1}{m}\sum_{i=1}^m (\boldsymbol{x}_i^{r+1} \hspace{-0.7mm}-\hspace{-0.7mm} \boldsymbol{\lambda}_{i|s}^{r+1}/\rho )  \\
\hspace{0mm}\boldsymbol{\lambda}_{s|i}^{r+1} = \rho (\boldsymbol{x}_i^{r+1} \hspace{-0.7mm}-\hspace{-0.7mm} \boldsymbol{x}_s^{r+1} ) 
\hspace{-0.7mm}-\hspace{-0.7mm} \boldsymbol{\lambda}_{i|s}^{r+1}  \end{array}\right. \hspace{-2mm}, \label{equ:server_update}
\end{align}
where the computation for $\boldsymbol{x}_{s}^{r+1}$ uses the fact that $f_s(\boldsymbol{x}_s)=0$. 

Next we briefly discuss the variables that must be transmitted between the server and the clients per iteration for PDMM to work. It is noted from (\ref{equ:client_update}) that at iteration $r$, each client $i$ only needs the quantity $ \boldsymbol{x}_s^{r} \hspace{-0.7mm} -\hspace{-0.7mm}  \boldsymbol{\lambda}_{s|i}^{r}/\rho $ from the server for the computation of $(\boldsymbol{x}_i^{r+1}, \boldsymbol{\lambda}_{i|s}^{r+1})$.  Similarly, the server only needs the quantity $ \boldsymbol{x}_i^{r+1} \hspace{-0.7mm} -\hspace{-0.7mm}  \boldsymbol{\lambda}_{i|s}^{r+1}/\rho $ from client $i$ to update $\boldsymbol{x}_s$ and $\boldsymbol{\lambda}_{s|i}$. That is, both the server and the client  need only to transmit one variable to each other per iteration, where the variable is a combination of the primal and dual estimates.  

\vspace{-2mm}
\subsection{(Inexact) FedSplit}
\vspace{-1mm}
\noindent \textbf{Iterates procedure}: Recently, the authors of \cite{Pathak2021} applied Peaceman-Rachford splitting to solve the special problem (\ref{equ:optiFed}). The resulting update expressions at iteration $r$ can be summarised as follows:
\begin{align}
&\hspace{-1mm}  \textrm{clients}\hspace{-1mm}\left\{ \hspace{-2mm}\begin{array}{l}
\hspace{-0mm}\boldsymbol{x}_i^{r+1} \hspace{-1mm}=\hspace{-1mm} \arg\min \hspace{-1mm} \Big[f_i(\boldsymbol{x}_i) \hspace{-0.6mm}+\hspace{-0.6mm} \frac{1}{2\gamma}\|\boldsymbol{x}_i  \hspace{-0.6mm} - \hspace{-0.6mm} \boldsymbol{z}_{s|i}^r \|^2 \Big]  \\
\hspace{0mm} \boldsymbol{z}_{i|s}^{r+1} = 2\boldsymbol{x}_i^{r+1} - \boldsymbol{z}_{s|i}^{r}  \end{array}\right. \label{equ:client_update_split} \\
&\hspace{-1mm} \textrm{server} \hspace{-1mm}\left\{ \hspace{-2mm}\begin{array}{l}
\hspace{-0mm}\boldsymbol{x}_s^{r+1} \hspace{-0.7mm}=\hspace{-0.7mm} \frac{1}{m}\sum_{i=1}^m \boldsymbol{z}_{i|s}^{r+1} \\
\boldsymbol{z}_{s|i}^{r+1} = 2\boldsymbol{x}_s^{r+1} - \boldsymbol{z}_{i|s}^{r+1} \end{array} \right. \hspace{-2mm}, \label{equ:server_update_split}
\end{align}
where the parameter $\gamma>0$, and $\{\boldsymbol{z}_{i|s}, \boldsymbol{z}_{s|i}\}$ are the auxiliary variables introduced in FedSplit. It is noted again that the clients only need to send $\{\boldsymbol{z}_{i|s}\}$ to the server for parameter updating while the server only needs to send $\boldsymbol{z}_{s|i}$ to client $i$, which is in line with that of PDMM. 

\noindent \textbf{On equivalence between PDMM and FedSplit}: We now briefly show that the iterates (\ref{equ:client_update})-(\ref{equ:server_update}) of PDMM reduce to  (\ref{equ:client_update_split})-(\ref{equ:server_update_split}) by proper hyper-parameter setup and reformulation. Specifically, by letting $\rho={1/\gamma}$,  $\boldsymbol{z}_{i|s} = \boldsymbol{x}_{i} - \gamma \boldsymbol{\lambda}_{i|s} $, and $\boldsymbol{z}_{s|i} = \boldsymbol{x}_{s} - \gamma \boldsymbol{\lambda}_{s|i} $ in (\ref{equ:client_update})-(\ref{equ:server_update}), one can easily oberse that the resulting expressions are identical to (\ref{equ:client_update_split})-(\ref{equ:server_update_split}). The equivalence between PDMM and FedSplit is due to the fact that both methods are based on Peaceman-Rachford splitting (see \cite{Sherson17PDMM} for more details about PDMM). However, PDMM is more general than FedSplit since it can also be applied for decentralised networks.

\noindent \textbf{Inexact iterates}: In practice, it might be difficult or expensive to obtain a closed form solution for $\boldsymbol{x}_i^{r+1}$ in (\ref{equ:client_update_split}) due to the complexity of $f_i(\boldsymbol{x}_i)$. One common practice is to conduct an inexact computation based on gradient descent.

The authors of \cite{Pathak2021} considered simplifying the minimisation problem in (\ref{equ:client_update_split}) by performing $K$ steps of consecutive gradient descent operations for each client $i$ at iteration $r$ to obtain a sequence of $K$ estimates: $\{\boldsymbol{x}_i^{r,k}| k=1, \ldots, K\}$. By starting with $\boldsymbol{x}_i^{r, k=0}=  \boldsymbol{z}_{s|i}^r$,  the estimate $\boldsymbol{x}_i^{r, k+1}$ at step $k$ of iteration $r$ is computed as 
\begin{align}
&\boldsymbol{x}_i^{r, k+1} = \boldsymbol{x}_i^{r, k} - \eta \nabla h_i^{r}( \boldsymbol{x}_i^{r, k}) \;\; 0\leq k <K,   
 \label{equ:gradient_xi_fedsplit}
\end{align}
where $\eta$ is the stepsize, and the function $h_i^{r}(\boldsymbol{x}_i)$ at iteration $r$ is defined to be 
\begin{align}
&h_i^{r}( \boldsymbol{x}_i)  = f_i(\boldsymbol{x}_i) + \frac{1}{2\gamma}\| \boldsymbol{x}_i -\boldsymbol{z}_{s|i}^r \|^2.  \label{equ:f_i_approximate_fedsplit}
\end{align}


We note that the initialisation $\boldsymbol{x}_i^{r, k=0}=  \boldsymbol{z}_{s|i}^r$ for the set of $K$ steps within each iteration is not a good option, especially for finite $K$ or small $\rho$ value.  From the analysis  about equivalence on PDMM and FedSplit, we notice that  $\boldsymbol{x}_i^{r, k=0}=  \boldsymbol{z}_{s|i}^r  = \boldsymbol{x}_s^{r} - \boldsymbol{\lambda}_{s|i}^{r}/\rho$. That is, $\boldsymbol{z}_{s|i}^r$ is a combination of both the primal and dual variables. A good initialisation of $\boldsymbol{x}_i^{r, k=0}$ should not include the dual variable $ \boldsymbol{\lambda}_{s|i}^r$.  This is because in general, the optimal solution  $\boldsymbol{\lambda}_{s|i}^{\ast}$  of the dual variable $\boldsymbol{\lambda}_{s|i}$ is not zero. Even the special initialisation $\boldsymbol{\lambda}_{s|i}^{r=0} = 0$ would not guarantee that $\boldsymbol{\lambda}_{s|i}^r$ is zero when the iteration $r>0$.  The component $ \boldsymbol{\lambda}_{s|i}^{r}/\rho$ makes the initialisation $\boldsymbol{x}_i^{r, k=0}  = \boldsymbol{x}_s^{r} - \boldsymbol{\lambda}_{s|i}^{r}/\rho$ less effective than an initialisation without the dual variable.  Small $\rho$ value would increase the impact of $\boldsymbol{\lambda}_{s|i}^{r}$. There are different ways to correct the improper initialisation of Inexact FedSplit depending on how to choose the estimates for $\{\boldsymbol{x}_i^{r, k=0}\}$. See the next section for the two versions of Inexact PDMM. 


\begin{figure}[t!]
\centering
\includegraphics[width=60mm]{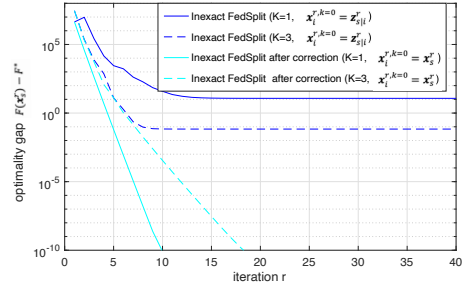}
\vspace*{-0.2cm}
\caption{\footnotesize{ Plots of the optimality gap $F(\boldsymbol{x}_s^r) - F^{\ast}$ versus the iteration number $r$ for Inexact FedSplit applied to a least-square problem over a network of 25 clients and one server, where $F(\boldsymbol{x}_s^r) =\sum_{i=1}^m f_i(\boldsymbol{x}_s^r)$ and $F^{\ast}$ denotes the minimum functional value.  See Subsection~\ref{subsec:least_square} for more details about the problem.  }}
\label{fig:FedSplit}
\vspace{-0.4cm}
\end{figure}

A simple evaluation of Inexact FedSplit is conducted for solving a least-square problem. As shown in Fig.~\ref{fig:FedSplit}, when the step number $K$ is finite (e.g., $K=1, 3$), Inexact FedSplit does not converge to the optimal solution due to the improper initialisation $\boldsymbol{x}_i^{r, k=0}=  \boldsymbol{z}_{s|i}^r$. If on the other hand, client $i$ initialises $\boldsymbol{x}_{i}^{r, k=0}$ to be $\boldsymbol{x}_{i}^{r, k=0}= \boldsymbol{x}_s^{r}$ at each iteration $r$, the method converges for both $K=(1,3)$.

  
\textbf{Convergence bounds of Inexact FedSplit}: We note that the convergence bounds derived in \cite{Pathak2021} for Inexact FedSplit are not tight. Suppose all the client functions are strongly convex and have Lipschitz continuous gradients. Assume that at each iteration $r$, the error $\|\boldsymbol{x}_i^{r, k=K} - \boldsymbol{x}_i^{r, k=\infty}\|$ for each client is always upper-bounded by a scalar $b$. With proper setup for $\gamma$ in (\ref{equ:client_update_split}) and (\ref{equ:gradient_xi_fedsplit}), it is shown in  \cite{Pathak2021} that the error $\|\boldsymbol{x}_s^{r+1} - \boldsymbol{x}_s^{\star} \|$, $r\geq 1$,  is upper bounded by 
\begin{align}
&\|\boldsymbol{x}_s^{r+1} \hspace{-0.7mm} - \hspace{-0.7mm} \boldsymbol{x}_s^{\star} \| \leq \left(1 \hspace{-0.7mm}- \hspace{-0.7mm} \frac{2}{\sqrt{\kappa} +1}\right)^r  \hspace{-0.7mm} \frac{\| \boldsymbol{x}_s^{0}   \hspace{-0.7mm}- \hspace{-0.7mm} \boldsymbol{x}_s^{\star}  \|}{\sqrt{m}}  \hspace{-0.7mm}+ \hspace{-0.7mm} (\sqrt{\kappa}+1)b, \nonumber 
\end{align}
where the parameter $\kappa>0$ is determined by the properties (e.g., $L$, $\mu$ in (\ref{equ:gradLips2})-(\ref{equ:muStrong})) of the client functions. It is clear that the scalar $b$ is a loose offset for quantifying the error introduced by the gradient descent operations in Inexact FedSplit.  The convergence results in Fig.~\ref{fig:FedSplit} indicates that Inexact FedSplit may even not converge for small $K$, which can be explained by a large offset $b$.  


\vspace{-3mm}
\section{Inexact PDMM and its comparison to SCAFFOLD}
\vspace{-1mm}
\label{sec:GPDMM}
In this section, we first present the two versions of Inexact PDMM: namely, GPDMM and AGPDMM. In particular, GPDMM is designed for both the server and clients to transmit one variable to each other per iteration. To accelerate the convergence speed of GPDMM, AGPDMM requires the server to transmit two variables to each client per iteration. After that, we investigate the similarity of AGPDMM and SCAFFOLD. We show that when the number $K$ of gradient steps at the client slide per iteration is set to $K=1$, both AGPDMM and SCAFFOLD reduce to vanilla gradient descent under proper parameter setups. As will be discussed later, SCAFFOLD requires both the server and clients to transmit two variables to each other per iteration.

\vspace{-3mm}
\subsection{GPDMM by sending one variable from server to each client}
\vspace{-1mm}

\begin{algorithm}[tb]
   \caption{ GPDMM for a centralised network}
   \label{GCFOLD}
\begin{algorithmic}[1]
   \STATE {\bfseries Init.:}$\{\boldsymbol{x}_i^{r=0, K}\hspace{-0.7mm}=\hspace{-0.7mm}\boldsymbol{x}_s^{1}\}$, $\{\hspace{-0.7mm}\boldsymbol{\lambda}_{s|i}^{1}\hspace{-0.7mm}=\hspace{-0.7mm}0\}$, $\eta$, $\rho=\frac{1}{K\eta}$  
   \STATE For each iteration $r=1,\ldots, R$ do 
   \STATE $\;$ Server $s$ transmits $\boldsymbol{x}_s^{r} - \boldsymbol{\lambda}_{s|i}^r/\rho$ to each client $i$ 
   \STATE $\;$ On client $i$ in parallel do
    \STATE $\;\;\;$ Init.: $\boldsymbol{x}_i^{r,k=0} = \boldsymbol{x}_i^{r-1, K}$
   \STATE $\;\;\;$ For $k=0,\ldots, K-1$ do
   \STATE $\;\;\;\;\;$ $\boldsymbol{x}_i^{r, k+1} \hspace{-0.7mm}= \hspace{-0.7mm} \boldsymbol{x}_i^{r, k} \hspace{-0.8mm}-\hspace{-0.8mm} \frac{1}{1/\eta+\rho}  \big[\nabla f_i(\boldsymbol{x}_i^{r, k}) \hspace{-0.7mm}+\hspace{-0.7mm} \rho(\boldsymbol{x}_i^{r, k} \hspace{-0.7mm}-\hspace{-0.7mm} \boldsymbol{x}_s^{r}) \hspace{-0.7mm}+\hspace{-0.7mm} \boldsymbol{\lambda}_{s|i}^{r}\big] $ 
   \STATE $\;\;\;$ End for
    \STATE $\;\;\;$ $\boldsymbol{\lambda}_{i|s}^{r+1} = \rho (\boldsymbol{x}_s^{r} \hspace{-0.7mm}-\hspace{-0.7mm} \bar{\boldsymbol{x}}_i^{r, K} ) 
\hspace{-0.7mm}-\hspace{-0.7mm} \boldsymbol{\lambda}_{s|i}^{r} $ where $ \bar{\boldsymbol{x}}_i^{r, K}  \hspace{-0.7mm}=\hspace{-0.7mm} \frac{1}{K}\sum_{k=1}^K\boldsymbol{\lambda}_{i}^{r,k}$
    \STATE  $\;\;\;$ client $i$ transmits $ \bar{\boldsymbol{x}}_i^{r, K}  - \boldsymbol{\lambda}_{i|s}^{r+1}/\rho $ to server $s$ 
   \STATE $\;$ End on client
   \STATE $\;$  $\boldsymbol{x}_s^{r+1} \hspace{-0.7mm}= \hspace{-0.7mm} \frac{1}{m}\sum_{i=1}^m (\boldsymbol{x}_i^{r+1} \hspace{-0.7mm}-\hspace{-0.7mm} \boldsymbol{\lambda}_{i|s}^{r+1}/\rho ) $ 
   \STATE $\;$ $\boldsymbol{\lambda}_{s|i}^{r+1} = \rho ( \bar{\boldsymbol{x}}_i^{r, K}  \hspace{-0.7mm}-\hspace{-0.7mm} \boldsymbol{x}_s^{r+1} ) 
\hspace{-0.7mm}-\hspace{-0.7mm} \boldsymbol{\lambda}_{i|s}^{r+1} $
  \STATE End for
\end{algorithmic}
\end{algorithm}

To correct the convergence issue of Inexact FedSplit, GPDMM is designed to avoid using the estimate $\boldsymbol{x}_s- \boldsymbol{\lambda}_{s|i}/\rho$ when conducting approximate optimisation at the client side. Specifically, at iteration $r$, client $i$ sets $\boldsymbol{x}_i^{r, k=0}=\boldsymbol{x}_i^{r-1, K}$ and then performs $K$ steps of gradient-based approximate optimisations to obtain a sequence of estimates $\{\boldsymbol{x}_i^{r,1},\ldots, \boldsymbol{x}_i^{r,K}\}$.  The estimate $\boldsymbol{x}_i^{r,k+1}$  at step $k$ is computed as 
\begin{align}
\hspace{-2mm}\boldsymbol{x}_i^{r, k+1} \hspace{-1mm}&=\hspace{-1mm} \arg\min_{\boldsymbol{x}_i} \hspace{-1mm} \Big[f_i^{r,k}(\boldsymbol{x}_i) \hspace{-0.6mm}+\hspace{-0.6mm} \frac{\rho}{2}\|\boldsymbol{x}_i  \hspace{-0.6mm} - \hspace{-0.6mm} \boldsymbol{x}_s^{r} \hspace{-0.7mm} +\hspace{-0.7mm}  \boldsymbol{\lambda}_{s|i}^{r}/\rho  \|^2 \Big] \nonumber \\
\hspace{-2mm}&\hspace{-2mm}=\boldsymbol{x}_i^{r, k} \hspace{-0.7mm} - \hspace{-0.7mm} \frac{1}{1/\eta+\rho}\big[\nabla f_i(\boldsymbol{x}_i^{r, k}) \hspace{-0.7mm}+\hspace{-0.7mm} \rho(\boldsymbol{x}_i^{r, k} \hspace{-0.7mm}-\hspace{-0.7mm} \boldsymbol{x}_s^{r}) \hspace{-0.7mm}+\hspace{-0.7mm} \boldsymbol{\lambda}_{s|i}^r \big], \label{equ:gradient_xi}
\end{align}
where $f_i^{r,k}(\boldsymbol{x}_i)$ is a quadratic approximation  of $f_i(\boldsymbol{x}_i)$ at  $\boldsymbol{x}_i^{r,k}$:
\begin{align}
\hspace{-3mm}f_i^{r,k}(\boldsymbol{x}_i) \hspace{-0.1mm} =& \hspace{-0.1mm} f_i(\boldsymbol{x}_i^{r,k})  \hspace{-0.6mm}+\hspace{-0.6mm} (\boldsymbol{x}_i \hspace{-0.6mm}-\hspace{-0.6mm} \boldsymbol{x}_i^{r,k})^T \nabla f_i(\boldsymbol{x}_i^{r,k}) \hspace{-0.6mm} \nonumber \\
&+\hspace{-0.7mm} 1/(2\eta) \|\boldsymbol{x}_i \hspace{-0.6mm}-\hspace{-0.6mm} \boldsymbol{x}_i^{r,k} \|^2, \label{equ:f_i_approximate}
\end{align}
where $1/L\geq \eta>0$ is the gradient stepsize. The optimality condition for $\boldsymbol{x}_i^{r, k+1}$ in (\ref{equ:gradient_xi}) can be rewritten as
\begin{align}
\nabla f_i(\boldsymbol{x}_i^{r, k}) =& 1/\eta (\boldsymbol{x}_i^{r, k} \hspace{-0.6mm}-\hspace{-0.6mm} \boldsymbol{x}_i^{r,k+1} )  \nonumber \\
&- \rho (\boldsymbol{x}_i^{r,k+1}  \hspace{-0.6mm} - \hspace{-0.6mm} \boldsymbol{x}_s^{r} \hspace{-0.7mm} +\hspace{-0.7mm}  \boldsymbol{\lambda}_{s|i}^{r}/\rho).  
\label{equ:opti_r}
\end{align}
After finishing the computation for $\boldsymbol{x}_i^{r,K}$,  client $i$ then sets  $\boldsymbol{\lambda}_{i|s}^{r+1}$ to be 
\begin{align}
\boldsymbol{\lambda}_{i|s}^{r+1} \hspace{-0.6mm}=\hspace{-0.6mm} \rho\Big(\boldsymbol{x}_s^r \hspace{-0.6mm}-\hspace{-0.6mm}\frac{1}{K} \sum_{k=1}^K \boldsymbol{x}_i^{r,k}\Big)\hspace{-0.6mm}-\hspace{-0.6mm}\boldsymbol{\lambda}_{s|i}^{r}, \label{equ:lambda_update_client}
\end{align}
where, to facilitate convergence analysis, the average estimate $\frac{1}{K} \sum_{k=1}^K \boldsymbol{x}_i^{r,k}$ is used for computing $\boldsymbol{\lambda}_{i|s}^{r+1}$ instead of the final estimate $\boldsymbol{x}_i^{r,K}$. See remark below for our detailed motivation. 

\begin{remark}
\vspace{-2mm}
We note that the computation for $\boldsymbol{\lambda}_{i|s}^{r+1}$ in (\ref{equ:lambda_update_client}) is not the optimal setup from the viewpoint of  fast convergence speed.  One should replace the average estimate $\frac{1}{K} \sum_{k=1}^K \boldsymbol{x}_i^{r,k}$  in  (\ref{equ:lambda_update_client}) with the most recent estimate $\boldsymbol{x}_i^{r,K}$ when computing $\boldsymbol{\lambda}_{i|s}^{r+1}$, which can be represented as
\begin{align}
\boldsymbol{\lambda}_{i|s}^{r+1} \hspace{-0.6mm}=\hspace{-0.6mm} \rho\Big(\boldsymbol{x}_s^r \hspace{-0.6mm}-\hspace{-0.6mm} \boldsymbol{x}_i^{r,K}\Big)\hspace{-0.6mm}-\hspace{-0.6mm}\boldsymbol{\lambda}_{s|i}^{r}. \label{equ:lambda_GPDMM}
\end{align}
This is because the most recent estimate $\boldsymbol{x}_{i}^{r, K}$ provides a more accurate approximation of the optimal solution which minimises $f_i(\boldsymbol{x}_i) \hspace{-0.6mm}+\hspace{-0.6mm} \frac{\rho}{2}\|\boldsymbol{x}_i  \hspace{-0.6mm} - \hspace{-0.6mm} \boldsymbol{x}_s^{r} \hspace{-0.7mm} +\hspace{-0.7mm}  \boldsymbol{\lambda}_{s|i}^{r}/\rho  \|^2$ in (\ref{equ:client_update}) than the average estimate.
As will be analysed in next section, the average estimate $\frac{1}{K} \sum_{k=1}^K \boldsymbol{x}_i^{r,k}$ in (\ref{equ:lambda_update_client}) facilitates convergence analysis. We leave the convergence analysis for employing the update expression (\ref{equ:lambda_GPDMM}) for future research work. 
\vspace{-2mm}
\end{remark}

At the server side, once it receives the estimates $\{\boldsymbol{x}_i^{r+1} - \boldsymbol{\lambda}_{i|s}^{r+1}/\rho \}$ at iteration $r$, the estimates $\boldsymbol{x}_s^{r+1}$ and $\{\boldsymbol{\lambda}_{s|i}^{r+1}\}$ can be computed by following (\ref{equ:server_update}). By inspection of  (\ref{equ:server_update}), it is not difficult to show that 
\begin{align}
\sum_{i=1}^m\boldsymbol{\lambda}_{s|i}^{r+1} = 0, \label{equ:s_lambda_equality}
\end{align}
which always holds no matter how Inexact PDMM is performed at the client side. It is noted that the above equation is in line with one of the KKT conditions   in (\ref{equ:KKT3}). Equ.~(\ref{equ:s_lambda_equality}) will be used for convergence analysis later on.   See Alg.~1 for a brief summary for GPDMM, where $\rho$  is set to $\rho=1/(K\eta)$, which is inspired by the update expressions of SCAFFOLD as will be discussed later on.



There are two  differences between Inexact FedSplit and GPDMM. Firstly, each time, GPDMM approximates $f_i(\boldsymbol{x}_i)$ by (\ref{equ:f_i_approximate}) while  Inexact FedSplit  approximates the summation $h_i^{r}( \boldsymbol{x}_i)  = f_i(\boldsymbol{x}_i) + \frac{1}{2\gamma}\| \boldsymbol{x}_i -\boldsymbol{z}_{s|i}^r \|^2$ in (\ref{equ:f_i_approximate_fedsplit}) by a quadratic function. 
 Secondly,  Inexact FedSplit initialises $\boldsymbol{x}_i^{r, k=0}$ with the starting point $\boldsymbol{z}_{s|i}^r =  \boldsymbol{x}_s^r - \boldsymbol{\lambda}_{s|i}^r/\rho $ while GPDMM initialises $\boldsymbol{x}_i^{r, k=0}$ with the starting point $\boldsymbol{x}_i^{r-1, K}$ from the last iteration.  As concluded from last section, $\boldsymbol{z}_{s|i}$ involves both the primal and dual variables, and is thus not suitable for initialisation. 

\begin{algorithm}[tb]
   \caption{ AGPDMM for a centralised network}
   \label{GCFOLD}
\begin{algorithmic}[1]
   \STATE {\bfseries Init.:} $\boldsymbol{x}_s^{1}$, $\{\boldsymbol{\lambda}_{s|i}^{1}\hspace{-0.7mm}=\hspace{-0.7mm}0\}$, $\eta$, $\rho=\frac{1}{K\eta}$  
   \STATE For each iteration $r=1,\ldots, R$ do 
   \STATE $\;$ Server $s$ transmits $\boldsymbol{x}_s^r$ and $\boldsymbol{\lambda}_{s|i}^r$ to each client $i$ 
   \STATE $\;$ On client $i$ in parallel do
    \STATE $\;\;\;$ Init.: $\boldsymbol{x}_i^{r,k=0} = \boldsymbol{x}_s^{r}$
   \STATE $\;\;\;$ For $k=0,\ldots, K-1$ do
   \STATE $\;\;\;\;\;$ $\boldsymbol{x}_i^{r, k+1} \hspace{-0.7mm}= \hspace{-0.7mm} \boldsymbol{x}_i^{r, k} \hspace{-0.8mm}-\hspace{-0.8mm} \frac{1}{1/\eta+\rho}\big[\nabla f_i(\boldsymbol{x}_i^{r, k}) \hspace{-0.7mm}+\hspace{-0.7mm} \rho(\boldsymbol{x}_i^{r, k} \hspace{-0.7mm}-\hspace{-0.7mm} \boldsymbol{x}_s^{r}) \hspace{-0.7mm}+\hspace{-0.7mm} \boldsymbol{\lambda}_{s|i}^{r}\big] $ 
   \STATE $\;\;\;$ End for
    \STATE $\;\;\;$ $\boldsymbol{\lambda}_{i|s}^{r+1} = \rho (\boldsymbol{x}_s^{r} \hspace{-0.7mm}-\hspace{-0.7mm} \boldsymbol{x}_i^{r, K} ) 
\hspace{-0.7mm}-\hspace{-0.7mm} \boldsymbol{\lambda}_{s|i}^r$
    \STATE  $\;\;\;$ client $i$ transmits $\boldsymbol{x}_i^{r, K} - \boldsymbol{\lambda}_{i|s}^{r+1}/\rho $ to server $s$ 
   \STATE $\;$ End on client
   \STATE $\;$  $\boldsymbol{x}_s^{r+1} \hspace{-0.7mm}= \hspace{-0.7mm} \frac{1}{m}\sum_{i=1}^m (\boldsymbol{x}_i^{r, K} \hspace{-0.7mm}-\hspace{-0.7mm} \boldsymbol{\lambda}_{i|s}^{r+1}/\rho ) $ 
   \STATE $\;$ $\boldsymbol{\lambda}_{s|i}^{r+1} = \rho (\boldsymbol{x}_i^{r, K} \hspace{-0.7mm}-\hspace{-0.7mm} \boldsymbol{x}_s^{r+1} ) 
\hspace{-0.7mm}-\hspace{-0.7mm} \boldsymbol{\lambda}_{i|s}$
  \STATE End for
\end{algorithmic}
\vspace{-0.5mm}
\end{algorithm}

\subsection{AGPDMM by sending two variables from server to each client}
\label{subsec:AGPDMM}

\noindent \textbf{Updating and transmission procedure}: We note that the convergence speed of GPDMM can be accelerated by a slight modification of its updating expressions. It is known for both PDMM and GPDMM that the server aggregates information from all the clients at each iteration. At iteration $r$, the global estimate $\boldsymbol{x}_s^{r}$ should be  more accurate than each individual estimate $\boldsymbol{x}_i^{r-1, K}$. Therefore, it is preferable for each client $i$ to employ the global estimate $\boldsymbol{x}_s^{r}$ instead of $\boldsymbol{x}_i^{r-1, K}$  when conducting  $K$ steps of gradient-based approximate optimisation at iteration $r$.  That is, the quantity $\boldsymbol{x}_i^{r,k=0}$ should be initialised as $\boldsymbol{x}_i^{r,k=0} = \boldsymbol{x}_s^{r}$ to achieve fast convergence speed.  The computation for $\boldsymbol{\lambda}_{i|s}^{r+1}$ follows from (\ref{equ:lambda_GPDMM}) instead of (\ref{equ:lambda_update_client}) to further accelerate the convergence speed.  Alg.~2 summarises the updating procedure of AGPDMM, which is obtained by following the above guideline.  

We now briefly discuss the variables that need to be transmitted from the server to the clients. At iteration $r$, it is clear that AGPDMM has to send both $\boldsymbol{x}_s^{r}$ and  $\boldsymbol{\lambda}_{s|i}^{r}$ to each client $i$ to allow for parameter update while GPDMM only needs to send the combination $\boldsymbol{x}_s^{r}-\boldsymbol{\lambda}_{s|i}^{r}/\rho$ to client $i$. The two versions of inexact PDMM exhibit a trade-off between convergence speed and transmission bandwidth.  AGPDMM accelerates the convergence speed of GPDMM at the cost of transmitting two times the number of parameters as GPDMM from the server to each client per iteration. In practice, one can select a proper version of Inexact PDMM depending on the requirement of the considered application.

\noindent \textbf{Performance of AGPDMM when $K=1$}:  We will show in the following that under proper parameter selection, the update expression for AGPDMM when $K=1$ reduces to the vanilla gradient descent operation. Specifically, $\boldsymbol{x}_{s}^{r+1}$ at iteration $r$ can be represented as 
\begin{align}
\hspace{-0mm}\boldsymbol{x}_{s}^{r+1}  &= \frac{1}{m}\sum_{i=1}^m (\boldsymbol{x}_i^{r, K=1} \hspace{-0.7mm}-\hspace{-0.7mm} \boldsymbol{\lambda}_{i|s}^{r+1}/\rho )  \nonumber \\
 &\hspace{0mm}\stackrel{(a)}{=} \frac{1}{m}\sum_{i=1}^m (\boldsymbol{x}_s^{r} - \frac{2}{1/\eta+\rho}\big(\nabla f_i(\boldsymbol{x}_s^{r}) +\boldsymbol{\lambda}_{s|i}^{r}\big) \hspace{-0.7mm} +\hspace{-0.7mm} \boldsymbol{\lambda}_{s|i}^r/\rho )   \nonumber \\
 &\hspace{0mm}\stackrel{(b)}{=} \boldsymbol{x}_s^{r} -   \frac{2}{1/\eta+\rho} \frac{1}{m}\sum_{i=1}^m \nabla f_i(\boldsymbol{x}_s^{r})    \label{equ:xs_AGPDMM_K1_1}  \\ 
&\hspace{0mm}\stackrel{\rho=\frac{1}{\eta}}{=} \boldsymbol{x}_s^{r}  - \eta \frac{1}{m}\sum_{i=1}^m \nabla f_i(\boldsymbol{x}_s^{r}),  \label{equ:xs_AGPDMM_K1} 
\end{align}
where step $(a)$ utilises the expressions $\boldsymbol{\lambda}_{i|s}^{r+1} = \rho (\boldsymbol{x}_s^{r} \hspace{-0.7mm}-\hspace{-0.7mm} \boldsymbol{x}_i^{r, K=1} ) 
\hspace{-0.7mm}-\hspace{-0.7mm} \boldsymbol{\lambda}_{s|i}^r$ and $\boldsymbol{x}_i^{r, K=1}=\boldsymbol{x}_s^{r} - \frac{1}{1/\eta+\rho}\big[\nabla f_i(\boldsymbol{x}_s^{r}) +\boldsymbol{\lambda}_{s|i}^{r}\big]$. Step $(b)$ employs the equality (\ref{equ:s_lambda_equality}).  

It is clear from (\ref{equ:xs_AGPDMM_K1_1}) that the update expression for $\boldsymbol{x}_{s}^{r+1}$ is actually the vanilla gradient descent expression over the function $\frac{1}{m}\sum_{i=1}^m f_i(\boldsymbol{x})$ at the estimate $\boldsymbol{x}_s^r$. The estimates $\{\boldsymbol{\lambda}_{s|i}^r \}$ for the dual variables have no effect on the computation of $\boldsymbol{x}_{s}^{r+1}$. The parameter $\rho$ only affects the stepsize computation. When $\rho = \frac{1}{\eta}$,  the stepsize becomes $\eta$ as indicated by (\ref{equ:xs_AGPDMM_K1}).

\begin{remark}
Alternatively, we can take Inexact FedSplit with the special initialisation $\{\boldsymbol{x}_{i}^{r,k=0}=\boldsymbol{x}^r | r\geq 0\}$ as a variant of AGPDMM. In this case, one can show that the estimate $\boldsymbol{x}_s^{r+1}$ when $K=1$ is given by
\begin{align}
\hspace{-0mm}\boldsymbol{x}_{s}^{r+1}  & = \boldsymbol{x}_s^{r}  - 2\eta \frac{1}{m}\sum_{i=1}^m \nabla f_i(\boldsymbol{x}_s^{r}).
 \label{equ:xs_AGPDMM_var_K1} 
\end{align}
 It is seen that the step-size in (\ref{equ:xs_AGPDMM_var_K1}) is $2\eta$ in comparison to the step-size $\eta$ in (\ref{equ:xs_AGPDMM_K1}). This is because the quadratic term $\|\boldsymbol{x}_i -\boldsymbol{x}_s^{r} + \boldsymbol{\lambda}_{s|i}^{r+1}/\rho\|^2$ in (\ref{equ:client_update}) is treated differently in AGPDMM and its variant. 
\end{remark}
\vspace{-3mm}
\subsection{Comparison with SCAFFOLD}
\vspace{-1mm}

\noindent \textbf{Updating and transmission procedure of SCAFFOLD}:  The recent  work \cite{Karimireddy20SCAFFOLD} proposes SCAFFOLD for stochastic distributed optimisation over a centralized network. To make a fair comparison with Inexact PDMM, we present the update expressions of SCAFFOLD for solving (\ref{equ:optiFed}), which can be represented as      
\begin{align}
&\hspace{-2mm}  \textrm{clients}\hspace{-1mm}\left\{ \hspace{-2mm}\begin{array}{l}
\hspace{0mm}\boldsymbol{x}_{i}^{r, 0} =  \boldsymbol{x}_{s}^{r}  \\
\hspace{-0mm}\boldsymbol{x}_i^{r, k+1} \hspace{-1mm}=\hspace{-1mm} \boldsymbol{x}_i^{r, k} \hspace{-1mm}-\hspace{-1mm} \eta (\nabla f_i(\boldsymbol{x}_i^{r, k}) \hspace{-1mm}-\hspace{-1mm}\boldsymbol{c}_{i}^r \hspace{-1mm}+\hspace{-1mm}\boldsymbol{c}^r\hspace{-0.6mm}) \;\; k \hspace{-0.5mm} =  |_{0}^{K-1}  \\
\hspace{0mm}\boldsymbol{c}_{i}^{r+1} =  \boldsymbol{c}_{i}^{r}  - \boldsymbol{c}^{r} +\frac{1}{K\eta} (\boldsymbol{x}_s^{r} -  \boldsymbol{x}_i^{r, K})  \end{array}\right. \hspace{-2.5mm} \label{equ:SCAFFOLD_client_update} \\
&\hspace{-1mm} \textrm{server} \hspace{-1mm}\left\{ \hspace{-2mm}\begin{array}{l}
\hspace{-0mm}\boldsymbol{x}_s^{r+1} \hspace{-0.7mm}=\hspace{-0.7mm} \boldsymbol{x}_s^{r} + \eta_g \frac{1}{m}\sum_{i=1}^m (\boldsymbol{x}_i^{r, K} - \boldsymbol{x}_s^{r})  \\
\hspace{0mm}\boldsymbol{c}^{r+1} = \boldsymbol{c}^{r}  +\hspace{-0.7mm} \frac{1}{m} \sum_{i=1}^m  (\boldsymbol{c}_{i}^{r+1}- \boldsymbol{c}_{i}^{r})   \end{array}\right., \hspace{-2mm} \label{equ:SCAFFOLD_server_update}
\end{align}
where all clients are included for information fusion at the server side per iteration, $k \hspace{-0.5mm} =  |_{0}^{K-1}$ is a short notation for $k=0,\ldots, K$, and $(\eta, \eta_g)$ are the stepsizes.  The parameters $\boldsymbol{c}$ and $\{\boldsymbol{c}_i\}$ are the so-called server and client control variates to compensate for the functional heterogeneity over different clients \cite{Karimireddy20SCAFFOLD}. From a high-level point of view, the control variates of SCAFFOLD play a similar role as the dual variables in (Inexact) PDMM. 

We point out that in the computation of $\boldsymbol{c}_{i}^{r+1}$ in (\ref{equ:SCAFFOLD_client_update}), the variable difference $(\boldsymbol{x}_s^{r} -  \boldsymbol{x}_i^{r, K})$ is scaled by the factor $\frac{1}{K\eta}$. In Alg.~1 and 2, the setup $\rho=\frac{1}{K\eta}$ is selected to ensure that the variable difference is also scaled by $\frac{1}{K\eta}$ in computing $\boldsymbol{\lambda}_{i|s}^{r+1}$.

From (\ref{equ:SCAFFOLD_client_update})-(\ref{equ:SCAFFOLD_server_update}), it is not difficult to conclude that at iteration $r$, the server needs to send the two variables $(\boldsymbol{x}_s^r, \boldsymbol{c}^r)$ to the clients to enable parameter update. Each client $i$ needs to send the two variables $(\boldsymbol{x}_i^{r, K} - \boldsymbol{x}_s^{r}, \boldsymbol{c}_{i}^{r+1}- \boldsymbol{c}_{i}^{r})$ to the server for information fusion. In contrast, the two versions of Inexact PDMM only require each client to transmit one variable to the server per iteration. The transmission load from the server to the clients depends on how Inexact PDMM is realised as discussed earlier. As will be shown in the experiment, AGPDMM converges faster than SCAFFOLD when $K>1$.

\noindent \textbf{Performance of SCAFFOLD when $K=1$}:  We now show that when $K=1$,  the update expression for $\boldsymbol{x}_s^{r+1}$ in (\ref{equ:SCAFFOLD_server_update}) also reduces to vanilla gradient descent operation under proper parameter selection.  Assume $\sum_{i=1}^m (\boldsymbol{c}_i^r- \boldsymbol{c}^r)\hspace{-0.6mm}=\hspace{-0.6mm}0$. It is immediate that 
\begin{align}
\boldsymbol{x}_s^{r+1} \hspace{-0.7mm}= \hspace{-0.7mm} \boldsymbol{x}_s^{r}  \hspace{-0.7mm} -\hspace{-0.7mm}   \frac{\eta_g\eta}{m} \hspace{-0.7mm} \sum_{i=1}^m \hspace{-0.7mm}  \nabla f_i(\boldsymbol{x}_s^{r}) \hspace{-0.7mm} \stackrel{\eta_g=1}{=}\hspace{-0.7mm} \boldsymbol{x}_s^{r}  \hspace{-0.7mm} -\hspace{-0.7mm}  \eta \frac{1}{m} \hspace{-0.7mm} \sum_{i=1}^m \hspace{-0.7mm} \nabla f_i(\boldsymbol{x}_s^{r}).\label{equ:xs_SCAFFOLD_K1_1}
\end{align}
One can also easily show that  $\sum_{i=1}^m (\boldsymbol{c}_i^{r+1}\hspace{-0.6mm}-\hspace{-0.6mm} \boldsymbol{c}^{r+1}) =0 $ based on the assumption $\sum_{i=1}^m (\boldsymbol{c}_i^r- \boldsymbol{c}^r)\hspace{-0.6mm}=\hspace{-0.6mm}0$.
Note that the parameter $\eta_g$ only affects the overall stepsize of the vanilla gradient descent. When $\eta_g=1$, (\ref{equ:xs_SCAFFOLD_K1_1})  is identical to (\ref{equ:xs_AGPDMM_K1}).

To summarise, when $K=1$, both SCAFFOLD and AGPDMM may reduce to the vanilla gradient descent operation. For SCAFFOLD, it is required that the initialisation $\sum_{i=1}^m (\boldsymbol{c}_i^0- \boldsymbol{c}^0)\hspace{-0.6mm}=\hspace{-0.6mm}0$. In the special case of $K=1$, the parameter $\rho$ in AGPDMM and $\eta_g$ in SCAFFOLD only affect the overall stepsizes of the vanilla gradient descent as discussed above. 


\vspace{-3mm}
\section{Convergence Analysis of GPDMM}
\vspace{-1mm}
\label{sec:convergenceAnalysis}

\noindent \textbf{An inequality for each estimate $\boldsymbol{x}_i^{r, k+1}$ }: Using the fact that the client functions $\{f_i\}$ are (strongly) convex and have Lipschitz continuous gradients,  we derive an inequality for $\boldsymbol{x}_i^{r, k+1}$ in (\ref{equ:gradient_xi}) at step $k$ of iteration $r$ in a lemma below: 
\begin{lemma}
Let $(1/\eta) \geq L$ in the approximation function (\ref{equ:f_i_approximate}). Then for any $\boldsymbol{x}_i\in \mathbb{R}^{d}$ and $\theta \in[0,1]$, we have
\begin{align}
& \hspace{-3mm} f_i(\boldsymbol{x}_i) -  f_i(\boldsymbol{x}_i^{r, k+1}) \nonumber \\
\geq &  \hspace{-0.6mm} (\boldsymbol{x}_i \hspace{-0.6mm}-\hspace{-0.6mm} \boldsymbol{x}_i^{r, k+1})^T  [ \rho(\hspace{-0.6mm} \boldsymbol{x}_s^{r} \hspace{-0.7mm} - \hspace{-0.6mm}  \boldsymbol{x}_i^{r,k+1} ) \hspace{-0.6mm} -\hspace{-0.6mm}  \boldsymbol{\lambda}_{s|i}^{r}] \hspace{-0.6mm}+\hspace{-0.6mm} \frac{1}{2\eta}  
\|\boldsymbol{x}_i - \boldsymbol{x}_i^{r, k+1} \|^2 \nonumber\\
& - \hspace{-0.6mm} \frac{1/\eta - \theta\mu}{2}\|\boldsymbol{x}_i^{r,k}-\boldsymbol{x}_i \|^2 \hspace{-0.6mm} +\hspace{-0.6mm} \frac{1/\eta- L}{2} \|\boldsymbol{x}_i^{r, k+1}  \hspace{-0.6mm}- \hspace{-0.6mm}\boldsymbol{x}_i^{r, k} \|^2 \nonumber \\
&+ \frac{1-\theta}{2L}\| \nabla f_i(\boldsymbol{x}_i^{r,k})- \nabla f_i ( \boldsymbol{x}_i) \|^2, \hspace{-2mm}
\label{equ:primal_inequality_general}
\end{align}
where $\mu=0$ corresponds to the general convex case.    
\label{lemma:primal_inequality_general}
\vspace{0mm}f
\end{lemma}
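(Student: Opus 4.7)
The plan is to combine three standard ingredients: a $\theta$-weighted convex combination of the $\mu$-strong-convexity bound (\ref{equ:muStrong}) and the co-coercivity form of the Lipschitz-gradient bound (\ref{equ:gradLips2}), the descent-lemma quadratic upper bound for $f_i(\boldsymbol{x}_i^{r,k+1})$, and the optimality identity (\ref{equ:opti_r}) produced by the proximal-gradient step. The assumption $1/\eta \geq L$ is needed only to ensure that the resulting coefficient $(1/\eta - L)/2$ on $\|\boldsymbol{x}_i^{r,k+1}-\boldsymbol{x}_i^{r,k}\|^2$ is nonnegative; the inequality itself holds regardless.

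First I would apply (\ref{equ:muStrong}) and (\ref{equ:gradLips2}) with base point $\boldsymbol{x}_i^{r,k}$ and target $\boldsymbol{x}_i$, and take their $\theta$--$(1-\theta)$ convex combination to obtain
\begin{align*}
f_i(\boldsymbol{x}_i) - f_i(\boldsymbol{x}_i^{r,k}) \geq{}& \nabla f_i(\boldsymbol{x}_i^{r,k})^T(\boldsymbol{x}_i - \boldsymbol{x}_i^{r,k}) + \tfrac{\theta\mu}{2}\|\boldsymbol{x}_i^{r,k} - \boldsymbol{x}_i\|^2 \\
& + \tfrac{1-\theta}{2L}\|\nabla f_i(\boldsymbol{x}_i^{r,k}) - \nabla f_i(\boldsymbol{x}_i)\|^2 .
\end{align*}
Subtracting the Lipschitz-gradient descent lemma $f_i(\boldsymbol{x}_i^{r,k+1}) - f_i(\boldsymbol{x}_i^{r,k}) \leq \nabla f_i(\boldsymbol{x}_i^{r,k})^T(\boldsymbol{x}_i^{r,k+1}-\boldsymbol{x}_i^{r,k}) + \tfrac{L}{2}\|\boldsymbol{x}_i^{r,k+1}-\boldsymbol{x}_i^{r,k}\|^2$ cancels the $f_i(\boldsymbol{x}_i^{r,k})$ contributions and delivers a lower bound for $f_i(\boldsymbol{x}_i)-f_i(\boldsymbol{x}_i^{r,k+1})$ in which the surviving linear term is $\nabla f_i(\boldsymbol{x}_i^{r,k})^T(\boldsymbol{x}_i - \boldsymbol{x}_i^{r,k+1})$.

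Next I would insert (\ref{equ:opti_r}), i.e.\ $\nabla f_i(\boldsymbol{x}_i^{r,k}) = \tfrac{1}{\eta}(\boldsymbol{x}_i^{r,k}-\boldsymbol{x}_i^{r,k+1}) - [\rho(\boldsymbol{x}_i^{r,k+1}-\boldsymbol{x}_s^{r}) + \boldsymbol{\lambda}_{s|i}^{r}]$. The $\rho$ and $\boldsymbol{\lambda}_{s|i}^{r}$ piece directly yields the cross-term $(\boldsymbol{x}_i-\boldsymbol{x}_i^{r,k+1})^T[\rho(\boldsymbol{x}_s^{r}-\boldsymbol{x}_i^{r,k+1}) - \boldsymbol{\lambda}_{s|i}^{r}]$ appearing in (\ref{equ:primal_inequality_general}). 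The leftover $\tfrac{1}{\eta}(\boldsymbol{x}_i^{r,k}-\boldsymbol{x}_i^{r,k+1})^T(\boldsymbol{x}_i-\boldsymbol{x}_i^{r,k+1})$ is then expanded by the polarization identity $2\boldsymbol{a}^T\boldsymbol{b} = \|\boldsymbol{a}\|^2 + \|\boldsymbol{b}\|^2 - \|\boldsymbol{a}-\boldsymbol{b}\|^2$ with $\boldsymbol{a}=\boldsymbol{x}_i^{r,k}-\boldsymbol{x}_i^{r,k+1}$ and $\boldsymbol{b}=\boldsymbol{x}_i-\boldsymbol{x}_i^{r,k+1}$ (so that $\boldsymbol{a}-\boldsymbol{b}=\boldsymbol{x}_i^{r,k}-\boldsymbol{x}_i$), splitting it into $\tfrac{1}{2\eta}$-multiples of the three squared norms $\|\boldsymbol{x}_i^{r,k}-\boldsymbol{x}_i^{r,k+1}\|^2$, $\|\boldsymbol{x}_i-\boldsymbol{x}_i^{r,k+1}\|^2$, and $-\|\boldsymbol{x}_i^{r,k}-\boldsymbol{x}_i\|^2$.

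Collecting coefficients then gives $\tfrac{1}{2\eta}$ in front of $\|\boldsymbol{x}_i-\boldsymbol{x}_i^{r,k+1}\|^2$, $\tfrac{1}{2\eta}-\tfrac{L}{2}=\tfrac{1/\eta-L}{2}$ in front of $\|\boldsymbol{x}_i^{r,k+1}-\boldsymbol{x}_i^{r,k}\|^2$, and $\tfrac{\theta\mu}{2}-\tfrac{1}{2\eta}=-\tfrac{1/\eta-\theta\mu}{2}$ in front of $\|\boldsymbol{x}_i^{r,k}-\boldsymbol{x}_i\|^2$, which precisely reproduces (\ref{equ:primal_inequality_general}). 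The work is essentially bookkeeping rather than conceptual; the only care points are to apply (\ref{equ:muStrong}) and (\ref{equ:gradLips2}) at the same ordered pair $(\boldsymbol{x}_i^{r,k},\boldsymbol{x}_i)$ so that the $\theta$ interpolation is legitimate, and to track the sign of each squared-norm term correctly through the substitution and polarization. Retaining the free parameter $\theta\in[0,1]$ is deliberate, as it lets the single bound cover both the strongly convex ($\mu>0$) and the general convex ($\mu=0$) regimes to be analysed subsequently.
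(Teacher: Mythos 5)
Your proposal is correct and follows essentially the same route as the paper's proof: the $\theta$-weighted combination of (\ref{equ:muStrong}) and (\ref{equ:gradLips2}) at the pair $(\boldsymbol{x}_i^{r,k},\boldsymbol{x}_i)$, subtraction of the descent lemma, substitution of the optimality condition (\ref{equ:opti_r}), and the polarization identity (the paper's Lemma~\ref{lemma:identity}) to split the $\tfrac{1}{\eta}$ cross term into the three squared norms. Your side observation that $1/\eta \geq L$ is not actually needed for the inequality itself, only to make the coefficient of $\|\boldsymbol{x}_i^{r,k+1}-\boldsymbol{x}_i^{r,k}\|^2$ nonnegative for later use, is also accurate.
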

\begin{proof}
See Appendix~\ref{appendix:lemma_ineq} for detailed derivation. 
\end{proof}

\noindent\textbf{An inequality for all estimates $\{\boldsymbol{x}_i^{r, k}| k=1,\ldots, K \}_{i=1}^m$}:  
Suppose $\{\boldsymbol{x}_s^{\star}=\boldsymbol{x}_i^{\star}\}_{i=1}^m$ together with $\{\boldsymbol{\lambda}_{i|s}^{\star} = - \boldsymbol{\lambda}_{s|i}^{\star})\}_{i=1}^m$ is an optimal solution satisfying (\ref{equ:KKT3}) by letting $\{\boldsymbol{\lambda}_{i|s}^{\star}=\boldsymbol{\delta}_i^{\star}\}_{i=1}^m$. We utilise Lemma~\ref{lemma:primal_inequality_general} to derive an inequality involving $\{\boldsymbol{x}_i^{r, k}| k=1,\ldots, K  \}_{i=1}^m$ and the above optimal solution:
\begin{sloppypar}
\begin{lemma}
Suppose the estimates $\{\boldsymbol{x}_i^{r, k} \}$ are obtained by performing (\ref{equ:gradient_xi})-(\ref{equ:f_i_approximate}) under the condition that $1/\eta\geq L$. Let $\bar{\boldsymbol{x}}_i^{r, K}=\frac{1}{K}\sum_{k=1}^K \boldsymbol{x}_i^{r, K}$. Then
\begin{align}
 & \sum_{i=1}^m  \frac{1}{K}\sum_{k=0}^{K-1} \hspace{-0.6mm} \frac{1/\eta - \theta \mu}{2}\|\boldsymbol{x}_i^{r,k}-\boldsymbol{x}_i^{\star} \|^2 \hspace{-0.6mm} \nonumber \\
  &+  \sum_{i=1}^m  \frac{1}{4\rho}  \|\rho(\bar{\boldsymbol{x}}_i^{r, K} - \boldsymbol{x}_i^{\star}) + (\boldsymbol{\lambda}_{i|s}^{r+1} -\boldsymbol{\lambda}_{i|s}^{\star} )   \|^2 \nonumber \\
  \hspace{-2mm}&\geq \sum_{i=1}^m \Big[ f_i(\bar{\boldsymbol{x}}_i^{r, K} ) - (\bar{\boldsymbol{x}}_i^{r,K})^T \boldsymbol{\lambda}_{i|s}^{\star} - f_i(\boldsymbol{x}_i^{\star})  \nonumber \\
  & \hspace{9mm} + \frac{1}{K}\sum_{k=0}^{K-1} \Big(\frac{1}{2\eta}  \|\boldsymbol{x}_i^{\star} - \boldsymbol{x}_i^{r, k+1} \|^2  \nonumber\\
\hspace{-3mm}&\hspace{9mm} +\hspace{-0.6mm} \frac{1/\eta - L}{2} \|\boldsymbol{x}_i^{r, k+1}  \hspace{-0.6mm}- \hspace{-0.6mm}\boldsymbol{x}_i^{r, k} \|^2 \hspace{-0.7mm}  + \hspace{-0.7mm}  \frac{1-\theta}{2L}\|  \rho(\hspace{-0.6mm} \boldsymbol{x}_s^{r} \hspace{-0.7mm} - \hspace{-0.6mm}  \boldsymbol{x}_i^{r,k+1} ) \hspace{-0.6mm}\nonumber \\
\hspace{-3mm}&  \hspace{9mm}  -\hspace{-0.6mm}  \boldsymbol{\lambda}_{s|i}^{r} \hspace{-0.6mm} - \boldsymbol{\lambda}_{i|s}^{\star} -\hspace{-0.6mm} (1/\eta)(\boldsymbol{x}_i^{r, k+1} \hspace{-0.6mm} -\hspace{-0.6mm}  \boldsymbol{x}_i^{r, k}) \hspace{-0.6mm}\|^2 \Big) \nonumber \\
 &\hspace{9mm} +  \frac{1}{4\rho}  \|\rho(\bar{\boldsymbol{x}}_i^{r+1, K} - \boldsymbol{x}_i^{\star}) + (\boldsymbol{\lambda}_{i|s}^{r+2} -\boldsymbol{\lambda}_{i|s}^{\star} )   \|^2  \Big],
\label{equ:upper_bound_final}
\end{align}
where $1\geq \theta \geq 0$. 
\label{lemma:twoBounds}
\vspace{0mm}
\end{lemma}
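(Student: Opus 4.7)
The plan is to apply Lemma~\ref{lemma:primal_inequality_general} with $\boldsymbol{x}_i = \boldsymbol{x}_i^{\star}$ separately for each inner step $k \in \{0,\ldots,K-1\}$ and each client $i$, then average in $k$, sum over $i$, and rearrange to expose the Lyapunov-type quantities on both sides. The KKT conditions (\ref{equ:KKT3}) will be used constantly: they tell me $\boldsymbol{x}_i^{\star}=\boldsymbol{x}_s^{\star}$, $\boldsymbol{\lambda}_{i|s}^{\star} = -\boldsymbol{\lambda}_{s|i}^{\star} = \nabla f_i(\boldsymbol{x}_i^{\star})$, and $\sum_i \boldsymbol{\lambda}_{i|s}^{\star}=0$. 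A useful preparatory step is to rewrite the gradient difference $\|\nabla f_i(\boldsymbol{x}_i^{r,k}) - \nabla f_i(\boldsymbol{x}_i^{\star})\|^2$ appearing in (\ref{equ:primal_inequality_general}) by means of the optimality identity (\ref{equ:opti_r}), which expresses $\nabla f_i(\boldsymbol{x}_i^{r,k})$ in terms of $\boldsymbol{x}_i^{r,k+1}$, $\boldsymbol{x}_s^{r}$, $\boldsymbol{\lambda}_{s|i}^{r}$, and $1/\eta$. This substitution directly produces the composite squared-norm term that appears on the right-hand side of (\ref{equ:upper_bound_final}).

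Next, I would average the resulting per-step inequality in $k$ from $0$ to $K-1$. On the function side, convexity of $f_i$ gives $f_i(\bar{\boldsymbol{x}}_i^{r,K}) \le \frac{1}{K}\sum_{k=0}^{K-1} f_i(\boldsymbol{x}_i^{r,k+1})$, which is exactly what is needed to collapse $K$ function evaluations into the single $f_i(\bar{\boldsymbol{x}}_i^{r,K})$ term. For the linear inner-product term $(\boldsymbol{x}_i^{\star}-\boldsymbol{x}_i^{r,k+1})^T[\rho(\boldsymbol{x}_s^r - \boldsymbol{x}_i^{r,k+1}) - \boldsymbol{\lambda}_{s|i}^{r}]$, I would split the $\boldsymbol{x}_i^{r,k+1}$ that sits on both sides of the cross term into an ``average'' part and a ``deviation'' part: the average-part aggregates via (\ref{equ:lambda_update_client}) into $(\boldsymbol{x}_i^{\star}-\bar{\boldsymbol{x}}_i^{r,K})^T \boldsymbol{\lambda}_{i|s}^{r+1}$, and the deviation-part contributes the $\frac{1}{2\eta}\|\boldsymbol{x}_i^{\star}-\boldsymbol{x}_i^{r,k+1}\|^2$ terms together with part of the quadratic remainder that eventually forms a square. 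The quadratic term $\frac{1/\eta-\theta\mu}{2}\|\boldsymbol{x}_i^{r,k}-\boldsymbol{x}_i^{\star}\|^2$ stays on the left-hand side.

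The hard step, as usual in primal-dual proofs, is the completion of the square that glues together the linear cross term $(\boldsymbol{x}_i^{\star}-\bar{\boldsymbol{x}}_i^{r,K})^T \boldsymbol{\lambda}_{i|s}^{r+1}$ with the dual-side quadratic pieces to produce the Lyapunov-type norms $\frac{1}{4\rho}\|\rho(\bar{\boldsymbol{x}}_i^{r,K}-\boldsymbol{x}_i^{\star}) + (\boldsymbol{\lambda}_{i|s}^{r+1}-\boldsymbol{\lambda}_{i|s}^{\star})\|^2$ on the left and the corresponding next-iteration term on the right. I would use the identity $\|A+B\|^2 = \|A\|^2 + 2A^TB + \|B\|^2$ with $A=\rho(\bar{\boldsymbol{x}}_i^{r,K}-\boldsymbol{x}_i^{\star})$ and $B=\boldsymbol{\lambda}_{i|s}^{r+1}-\boldsymbol{\lambda}_{i|s}^{\star}$, and match coefficients carefully so the $\boldsymbol{x}_s^r$- and $\boldsymbol{\lambda}_{s|i}^r$-dependent cross products cancel after using the server update (\ref{equ:server_update}). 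To bring in the next-iteration Lyapunov, I would invoke the server update expressions to express $\boldsymbol{\lambda}_{s|i}^{r+1}$ (and hence $\boldsymbol{\lambda}_{i|s}^{r+2}$ after the subsequent client update) in terms of $\bar{\boldsymbol{x}}_i^{r+1,K}$ and $\boldsymbol{\lambda}_{i|s}^{r+1}$, and rely on $\sum_i \boldsymbol{\lambda}_{s|i}^{r+1}=0$ from (\ref{equ:s_lambda_equality}) to cancel server-side cross terms upon summation over $i$.

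Finally, summing the averaged inequality over $i = 1,\ldots,m$ assembles all pieces: the $k$-averaged distance terms with factor $(1/\eta-\theta\mu)/2$ on the left; the $f_i(\bar{\boldsymbol{x}}_i^{r,K}) - (\bar{\boldsymbol{x}}_i^{r,K})^T\boldsymbol{\lambda}_{i|s}^{\star} - f_i(\boldsymbol{x}_i^{\star})$ primal-dual gap, the $\frac{1/\eta-L}{2}\|\boldsymbol{x}_i^{r,k+1}-\boldsymbol{x}_i^{r,k}\|^2$ descent terms (which are non-negative thanks to $1/\eta \ge L$), the gradient-residual squared norms, and the next-iteration Lyapunov on the right. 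The main obstacle I anticipate is keeping the bookkeeping straight across the two different levels (inner steps $k$ and outer iterations $r$), and in particular making sure that after completing squares the residual terms are all non-negative so that the inequality points in the direction claimed in (\ref{equ:upper_bound_final}); this is where the assumption $1/\eta \geq L$ and the careful choice of the averaging weight $1/K$ in (\ref{equ:lambda_update_client}) pay off.
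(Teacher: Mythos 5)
Your plan follows essentially the same route as the paper's proof: invoke Lemma~\ref{lemma:primal_inequality_general} at $\boldsymbol{x}_i=\boldsymbol{x}_i^{\star}$, use the optimality condition (\ref{equ:opti_r}) together with $\nabla f_i(\boldsymbol{x}_i^{\star})=\boldsymbol{\lambda}_{i|s}^{\star}$ to turn the gradient-difference term into the composite squared norm, average over $k$ with Jensen's inequality and the definition of $\boldsymbol{\lambda}_{i|s}^{r+1}$ to collapse the linear term to $-(\bar{\boldsymbol{x}}_i^{r,K}-\boldsymbol{x}_i^{\star})^T\boldsymbol{\lambda}_{i|s}^{r+1}$, and then telescope that cross term into the two Lyapunov norms using the server updates and $\sum_i\boldsymbol{\lambda}_{s|i}^{r+1}=0$. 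Two small caveats: the $\frac{1}{2\eta}\|\boldsymbol{x}_i^{\star}-\boldsymbol{x}_i^{r,k+1}\|^2$ terms come directly from Lemma~\ref{lemma:primal_inequality_general} rather than from splitting the linear term (whose Jensen gap is simply a nonnegative quantity that is discarded), and the telescoping step you describe only in outline is the bulk of the work --- the paper isolates it as Lemma~\ref{lemma:client_ineq_final}, whose mechanism is to write $\sum_i(\bar{\boldsymbol{x}}_i^{r,K}-\boldsymbol{x}_i^{\star})^T\boldsymbol{\lambda}_{i|s}^{r+1}$ (and the identity $0=\sum_i\boldsymbol{\lambda}_{s|i}^{r+1,T}(\boldsymbol{x}_s^{r+1}-\boldsymbol{x}_s^{\star})$) in two equivalent forms, add them, and apply the polarization identity of Lemma~\ref{lemma:identity}, a symmetrization trick your sketch does not make explicit.
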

\end{sloppypar}
\begin{proof}
 See Appendix~\ref{appendix:Lemma_upperbound} for the proof. 
\end{proof}

Next we show that $\sum_{i=1}^m  \Big[  f_i(\bar{\boldsymbol{x}}_i^{r, K} ) - (\bar{\boldsymbol{x}}_i^{r,K})^T \boldsymbol{\lambda}_{i|s}^{\star} - f_i(\boldsymbol{x}_i^{\star})\Big]$ in (\ref{equ:upper_bound_final}) is lower-bounded by zero in a lemma below:
\begin{lemma}
Suppose $\{\boldsymbol{x}_s^{\star}=\boldsymbol{x}_i^{\star}\}_{i=1}^m$ together with $\{\boldsymbol{\lambda}_{i|s}^{\star} = - \boldsymbol{\lambda}_{s|i}^{\star})\}_{i=1}^m$ is an optimal solution satisfying (\ref{equ:KKT3}) by letting $\{\boldsymbol{\lambda}_{i|s}^{\star}=\boldsymbol{\delta}_i^{\star}\}_{i=1}^m$. 
For any $\{\boldsymbol{x}_i\in \mathbb{R}^d\}_{i=1}^m$, 
\begin{align}
&\sum_{i=1}^m  \Big[   f_i(\boldsymbol{x}_i ) - f_i(\boldsymbol{x}_i^{\star}) - \boldsymbol{x}_i^{T}\boldsymbol{\lambda}_{i|s}^{\star} \Big] \geq 0.
\label{equ:lowerbound}
\end{align}
\label{lemma:lower_bound}
\vspace{-0mm}
\end{lemma}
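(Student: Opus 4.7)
The plan is to combine the convexity of each $f_i$ with the three KKT conditions in (\ref{equ:KKT3}) to collapse the right-hand side into zero. The proof should be short and essentially computational; there is no real obstacle.

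First I would apply the first-order convexity inequality to each $f_i$ at the reference point $\boldsymbol{x}_i^{\star}$, namely
\begin{align*}
f_i(\boldsymbol{x}_i) \;\geq\; f_i(\boldsymbol{x}_i^{\star}) + \nabla f_i(\boldsymbol{x}_i^{\star})^T(\boldsymbol{x}_i - \boldsymbol{x}_i^{\star}).
\end{align*}
By the first KKT condition in (\ref{equ:KKT3}), $\nabla f_i(\boldsymbol{x}_i^{\star}) = \boldsymbol{\delta}_i^{\star}$, and by assumption $\boldsymbol{\lambda}_{i|s}^{\star} = \boldsymbol{\delta}_i^{\star}$. Substituting and rearranging gives
\begin{align*}
f_i(\boldsymbol{x}_i) - f_i(\boldsymbol{x}_i^{\star}) - \boldsymbol{x}_i^T \boldsymbol{\lambda}_{i|s}^{\star} \;\geq\; -(\boldsymbol{x}_i^{\star})^T \boldsymbol{\lambda}_{i|s}^{\star}.
\end{align*}

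Next I would sum over $i=1,\ldots,m$ and invoke the remaining two KKT conditions. Using $\boldsymbol{x}_i^{\star}=\boldsymbol{x}_s^{\star}$ for every $i$, the right-hand side of the summed inequality becomes
\begin{align*}
-\sum_{i=1}^m (\boldsymbol{x}_i^{\star})^T \boldsymbol{\lambda}_{i|s}^{\star}
= -(\boldsymbol{x}_s^{\star})^T \sum_{i=1}^m \boldsymbol{\lambda}_{i|s}^{\star}.
\end{align*}
Finally, the identity $\sum_{i=1}^m \boldsymbol{\lambda}_{i|s}^{\star} = \sum_{i=1}^m \boldsymbol{\delta}_i^{\star} = 0$ from (\ref{equ:KKT3}) eliminates this term, yielding the claimed inequality (\ref{equ:lowerbound}). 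The only subtlety worth noting is the bookkeeping that $\boldsymbol{\lambda}_{i|s}^{\star}=\boldsymbol{\delta}_i^{\star}$, which is given by the lemma's hypothesis and lines up the dual quantities in (\ref{equ:fed_Lag}) with those used elsewhere in the PDMM formulation; once this identification is made, the proof reduces to a single application of convexity plus the feasibility and stationarity pieces of (\ref{equ:KKT3}).
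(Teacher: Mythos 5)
Your proof is correct, and it takes a somewhat different route from the paper's. The paper proves the same inequality in two lines via the conjugate function: it applies Fenchel's inequality to get $f_i(\boldsymbol{x}_i) - \boldsymbol{x}_i^T\boldsymbol{\lambda}_{i|s}^{\star} \geq -f_i^{\ast}(\boldsymbol{\lambda}_{i|s}^{\star})$ and then asserts that $\sum_{i=1}^m\big[-f_i^{\ast}(\boldsymbol{\lambda}_{i|s}^{\star}) - f_i(\boldsymbol{x}_i^{\star})\big]=0$ because ``the duality gap is 0 at the optimal solution.'' That last step silently packages exactly the bookkeeping you carry out explicitly: stationarity gives $f_i^{\ast}(\boldsymbol{\lambda}_{i|s}^{\star}) = \boldsymbol{\lambda}_{i|s}^{\star T}\boldsymbol{x}_i^{\star} - f_i(\boldsymbol{x}_i^{\star})$, and then the consensus condition $\boldsymbol{x}_i^{\star}=\boldsymbol{x}_s^{\star}$ together with $\sum_{i=1}^m\boldsymbol{\lambda}_{i|s}^{\star}=0$ kills the residual term $-\sum_i \boldsymbol{\lambda}_{i|s}^{\star T}\boldsymbol{x}_i^{\star}$. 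Your version replaces the Fenchel--Young step with the first-order convexity inequality at $\boldsymbol{x}_i^{\star}$ plus the stationarity condition $\nabla f_i(\boldsymbol{x}_i^{\star})=\boldsymbol{\lambda}_{i|s}^{\star}$, which is equivalent here since the $f_i$ are differentiable. What your approach buys is transparency: it makes visible that all three KKT conditions are actually used and avoids invoking conjugate functions at all. What the paper's approach buys is brevity and the fact that Fenchel's inequality does not require differentiability, so the first step would survive for nonsmooth convex $f_i$ (with the gradient in the stationarity condition replaced by a subgradient). Both arguments are complete and there is no gap in yours.
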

See Appendix~\ref{appendix:lemma_lowerbound} for the proof. Basically, (\ref{equ:lowerbound}) suggests that the RHS of (\ref{equ:upper_bound_final}) is always lower-bounded by zero.  If needed, the quantity $\sum_{i=1}^m  [   f_i(\bar{\boldsymbol{x}}_i^{r, K} ) - f_i(\boldsymbol{x}_i^{\star}) - (\bar{\boldsymbol{x}}_i^{r, K})^T\boldsymbol{\lambda}_{i|s}^{\star} ]$ can be ignored in (\ref{equ:upper_bound_final}) due to its nonnegativity. 

\begin{figure*}[t!]
\centering
\includegraphics[width=120mm]{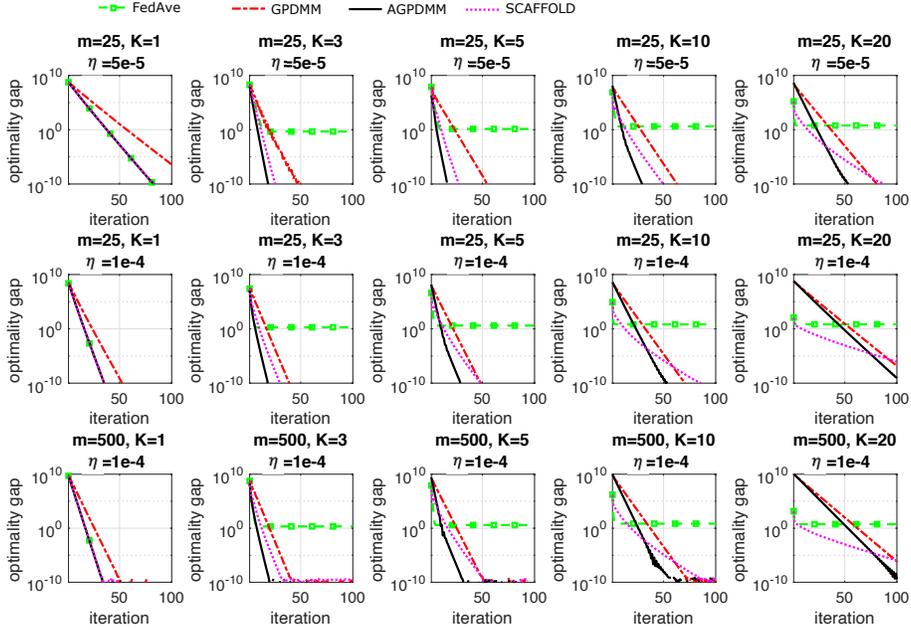}
\psfrag{e}{$\eta$}
\vspace*{-0.2cm}
\caption{\footnotesize{ Performance comparison of FedAve, GPDMM, AGPDMM, and SCAFFOLD for solving a least square problem which is specified by synthetic data.  }}
\label{fig:synthetic}
\vspace*{-0.0cm}
\end{figure*}

\begin{figure*}[t!]
\centering
\includegraphics[width=120mm]{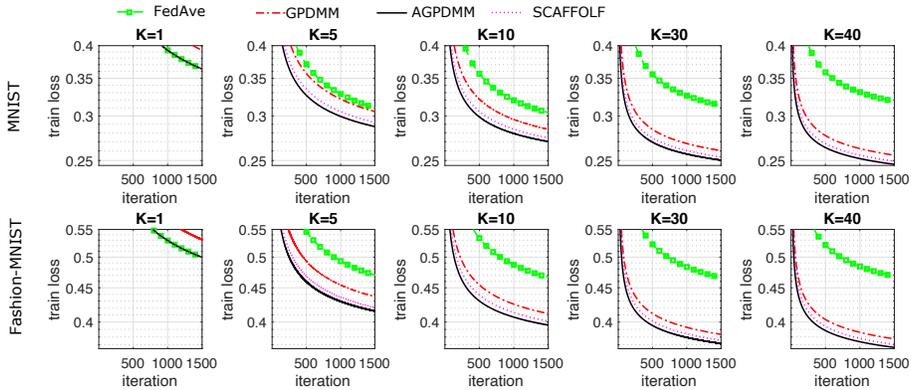}
\vspace*{-0.0cm}
\caption{\footnotesize{Performance comparison for softmax regression over the MNIST and Fashion-MNIST datasets, where the five subplots in the first row are for  MNIST. As classification over Fashion-MNIST is more challenging than that over MNIST, the training losses over Fashion-MNIST are larger than those over MNIST.} }
\label{fig:compare_MNIST}
\vspace*{-0.3cm}
\end{figure*}

\noindent\textbf{Linear convergence results}:  With Lemma~\ref{lemma:twoBounds} and \ref{lemma:lower_bound}, we are ready to show the linear convergence speed for  GPDMM in Alg.~1.  Our main objective is to show that the coefficients before $\|\boldsymbol{x}_i^{r, K} - \boldsymbol{x}_i^{\star}\|^2$ and $ \|\rho(\bar{\boldsymbol{x}}_i^{r+1, K} - \boldsymbol{x}_i^{\star}) + (\boldsymbol{\lambda}_{i|s}^{r+2} -\boldsymbol{\lambda}_{i|s}^{\star} ) \|^2$ on the RHS of  (\ref{equ:upper_bound_final}) are greater than the ones before $\|\boldsymbol{x}_i^{r-1, K} - \boldsymbol{x}_i^{\star}\|^2$ and $ \|\rho(\bar{\boldsymbol{x}}_i^{r, K} - \boldsymbol{x}_i^{\star}) + (\boldsymbol{\lambda}_{i|s}^{r+1} -\boldsymbol{\lambda}_{i|s}^{\star} ) \|^2$ on the LHS of (\ref{equ:upper_bound_final}) for each client $i$. The other quantities in  (\ref{equ:upper_bound_final}) are either dropped or combined to produce the above mentioned ones. We  summarise the results in a theorem below:  
\begin{theorem} Suppose the estimates $\{\boldsymbol{x}_i^{r, k} \}$ are obtained by performing (\ref{equ:gradient_xi})-(\ref{equ:f_i_approximate}) under the condition that $1/\eta > L\geq \mu>0$.  Let $Q^r$, $r\geq 1$,  be 
\begin{align}
&Q^r = \sum_{i=1}^m \Big[\frac{1/\eta - \theta\mu}{2K}  \| \boldsymbol{x}_i^{r-1, K} - \boldsymbol{x}_i^{\star} \|^2 \nonumber \\
	&\hspace{2mm}+ \hspace{-0.7mm} (\frac{1}{4\rho} \hspace{-0.7mm}-\hspace{-0.7mm} \frac{\gamma_{2}}{2})  \|\rho(\bar{\boldsymbol{x}}_i^{r, K} \hspace{-0.7mm}-\hspace{-0.7mm} \boldsymbol{x}_i^{\star}) \hspace{-0.7mm}+\hspace{-0.7mm} (\boldsymbol{\lambda}_{i|s}^{r+1} \hspace{-0.7mm}-\hspace{-0.7mm} \boldsymbol{\lambda}_{i|s}^{\star} )   \|^2 \Big], \label{equ:} 
\end{align}
where 
\begin{align}
& \gamma_{2} \hspace{-0.7mm}=\hspace{-0.7mm} \min\left( \frac{\theta\mu \phi}{2\rho^2},  \frac{\gamma_1\eta^2}{2}  \right), \label{equ:gamma_2} 
\end{align}
where $1 \hspace{-0.7mm}>\hspace{-0.7mm} \theta \hspace{-0.7mm}> \hspace{-0.7mm}0$, $ 1 \hspace{-0.7mm}>\hspace{-0.7mm}\phi \hspace{-0.7mm}> \hspace{-0.7mm}0$ satisfy $\frac{\theta\mu \phi}{4\rho^2} < \frac{1}{4\rho}$, and  
\begin{align}
\gamma_{1} \hspace{-0.7mm}=\hspace{-0.7mm} \min\left(\frac{1-\theta}{2L \eta^2} , \frac{1/\eta \hspace{-0.7mm}-\hspace{-0.7mm} L}{2}\right). \label{equ:gamma_1}
\end{align}
Then
\begin{align}
Q^{k+1} \leq \beta Q^k, \label{equ:linear_conv}
\end{align}
where $0<\beta<1$ is computed as 
\begin{align}
\beta &= \max\left( \frac{1/(4\rho)- \gamma_{2}/2 }{1/(4\rho)},  \frac{ 1/\eta - \theta\mu }{ 1/\eta - \theta\mu\phi}  \right).
\nonumber
\end{align}
\label{theorem:linear_conv}
\end{theorem}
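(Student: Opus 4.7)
The plan is to derive the contraction $Q^{r+1}\le\beta Q^{r}$ from the inequality of Lemma~\ref{lemma:twoBounds} by matching its two sides termwise against the components of $Q^{r}$ and $Q^{r+1}$. Introduce the shorthand
\[
A_{1}^{r}=\sum_{i}\tfrac{1/\eta-\theta\mu}{2K}\|\boldsymbol{x}_{i}^{r-1,K}-\boldsymbol{x}_{i}^{\star}\|^{2},\quad A_{2}^{r}=\sum_{i}\bigl(\tfrac{1}{4\rho}-\tfrac{\gamma_{2}}{2}\bigr)\|\boldsymbol{C}_{i}^{r}\|^{2},
\]
where $\boldsymbol{C}_{i}^{r}=\rho(\bar{\boldsymbol{x}}_{i}^{r,K}-\boldsymbol{x}_{i}^{\star})+(\boldsymbol{\lambda}_{i|s}^{r+1}-\boldsymbol{\lambda}_{i|s}^{\star})$, so that $Q^{r}=A_{1}^{r}+A_{2}^{r}$. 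I would first apply Lemma~\ref{lemma:twoBounds} at $\boldsymbol{x}_{i}=\boldsymbol{x}_{i}^{\star}$ and discard the non-negative functional gap on its RHS via Lemma~\ref{lemma:lower_bound}. The initialization $\boldsymbol{x}_{i}^{r,0}=\boldsymbol{x}_{i}^{r-1,K}$ from line~5 of Alg.~1 then identifies the $k=0$ primal term on the LHS of (\ref{equ:upper_bound_final}) with $A_{1}^{r}$.

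Second, I would split the coefficients as $\tfrac{1}{2\eta K}=\tfrac{1/\eta-\theta\mu}{2K}+\tfrac{\theta\mu}{2K}$ and $\tfrac{1}{4\rho}=\bigl(\tfrac{1}{4\rho}-\tfrac{\gamma_{2}}{2}\bigr)+\tfrac{\gamma_{2}}{2}$. The $k=K-1$ primal term on the RHS of (\ref{equ:upper_bound_final}) then decomposes into $A_{1}^{r+1}$ plus the slack $\tfrac{\theta\mu}{2K}\|\boldsymbol{x}_{i}^{r,K}-\boldsymbol{x}_{i}^{\star}\|^{2}$, while the dual-primal cross-norms on each side decompose into $A_{2}^{r}$ (or $A_{2}^{r+1}$) plus $\tfrac{\gamma_{2}}{2}\|\boldsymbol{C}_{i}^{r}\|^{2}$ (or $\tfrac{\gamma_{2}}{2}\|\boldsymbol{C}_{i}^{r+1}\|^{2}$). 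Cancelling the intermediate primal terms at $k=1,\ldots,K-1$---where the RHS coefficient exceeds the LHS coefficient by exactly $\tfrac{\theta\mu}{2K}$---leaves a residual primal slack $\sum_{i}\sum_{k=1}^{K-1}\tfrac{\theta\mu}{2K}\|\boldsymbol{x}_{i}^{r,k}-\boldsymbol{x}_{i}^{\star}\|^{2}$. The surviving inequality reads, schematically,
\[
Q^{r}+\tfrac{\gamma_{2}}{2}\sum_{i}\|\boldsymbol{C}_{i}^{r}\|^{2}\ge Q^{r+1}+\tfrac{\gamma_{2}}{2}\sum_{i}\|\boldsymbol{C}_{i}^{r+1}\|^{2}+\tfrac{\theta\mu}{2K}\sum_{i}\sum_{k=1}^{K}\|\boldsymbol{x}_{i}^{r,k}-\boldsymbol{x}_{i}^{\star}\|^{2}+G^{r}+D^{r},
\]
with $G^{r}$ and $D^{r}$ collecting the grad-step $\sum_{k}\tfrac{1/\eta-L}{2K}\|\boldsymbol{x}_{i}^{r,k+1}-\boldsymbol{x}_{i}^{r,k}\|^{2}$ and dual-related $\sum_{k}\tfrac{1-\theta}{2LK}\|\cdot\|^{2}$ contributions, both non-negative.

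Third, to convert this into the contraction $Q^{r+1}\le\beta Q^{r}$, I would exploit the two branches of $\gamma_{2}$ in (\ref{equ:gamma_2}) separately. The branch $\gamma_{2}\le\tfrac{\theta\mu\phi}{2\rho^{2}}$ lets $\tfrac{\gamma_{2}}{2}\|\boldsymbol{C}_{i}^{r}\|^{2}$ be absorbed by the primal slack: after substituting the dual update (\ref{equ:lambda_update_client}) and the KKT identities (\ref{equ:KKT3}), $\|\boldsymbol{C}_{i}^{r}\|^{2}$ is controlled by a $\rho^{2}$-multiple of the primal deviations, and Jensen's $\|\bar{\boldsymbol{x}}_{i}^{r,K}-\boldsymbol{x}_{i}^{\star}\|^{2}\le\tfrac{1}{K}\sum_{k=1}^{K}\|\boldsymbol{x}_{i}^{r,k}-\boldsymbol{x}_{i}^{\star}\|^{2}$ transfers the cost to the primal slack; the surviving factor $1-\phi$ yields the contraction ratio $\beta_{2}=\tfrac{1/\eta-\theta\mu}{1/\eta-\theta\mu\phi}<1$ in the primal component. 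The branch $\gamma_{2}\le\tfrac{\gamma_{1}\eta^{2}}{2}$, combined with the optimality condition (\ref{equ:opti_r}) that relates $\boldsymbol{\lambda}$-residuals, $\nabla f_{i}$-differences, and per-step increments $\boldsymbol{x}_{i}^{r,k+1}-\boldsymbol{x}_{i}^{r,k}$, lets the remaining $\tfrac{\gamma_{2}}{2}\|\boldsymbol{C}_{i}^{r}\|^{2}$-residue be absorbed by $G^{r}+D^{r}$ with $\gamma_{1}$ from (\ref{equ:gamma_1}) tuned to the minimum of $\tfrac{1/\eta-L}{2}$ and $\tfrac{1-\theta}{2L\eta^{2}}$; this branch produces $\beta_{1}=1-2\rho\gamma_{2}<1$ in the dual-primal component. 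Taking $\beta=\max(\beta_{1},\beta_{2})$ then delivers (\ref{equ:linear_conv}).

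The main obstacle is the tight coupling of primal and dual variables inside the single cross-norm $\|\boldsymbol{C}_{i}^{r}\|^{2}$: its $\tfrac{\gamma_{2}}{2}$-multiple must be absorbable simultaneously by the primal slack on one side and by the grad-step/dual-related slack on the other, which is exactly why $\gamma_{2}$ is defined as the minimum of two bounds in (\ref{equ:gamma_2}). Carrying out the $\rho^{2}$- and $\eta^{2}$-bookkeeping consistently with the fixed choice $\rho=1/(K\eta)$ is the most delicate step, and the auxiliary parameter conditions $1/\eta>L$, $0<\theta<1$, $0<\phi<1$, and $\tfrac{\theta\mu\phi}{4\rho^{2}}<\tfrac{1}{4\rho}$ are precisely what guarantees both $\beta_{1},\beta_{2}\in(0,1)$.
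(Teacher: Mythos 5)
Your plan follows the paper's own proof essentially step for step: it starts from Lemma~\ref{lemma:twoBounds} with the functional gap discarded via Lemma~\ref{lemma:lower_bound}, uses the initialisation $\boldsymbol{x}_i^{r,0}=\boldsymbol{x}_i^{r-1,K}$ to telescope the primal sums, builds the dual residual $\frac{\gamma_1\eta^2}{2}\|\boldsymbol{\lambda}_{i|s}^{r+1}-\boldsymbol{\lambda}_{i|s}^{\star}\|^2$ out of the grad-step and gradient-difference slack via $\|\boldsymbol{b}\|^2+\|\boldsymbol{c}\|^2\geq\frac{1}{2}\|\boldsymbol{b}+\boldsymbol{c}\|^2$ together with Jensen's inequality and the dual update, and absorbs the primal and dual components of the cross-norm with the two branches of $\gamma_2$, recovering the same $\beta_1=1-2\rho\gamma_2$ and $\beta_2=\frac{1/\eta-\theta\mu}{1/\eta-\theta\mu\phi}$. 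The only cosmetic difference is directional: you peel $\frac{\gamma_2}{2}\|\boldsymbol{C}_i^{r}\|^2$ off the left-hand cross-norm and dominate it by the right-hand slack, whereas the paper assembles that same quantity on the right-hand side and moves it across; the bookkeeping is identical.
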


\begin{proof}
See Appendix~\ref{appendix:linear_conv} for the proof. The constraint $0<\beta<1$ is guaranteed by the fact that $ 1/\eta> L\geq\mu > \theta\mu$,  
$\frac{1}{4\rho} > \frac{\theta\mu \phi}{4\rho^2} \geq \frac{\gamma_{2}}{2} $, and $1>\phi>0$.
\end{proof}

\noindent \textbf{Sublinear convergence results}: For the special case that the client functions are not strongly convex, (i.e., $\mu=0$ in (\ref{equ:muStrong})), the method exhibits sublinear convergence speed.  The convergence rate can be  characterised by setting  $\mu=0$ and $\theta=0$ in (\ref{equ:upper_bound_final}), performing summation from $r=1$ to $r=R$, and applying Jensen's inequality. We summarise the results in a theorem below: 
\begin{theorem} Consider the special case $\mu=0$ in (\ref{equ:muStrong}) for all clients.  Suppose the estimates $\{\boldsymbol{x}_i^{r,k}\}$ are obtained by performing (\ref{equ:gradient_xi})-(\ref{equ:f_i_approximate}) under the condition that $1/\eta > L$.  Let $\bar{\boldsymbol{x}}_i^{R, K} =\frac{1}{R}\sum_{r=1}^R\bar{\boldsymbol{x}}_i^{r, K}=\frac{1}{RK}\sum_{r=1}^{R}\sum_{k=1}^K \boldsymbol{x}_i^{r,k}$ and $\bar{\boldsymbol{\lambda}}_{i|s}^{R} =\frac{1}{R}\sum_{r=1}^R \boldsymbol{\lambda}_{i|s}^{r+1}$. Then
\begin{align}
\hspace{-3mm} &\lim_{R\rightarrow \infty}\sum_{i=1}^m \Big[ f_i(\bar{\boldsymbol{x}}_i^{R,K} ) \hspace{-0.7mm}-\hspace{-0.7mm} \boldsymbol{\lambda}_{i|s}^{\star, T}\bar{\boldsymbol{x}}_i^{R,K}   \hspace{-0.7mm}-\hspace{-0.7mm} f_i(\boldsymbol{x}_i^{\star}) \Big] \hspace{-0.7mm}=\hspace{-0.7mm} \mathcal{O}(1/R) \label{equ:sublinear1} 
\end{align}
\begin{align}
\hspace{-3mm} &\lim_{R\rightarrow \infty}\sum_{i=1}^m \Big[  \hspace{-0.6mm}\frac{\gamma_{1}\eta^2}{2} \|\bar{\boldsymbol{\lambda}}_{i|s}^{R} \hspace{-0.7mm}-\hspace{-0.7mm} \boldsymbol{\lambda}_{i|s}^{\star}  \|^2 \Big] = \mathcal{O}(1/R),  \label{equ:sublinear2}
\end{align}
where $\gamma_{1}$ is given by (\ref{equ:gamma_1}) by setting $\theta=0$. 
\label{theorem:sublinear}
\end{theorem}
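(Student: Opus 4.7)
The plan is to take the inequality (\ref{equ:upper_bound_final}) of Lemma~\ref{lemma:twoBounds} with $\mu=0$ and $\theta=0$, sum it from $r=1$ to $r=R$, telescope the distance-to-optimum terms, and then invoke Jensen's inequality to convert averages across iterations into quantities evaluated at $\bar{\boldsymbol{x}}_i^{R,K}$ and $\bar{\boldsymbol{\lambda}}_{i|s}^R$.

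First I would observe that when $\theta=0$ the coefficient $\frac{1/\eta-\theta\mu}{2}$ on the LHS of (\ref{equ:upper_bound_final}) matches $\frac{1}{2\eta}$ on the RHS, so the term $\frac{1}{K}\sum_{k=0}^{K-1}\frac{1}{2\eta}\|\boldsymbol{x}_i^{r,k}-\boldsymbol{x}_i^{\star}\|^2$ on the LHS pairs with $\frac{1}{K}\sum_{k=0}^{K-1}\frac{1}{2\eta}\|\boldsymbol{x}_i^{r,k+1}-\boldsymbol{x}_i^{\star}\|^2$ on the RHS. Using the warm-start rule $\boldsymbol{x}_i^{r+1,0}=\boldsymbol{x}_i^{r,K}$ of Alg.~1, the telescoping across $r$ leaves only the boundary term $\frac{1}{2\eta K}\|\boldsymbol{x}_i^{0,K}-\boldsymbol{x}_i^{\star}\|^2$. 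The dual-coupled quadratic $\frac{1}{4\rho}\|\rho(\bar{\boldsymbol{x}}_i^{r,K}-\boldsymbol{x}_i^{\star})+(\boldsymbol{\lambda}_{i|s}^{r+1}-\boldsymbol{\lambda}_{i|s}^{\star})\|^2$ on the LHS at iteration $r$ matches its shifted counterpart with $\bar{\boldsymbol{x}}_i^{r+1,K}$ and $\boldsymbol{\lambda}_{i|s}^{r+2}$ on the RHS, so it telescopes in the same way and leaves only an initial term.

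After telescoping, and after dropping the nonnegative residual $\|\boldsymbol{x}_i^{r,k+1}-\boldsymbol{x}_i^{r,k}\|^2$ quadratics, Lemma~\ref{lemma:twoBounds} reduces to an $R$-independent bound of the form
\begin{align}
\sum_{r=1}^R\sum_{i=1}^m\Big[ f_i(\bar{\boldsymbol{x}}_i^{r,K}) - (\bar{\boldsymbol{x}}_i^{r,K})^T\boldsymbol{\lambda}_{i|s}^{\star} - f_i(\boldsymbol{x}_i^{\star}) + \frac{\gamma_{1}\eta^2}{2}\|\boldsymbol{\lambda}_{i|s}^{r+1}-\boldsymbol{\lambda}_{i|s}^{\star}\|^2 \Big] \leq C_0, \nonumber
\end{align}
where $C_0$ is built from the initial distances. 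The dual-error term acquires the coefficient $\gamma_{1}\eta^2/2$ after using the identity $\frac{1}{K}\sum_{k=0}^{K-1}[\rho(\boldsymbol{x}_s^r-\boldsymbol{x}_i^{r,k+1})-\boldsymbol{\lambda}_{s|i}^r]=\boldsymbol{\lambda}_{i|s}^{r+1}$ from (\ref{equ:lambda_update_client}) and applying Jensen to the squared norm $\|\rho(\boldsymbol{x}_s^r-\boldsymbol{x}_i^{r,k+1})-\boldsymbol{\lambda}_{s|i}^r-\boldsymbol{\lambda}_{i|s}^{\star}-(1/\eta)(\boldsymbol{x}_i^{r,k+1}-\boldsymbol{x}_i^{r,k})\|^2$ across $k$; the leftover $(1/\eta)$ displacement is absorbed into the $\|\boldsymbol{x}_i^{r,k+1}-\boldsymbol{x}_i^{r,k}\|^2$ piece by taking the minimum coefficient, which is precisely what the definition (\ref{equ:gamma_1}) of $\gamma_{1}$ encodes.

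Dividing the displayed inequality by $R$ and applying Jensen once more (convexity of each $f_i$ and of $\|\cdot\|^2$) gives $f_i(\bar{\boldsymbol{x}}_i^{R,K})\leq \frac{1}{R}\sum_{r=1}^R f_i(\bar{\boldsymbol{x}}_i^{r,K})$, which yields (\ref{equ:sublinear1}), and $\|\bar{\boldsymbol{\lambda}}_{i|s}^R-\boldsymbol{\lambda}_{i|s}^{\star}\|^2\leq \frac{1}{R}\sum_{r=1}^R\|\boldsymbol{\lambda}_{i|s}^{r+1}-\boldsymbol{\lambda}_{i|s}^{\star}\|^2$, which yields (\ref{equ:sublinear2}). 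Lemma~\ref{lemma:lower_bound} guarantees that $\sum_i[f_i(\bar{\boldsymbol{x}}_i^{r,K})-(\bar{\boldsymbol{x}}_i^{r,K})^T\boldsymbol{\lambda}_{i|s}^{\star}-f_i(\boldsymbol{x}_i^{\star})]\geq 0$, so the two $\mathcal{O}(1/R)$ bounds can be extracted independently rather than one being absorbed into a negative function gap. The main obstacle I anticipate is the bookkeeping that shows the residual displacement $(1/\eta)(\boldsymbol{x}_i^{r,k+1}-\boldsymbol{x}_i^{r,k})$ can be folded cleanly so that exactly $\gamma_{1}\eta^2/2$ appears in front of the dual-error term, with all other residuals remaining nonnegative.
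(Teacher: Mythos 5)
Your proposal is correct and follows essentially the same route as the paper's Appendix~F: set $\mu=\theta=0$ in Lemma~\ref{lemma:twoBounds}, telescope the primal and dual-coupled quadratics over $r$ (the paper likewise exploits that the $k=1,\dots,K-1$ distance terms cancel when $\theta\mu=0$, leaving only the warm-start boundary terms), merge the two residual quadratics via $\|\boldsymbol{b}\|^2+\|\boldsymbol{c}\|^2\geq\frac{1}{2}\|\boldsymbol{b}+\boldsymbol{c}\|^2$ with the minimum coefficient $\gamma_1$, apply Jensen across $k$ using $\boldsymbol{\lambda}_{i|s}^{r+1}=\rho(\boldsymbol{x}_s^r-\bar{\boldsymbol{x}}_i^{r,K})-\boldsymbol{\lambda}_{s|i}^r$, then divide by $R$ and apply Jensen across $r$, with Lemma~\ref{lemma:lower_bound} letting the two bounds be read off separately. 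The only cosmetic slip is that you first say the $\|\boldsymbol{x}_i^{r,k+1}-\boldsymbol{x}_i^{r,k}\|^2$ terms are dropped and later (correctly) that they are consumed in forming $\gamma_1$; the latter is what the paper does.
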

\begin{proof}
See Appendix~\ref{appendix:sublinear} for the proof. 
\end{proof}

It is clear from Lemma~\ref{lemma:lower_bound} that the LHS of (\ref{equ:sublinear1}) is lower-bounded by zero for all $R\geq 1$. When $R$ approaches to infinity, we have $\{\nabla f_i(\bar{\boldsymbol{x}}_i^{R,K} ) = \boldsymbol{\lambda}_{i|s}^{\star}\}_{i=1}^m$, showing that the limiting point $\{\bar{\boldsymbol{x}}_i^{R, K}\}$ is in fact the optimal solution.   

\vspace{-2mm}
\section{Experimental Results}
\vspace{-2mm}
Two experiments were conducted to evaluate FedAve \cite{McMahan17}, GPDMM, AGPDMM, and SCAFFOLD.  Inexact FedSplit is not considered because of its poor performance as demonstrated in Fig.~\ref{fig:FedSplit}. The two experiments are least square minimisation over synthetic data and softmax regression over MNIST and Fashion-MNIST datasets, respectively.  

\vspace{-2mm}
\subsection{Least square minimisation over synthetic data}
\vspace{-2mm}
\label{subsec:least_square}
We consider solving a least square problem over a centralised network (see  \cite{Pathak2021} for a similar experimental setup).  The objective function $f_i(\boldsymbol{x}_i)$ takes the form $f_i(\boldsymbol{x}_i) = \frac{1}{2}\|\boldsymbol{A}_i \boldsymbol{x}_i - \boldsymbol{b}_i \|^2$, where $\boldsymbol{A}_i\in \mathbb{R}^{5000\times 500}$ are generated element-wise from a Normal distribution. 
The vector $\boldsymbol{b}_i $ is obtained by letting $\boldsymbol{b}_i = \boldsymbol{A}_i\boldsymbol{y}_0+\boldsymbol{v}_i$, where $\boldsymbol{y}_0$ is a predefined vector and $\boldsymbol{v}_i\sim N(0, 0.25\boldsymbol{I}_{5000\times 5000})$.

In all four methods,  $\{\boldsymbol{x}_i\}$ and $\boldsymbol{x}_s$ were initialised to be zero.  In addition, the other hyper-parameters $\eta=\{5e-5, 1e-4\}$, $m=\{25, 500\}$, and $K=\{1,3,5,10, 20\}$ were tested. The parameter $\eta_g$ in SCAFFOLD was set to $\eta_g=1$ to be in line with the setup $\rho=\frac{1}{\eta}$ of AGPDMM in (\ref{equ:xs_AGPDMM_K1}). Finally, the control covariates of SCAFFOLD were initialised to be zero.

Fig.~\ref{fig:synthetic} displays the convergence results for the four methods. Firstly, one observes that FedAve has poor performance when $K>1$, which is due to the functional heterogeneity across the clients nodes (i.e., the global optimal solution $\boldsymbol{x}_s^{\ast}$ is inconsistent with the optimal solutions of individual client functions \cite{Pathak2021}). Secondly, it is clear that AGPDMM converges faster than GPDMM for all tested $K$ values. As explained in Section~\ref{sec:GPDMM}, the performance gain of AGPDMM is due to the fact that at each iteration $r$, the global estimate $\boldsymbol{x}_s^r$ instead of the individual estimate $\boldsymbol{x}_i^{r-1, K}$ is utilised to perform approximate optimisations at the client $i$. Thirdly, one can also find from the figure that AGPDMM converges faster than SCAFFOLD when $K>1$. This might be because the computation of $\boldsymbol{\lambda}_{s|i}^{r+1}$ in AGPDMM utilises both $\{\boldsymbol{x}_s^{r} - \boldsymbol{x}_i^{r, K} \}$ and $\{\boldsymbol{x}_s^{r+1} - \boldsymbol{x}_i^{r, K} \}$  while the computation of $\boldsymbol{c}^{r+1}$ in SCAFFOLD utilises only $\{\boldsymbol{x}_s^{r} - \boldsymbol{x}_i^{r, K} \}$. When $K=1$, both methods have identical performance as FedAve. This is because both methods have the identical update expression for the estimate $\boldsymbol{x}_s^{r+1}$, which is in fact the expression of vanilla gradient descent in FedAve.    

\vspace{-2mm}
\subsection{Softmax regression over MNIST and Fashion-MNIST }
\vspace{-1mm}
In this experiment, we consider performing softmax regression (i.e., a convex optimisation problem) over the MNIST and Fashion-MNIST datasets,  where each dataset has 10 classes. The number of clients is set to be $m=10$ for each dataset, where each client carries the training images of a single class. The above setup implies that the distributions of the training data are heterogeneous across the different clients. 

Similarly to the first experiment,  $\{\boldsymbol{x}_i\}$ and $\boldsymbol{x}_s$ were initialised to be zero in the four methods . The other hyper-parameters $\eta=0.05$ and $K=\{1, 5, 10, 30, 40\}$ were tested. The parameter $\eta_g$ and the control covariates for  SCAFFOLD were set as in the first experiment. At each gradient step of an iteration at a client node, a mini-batch of 300 training samples was utilised to compute the gradient and update the model parameters accordingly. It is noted that the mini-batches were taken in a pre-defined order instead of in a random manner to remove any effect of randomness. That is, the training procedure is deterministic.  

The training results and validation accuracies are summarised in Fig.~\ref{fig:compare_MNIST} and Table~\ref{tab:val_acc}, respectively.  One observes that for each dataset, the training loss of each method improves gradually as $K$ increases from 1 to $40$ except FedAve. In addition, it is clear that AGPDMM performs the best w.r.t. the training loss.  As for validation accuracy, AGPGMM outperforms others for most scenarios except $K=10$ for Fashion-MNIST. SCAFFOLD performs slightly better than GPDMM. The above phenomenon suggests that the initialisation for each iteration at the client side is crucial for Inexact PDMM.

\begin{table}[t]
\caption{\small Validation accuracy (in percentage) of the three methods for the MNIST and Fashion-MNIST datasets} 
\label{tab:val_acc}
\centering
\begin{tabular}{|c|c|c|c|c|c|c|}
\hline
& \hspace{-1.5mm}\scriptsize{ K } \hspace{-1.5mm} & \footnotesize{1} & \footnotesize{5}& \footnotesize{10}& \footnotesize{30} & \footnotesize{40} 
 \\ 
\hline
{\scriptsize \multirow{4}{*}{\rotatebox{90}{MNIST}}}    & \hspace{-1.5mm}\scriptsize{ FedAve } \hspace{-1.5mm} &  \footnotesize{90.80} & \footnotesize{91.70} & \footnotesize{91.67}& \footnotesize{91.32} & \footnotesize{91.16}  \\ 
\cline{2-7}
& \hspace{-1.5mm}\scriptsize{ GPDMM } \hspace{-1.5mm} & \footnotesize{90.25}  & \footnotesize{91.92} & \footnotesize{92.20}& \footnotesize{92.46} & \footnotesize{92.52}  \\ 
\cline{2-7}
& \hspace{-1.5mm}\scriptsize{SCAFFOLD}  \hspace{-1.5mm} & \footnotesize{90.80}  & \footnotesize{92.10} & \footnotesize{92.29} & \footnotesize{92.53}  & \footnotesize{92.59}  \\ 
\cline{2-7}
& \hspace{-1.5mm}\scriptsize{AGPDMM }  \hspace{-1.5mm} & \footnotesize{90.80}  & \footnotesize{\textbf{92.14}} & \footnotesize{\textbf{92.37}} & \footnotesize{\textbf{92.61}} & \footnotesize{\textbf{92.64}}  \\ 
\hline 
\hline
{\scriptsize \multirow{4}{*}{\rotatebox{90}{Fashion-MNIST}}} 
 & \hspace{-1.5mm}\scriptsize{ FedAve } \hspace{-1.5mm} &  \footnotesize{82.24} & \footnotesize{83.08} & \footnotesize{83.13}& \footnotesize{83.09} & \footnotesize{82.83}  \\ 
\cline{2-7}
& \hspace{-1.5mm}\scriptsize{ GPDMM } \hspace{-1.5mm} &  \footnotesize{81.43}  & \footnotesize{83.64} & \footnotesize{84.18}& \footnotesize{84.58} & \footnotesize{84.64}  \\ 
\cline{2-7}
& \hspace{-1.5mm}\scriptsize{SCAFFOLD}  \hspace{-1.5mm} &  \footnotesize{82.24}  & \footnotesize{83.97} & \footnotesize{\textbf{84.49}} & \footnotesize{84.66}  & \footnotesize{84.65}  \\ 
\cline{2-7}
& \hspace{-1.5mm}\scriptsize{AGPDMM }  \hspace{-1.5mm} &  \footnotesize{82.24}  & \footnotesize{\textbf{84.08}} & \footnotesize{{84.46}} & \footnotesize{\textbf{84.67}} & \footnotesize{\textbf{84.65}}  \\ 
\hline 
\end{tabular}
\vspace{-0mm}
\end{table}


\section{Conclusions}
In this paper, we first showed that PDMM reduces to FedSplit when applied to a centralised network. The poor reported performance of Inexact FedSplit in \cite{Pathak2021} is analysed, which was found to be due to the improper parameter initialisation at the client side. Two versions of Inexact PDMM were then proposed to correct the convergence issue of Inexact FedSplit, which are GPDMM and AGPDMM. The main difference between the methods is that at each iteration $r$, AGPDMM utilises the global estimate $\boldsymbol{x}_s^r$ to conduct approximate optimisations at the client slide, which is more informative than the individual estimates $\{\boldsymbol{x}_i^{r-1, K}\}$. Linear and sublinear convergence rates are established for GPDMM for any number ($K> 0$) of approximate optimisations conducted at the client side per iteration. It is also shown analytically that when $K=1$, both AGPDMM and SCAFFOLD reduce to the vanilla gradient descent operation under proper parameter selection. Therefore, convergence results of the classical vanilla gradient descent operation apply directly to AGPDMM when $K=1$. Experimental results show that AGPDMM converges faster than both SCAFFOLD and GPDMM.

One future work would be to provide a convergence analysis for AGPDMM when $K>1$. One can also extend the deterministic  analysis for GPDMM to the stochastic scenario. 

\appendices

\vspace{-3mm}
\section{Proof for Lemma~\ref{lemma:primal_inequality_general}}
\label{appendix:lemma_ineq}

Before presenting the proof, we first introduce two lemmas that will be needed later on:
\begin{lemma} For any $\boldsymbol{y}_i \in\mathbb{R}^d$, $i=1,\ldots, 4$, the following equality holds 
\begin{align}
&(\boldsymbol{y}_1-\boldsymbol{y}_2)^T(\boldsymbol{y}_3-  \boldsymbol{y}_4)  \nonumber \\
&= \frac{1}{2}\left(\|\boldsymbol{y}_1 \hspace{-0.7mm}+\hspace{-0.7mm}\boldsymbol{y}_3 \|^2 \hspace{-0.7mm}-\hspace{-0.7mm} \|\boldsymbol{y}_2 \hspace{-0.7mm}+\hspace{-0.7mm} \boldsymbol{y}_4 \|^2 \hspace{-0.7mm}-\hspace{-0.7mm} \|\boldsymbol{y}_2\hspace{-0.7mm}+\hspace{-0.7mm}\boldsymbol{y}_3 \|^2 \hspace{-0.7mm}+\hspace{-0.7mm}  \|\boldsymbol{y}_2\hspace{-0.7mm}+\hspace{-0.7mm}\boldsymbol{y}_4 \|^2\right). \nonumber
\end{align}
\label{lemma:identity}
\end{lemma}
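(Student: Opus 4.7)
I would approach Lemma~\ref{lemma:identity} by direct algebraic expansion, treating it as a four-point polarization identity. The plan is to apply the scalar expansion $\|\boldsymbol{a}+\boldsymbol{b}\|^2 = \|\boldsymbol{a}\|^2 + 2\boldsymbol{a}^T\boldsymbol{b} + \|\boldsymbol{b}\|^2$ to each of the four squared norms appearing on the right-hand side, respecting the signs shown, and then collect like terms.

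Once all four norms are expanded, the pure length terms $\|\boldsymbol{y}_i\|^2$ should pairwise cancel, leaving only inner-product cross terms. I expect the surviving expression to equal $2(\boldsymbol{y}_1^T\boldsymbol{y}_3 - \boldsymbol{y}_1^T\boldsymbol{y}_4 - \boldsymbol{y}_2^T\boldsymbol{y}_3 + \boldsymbol{y}_2^T\boldsymbol{y}_4) = 2(\boldsymbol{y}_1 - \boldsymbol{y}_2)^T(\boldsymbol{y}_3 - \boldsymbol{y}_4)$, and dividing by two then yields the claim.

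There is no substantive obstacle here since the proof is purely mechanical. The only subtlety I would flag before finalising is cosmetic: as typeset, the second and fourth terms on the right-hand side both read $\|\boldsymbol{y}_2+\boldsymbol{y}_4\|^2$ with opposite signs and so cancel, which cannot be correct as an identity. The expansion above actually verifies the symmetric polarization form $\tfrac{1}{2}\bigl(\|\boldsymbol{y}_1+\boldsymbol{y}_3\|^2 - \|\boldsymbol{y}_1+\boldsymbol{y}_4\|^2 - \|\boldsymbol{y}_2+\boldsymbol{y}_3\|^2 + \|\boldsymbol{y}_2+\boldsymbol{y}_4\|^2\bigr)$, which I suspect is the intended statement. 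I would correct the second term to $\|\boldsymbol{y}_1+\boldsymbol{y}_4\|^2$ and then cross-check the sign convention against the use of this identity inside Lemma~\ref{lemma:twoBounds}, where it seems to be invoked in order to recast inner products of residuals of the form $(\boldsymbol{x}_i - \boldsymbol{x}_i^{r,k+1})^T[\rho(\boldsymbol{x}_s^r - \boldsymbol{x}_i^{r,k+1}) - \boldsymbol{\lambda}_{s|i}^r]$ as differences of squared norms amenable to telescoping across $k$ and $r$.
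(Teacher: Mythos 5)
Your proposal is correct, and the direct expansion is the natural (indeed the only reasonable) route; the paper itself states this lemma without any proof, so there is no alternative argument to compare against. Your most valuable contribution is the observation that the identity as typeset is false: the second and fourth terms on the right-hand side are both $\|\boldsymbol{y}_2+\boldsymbol{y}_4\|^2$ with opposite signs, so they cancel and the remainder $\tfrac{1}{2}\bigl(\|\boldsymbol{y}_1+\boldsymbol{y}_3\|^2-\|\boldsymbol{y}_2+\boldsymbol{y}_3\|^2\bigr)$ cannot equal $(\boldsymbol{y}_1-\boldsymbol{y}_2)^T(\boldsymbol{y}_3-\boldsymbol{y}_4)$ in general. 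Your corrected second term $\|\boldsymbol{y}_1+\boldsymbol{y}_4\|^2$ restores the four-point polarization identity, and it is consistent with how the lemma is actually invoked: in step $(c)$ of Appendix~\ref{appendix:lemma_ineq} the substitution $\boldsymbol{y}_1=\boldsymbol{x}_i$, $\boldsymbol{y}_2=\boldsymbol{x}_i^{r,k+1}$, $\boldsymbol{y}_3=-\boldsymbol{x}_i^{r,k+1}$, $\boldsymbol{y}_4=-\boldsymbol{x}_i^{r,k}$ into your corrected form reproduces exactly the three squared-norm terms with coefficients $\tfrac{1}{2\eta}$ that appear there, and the same corrected form underlies step $(b)$ of Appendix~\ref{appendix:proof_client_ineq}. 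So the downstream results are unaffected; only the printed statement of the lemma needs the fix you identify.
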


\begin{lemma}
Suppose $f_i$ has the Lipschitz continuous gradient $L>0$. Then the following inequality
\begin{align}
&\hspace{-3mm}f_i(\boldsymbol{y}_i) \hspace{-0.5mm}\leq\hspace{-0.5mm} f_i(\boldsymbol{x}_i) \hspace{-0.5mm}+\hspace{-0.5mm} \nabla f_i(\boldsymbol{x}_i)^T(\boldsymbol{y}_i\hspace{-0.5mm}-\hspace{-0.5mm}\boldsymbol{x}_i) \hspace{-0.5mm}+\hspace{-0.5mm} \frac{L}{2}\|\boldsymbol{x}_i \hspace{-0.5mm}-\hspace{-0.5mm} \boldsymbol{y}_i \|^2 \nonumber 
\end{align}
holds, which is a consequence of the inequality (\ref{equ:gradLips2})  (see \cite{Zhou18Duality}). 
\label{lemma:gradLips}
\end{lemma}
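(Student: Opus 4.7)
The plan is to derive the standard $L$-smoothness upper bound from the co-coercivity inequality (\ref{equ:gradLips2}) in two stages: first extract $L$-Lipschitz continuity of $\nabla f_i$, then integrate.

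First, I would symmetrize (\ref{equ:gradLips2}) by swapping $\boldsymbol{x}_i$ and $\boldsymbol{y}_i$, yielding the twin inequality $f_i(\boldsymbol{x}_i) \geq f_i(\boldsymbol{y}_i) + \nabla f_i(\boldsymbol{y}_i)^T(\boldsymbol{x}_i-\boldsymbol{y}_i) + \tfrac{1}{2L}\|\nabla f_i(\boldsymbol{x}_i)-\nabla f_i(\boldsymbol{y}_i)\|^2$. Adding the two bounds cancels $f_i(\boldsymbol{x}_i)+f_i(\boldsymbol{y}_i)$ on both sides and gives the co-coercivity relation
\begin{equation*}
(\nabla f_i(\boldsymbol{x}_i)-\nabla f_i(\boldsymbol{y}_i))^T(\boldsymbol{x}_i-\boldsymbol{y}_i) \;\geq\; \tfrac{1}{L}\|\nabla f_i(\boldsymbol{x}_i)-\nabla f_i(\boldsymbol{y}_i)\|^2.
\end{equation*}
Applying Cauchy--Schwarz to the left-hand side then yields $\|\nabla f_i(\boldsymbol{x}_i)-\nabla f_i(\boldsymbol{y}_i)\| \leq L\|\boldsymbol{x}_i-\boldsymbol{y}_i\|$, i.e.\ $\nabla f_i$ is $L$-Lipschitz.

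Next, I would apply the fundamental theorem of calculus along the segment from $\boldsymbol{x}_i$ to $\boldsymbol{y}_i$: writing $f_i(\boldsymbol{y}_i)-f_i(\boldsymbol{x}_i) = \int_0^1 \nabla f_i(\boldsymbol{x}_i+t(\boldsymbol{y}_i-\boldsymbol{x}_i))^T(\boldsymbol{y}_i-\boldsymbol{x}_i)\,dt$, subtracting the linear term $\nabla f_i(\boldsymbol{x}_i)^T(\boldsymbol{y}_i-\boldsymbol{x}_i)$, and bounding the resulting integrand via Cauchy--Schwarz and the Lipschitz estimate just established gives
\begin{equation*}
f_i(\boldsymbol{y}_i)-f_i(\boldsymbol{x}_i)-\nabla f_i(\boldsymbol{x}_i)^T(\boldsymbol{y}_i-\boldsymbol{x}_i) \;\leq\; \int_0^1 Lt\,\|\boldsymbol{y}_i-\boldsymbol{x}_i\|^2\,dt \;=\; \tfrac{L}{2}\|\boldsymbol{x}_i-\boldsymbol{y}_i\|^2,
\end{equation*}
which is exactly the claimed upper bound.

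There is no real obstacle: the derivation is a textbook chain (co-coercivity $\Rightarrow$ Lipschitz gradient $\Rightarrow$ quadratic upper bound), and the result is cited to \cite{Zhou18Duality}. The only mild subtlety is that (\ref{equ:gradLips2}) already presupposes convexity together with the co-coercivity constant $1/L$, so the derivation does not need to invoke convexity separately; the whole proof could be compressed into a few lines in the appendix, or even replaced by a direct reference. If a fully self-contained argument is preferred, one can avoid the integral step by observing that (\ref{equ:gradLips2}) applied with $\boldsymbol{x}_i$ and $\boldsymbol{y}_i$ swapped, combined with the just-derived Lipschitz bound on $\nabla f_i$, directly upper-bounds $f_i(\boldsymbol{y}_i)-f_i(\boldsymbol{x}_i)-\nabla f_i(\boldsymbol{x}_i)^T(\boldsymbol{y}_i-\boldsymbol{x}_i)$ by $\tfrac{L}{2}\|\boldsymbol{x}_i-\boldsymbol{y}_i\|^2$ after applying Cauchy--Schwarz to the Bregman-type residual.
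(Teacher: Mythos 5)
Your derivation is correct. The paper itself supplies no proof of this lemma --- it is stated as a known consequence of (\ref{equ:gradLips2}) with a pointer to \cite{Zhou18Duality} --- so there is no in-paper argument to compare against; your chain (symmetrize and add (\ref{equ:gradLips2}) to obtain co-coercivity, Cauchy--Schwarz to extract $L$-Lipschitz continuity of $\nabla f_i$, then integrate along the segment) is the standard textbook route and fills that gap faithfully. Two remarks. First, the detour through the Lipschitz bound and the integral is avoidable: writing $\boldsymbol{g}=\nabla f_i(\boldsymbol{y}_i)-\nabla f_i(\boldsymbol{x}_i)$ and $\boldsymbol{d}=\boldsymbol{y}_i-\boldsymbol{x}_i$, the swapped version of (\ref{equ:gradLips2}) gives
\begin{align}
f_i(\boldsymbol{y}_i)-f_i(\boldsymbol{x}_i)-\nabla f_i(\boldsymbol{x}_i)^T\boldsymbol{d}
\;\leq\; \boldsymbol{g}^T\boldsymbol{d}-\tfrac{1}{2L}\|\boldsymbol{g}\|^2
\;=\;\tfrac{L}{2}\|\boldsymbol{d}\|^2-\tfrac{1}{2L}\|\boldsymbol{g}-L\boldsymbol{d}\|^2
\;\leq\;\tfrac{L}{2}\|\boldsymbol{d}\|^2, \nonumber
\end{align}
which is your closing ``alternative'' made precise --- the step needed is completing the square (Young's inequality), not Cauchy--Schwarz applied to the residual; as literally stated that sentence would only yield the constant $L$ rather than $L/2$. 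Second, the lemma's hypothesis mentions only Lipschitz gradients, but your proof (like the paper's framing) genuinely uses (\ref{equ:gradLips2}), which additionally encodes convexity; for a non-convex $L$-smooth function the quadratic upper bound still holds but must come from the Lipschitz property and the integral argument directly, since (\ref{equ:gradLips2}) then fails. In the paper's setting the $f_i$ are assumed convex, so this is immaterial.
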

\vspace{-5mm}
\begin{proof}
We now describe the proof for Lemma~\ref{lemma:primal_inequality_general}.  The expression $f_i(\boldsymbol{x}_i) -  f_i(\boldsymbol{x}_i^{r, k+1})$ for client $i$ can be lower-bounded to be
  \begin{align}
& \hspace{-1mm} f_i(\boldsymbol{x}_i) -  f_i(\boldsymbol{x}_i^{r, k+1}) \nonumber \\
 \stackrel{(a)}{\geq}& \Big[f_i(\boldsymbol{x}_i^{r, k}) + (\boldsymbol{x}_i - \boldsymbol{x}_i^{r,k})^T \nabla f_i (\boldsymbol{x}_i^{r,k}) + \frac{\theta \mu }{2}\|\boldsymbol{x}_i^{r,k}-\boldsymbol{x}_i \|^2 \hspace{2mm} 
 \nonumber \\
&+ \frac{1-\theta}{2L}\| \nabla f_i(\boldsymbol{x}_i^{r,k})- \nabla f_i ( \boldsymbol{x}_i) \|^2  \Big] - \hspace{-0.7mm}  \Big[ f_i(\boldsymbol{x}_i^{r,k}) \hspace{-0.7mm}  \nonumber \\
& +  \hspace{-0.7mm} (\boldsymbol{x}_i^{r,k+1} \hspace{-0.7mm}-\hspace{-0.7mm} \boldsymbol{x}_i^{r,k})^T \nabla f_i(\boldsymbol{x}_i^{r,k})  \hspace{-0.7mm}+ \hspace{-0.7mm} \frac{L}{2} \|\boldsymbol{x}_i^{r,k+1}  \hspace{-0.7mm}- \hspace{-0.7mm}\boldsymbol{x}_i^{r,k} \|^2 \Big] \nonumber \\ 
=& \hspace{-0.2mm} (\boldsymbol{x}_i  \hspace{-0.7mm}-\hspace{-0.7mm} \boldsymbol{x}_i^{r, k+1})^T \nabla \hspace{-0.2mm} f_i(\boldsymbol{x}_i^{r,k}) \hspace{-0.7mm}  + \hspace{-0.7mm} \frac{\theta\mu}{2}\|\boldsymbol{x}_i^{r,k} \hspace{-0.7mm}-\hspace{-0.7mm}  \boldsymbol{x}_i \|^2 \hspace{-0.8mm} 
\nonumber \\
&- \hspace{-0.8mm} \frac{L}{2} \|\boldsymbol{x}_i^{r,k+1}  \hspace{-0.8mm}  - \hspace{-0.7mm}\boldsymbol{x}_i^{r, k} \|^2 \hspace{-0.7mm}+\hspace{-0.7mm} \frac{1-\theta}{2L}\| \nabla f_i(\boldsymbol{x}_i^{r,k})- \nabla f_i ( \boldsymbol{x}_i) \|^2  \nonumber 
\end{align}
\begin{align}
\stackrel{(b)}{=}&  \hspace{-0.6mm} (\boldsymbol{x}_i \hspace{-0.7mm}-\hspace{-0.7mm} \boldsymbol{x}_i^{r,k+1})^T \Big(  \rho(\hspace{-0.6mm} \boldsymbol{x}_s^{r} \hspace{-0.7mm} - \hspace{-0.6mm}  \boldsymbol{x}_i^{r,k+1} ) \hspace{-0.6mm} -\hspace{-0.6mm}  \boldsymbol{\lambda}_{s|i}^{r} \hspace{-0.7mm} -\hspace{-0.7mm}  \frac{1}{\eta} (\boldsymbol{x}_i^{r, k+1} \hspace{-0.7mm} -\hspace{-0.7mm}  \boldsymbol{x}_i^{r, k})\Big) \nonumber\\
&+ \hspace{-0.6mm} \frac{\theta\mu}{2}\|\boldsymbol{x}_i^{r, k}-\boldsymbol{x}_i \|^2   \hspace{-0.6mm} -\hspace{-0.6mm} \frac{L}{2} \|\boldsymbol{x}_i^{r, k+1}  \hspace{-0.6mm}- \hspace{-0.6mm}\boldsymbol{x}_i^{r, k} \|^2 \nonumber \\
& + \frac{1-\theta}{2L}\| \nabla f_i(\boldsymbol{x}_i^{r,k})- \nabla f_i ( \boldsymbol{x}_i) \|^2  \nonumber \\
\stackrel{(c)}{=}&  \hspace{-0.6mm} (\boldsymbol{x}_i \hspace{-0.6mm}-\hspace{-0.6mm} \boldsymbol{x}_i^{r, k+1})^T  [ \rho(\hspace{-0.6mm} \boldsymbol{x}_s^{r} \hspace{-0.7mm} - \hspace{-0.6mm}  \boldsymbol{x}_i^{r,k+1} ) \hspace{-0.6mm} -\hspace{-0.6mm}  \boldsymbol{\lambda}_{s|i}^{r}] \hspace{-0.6mm}+\hspace{-0.6mm} \frac{1}{2\eta}  
\|\boldsymbol{x}_i - \boldsymbol{x}_i^{r, k+1} \|^2 \nonumber\\
& - \hspace{-0.6mm} \frac{1/\eta - \theta\mu}{2}\|\boldsymbol{x}_i^{r,k}-\boldsymbol{x}_i \|^2 \hspace{-0.6mm} +\hspace{-0.6mm} \frac{1/\eta- L}{2} \|\boldsymbol{x}_i^{r, k+1}  \hspace{-0.6mm}- \hspace{-0.6mm}\boldsymbol{x}_i^{r, k} \|^2 \nonumber \\
&+ \frac{1-\theta}{2L}\| \nabla f_i(\boldsymbol{x}_i^{r,k})- \nabla f_i ( \boldsymbol{x}_i) \|^2
 \end{align}
 where step $(a)$ follows from (\ref{equ:gradLips2})- (\ref{equ:muStrong}) and Lemma~\ref{lemma:gradLips}, which are due to the fact that $f_i$ is $\mu$-convex ($\mu\geq 0$) and has   Lipschitz continuous gradient $L\geq \mu$.  The parameter  $\theta$ satisfy $\{1\geq \theta\geq 0\}$. Step $(b)$ uses the optimality condition (\ref{equ:opti_r}). Step $(c)$ makes use of Lemma~\ref{lemma:identity}. 
 The proof is complete.   
 \end{proof}
  
 \section{Proof for Lemma~\ref{lemma:twoBounds}}
 \label{appendix:Lemma_upperbound}

\begin{proof}
 Invoking Lemma~\ref{lemma:primal_inequality_general} with $\boldsymbol{x}_i=\boldsymbol{x}_i^{\star}$, summing over all the clients and all gradient steps $i=1,\ldots, K$,  for the iteration $r$, and rearranging the quantities, we obtain   
  \begin{align}
 & \sum_{i=1}^m  \frac{1}{K}\sum_{k=0}^{K-1} \hspace{-0.6mm} \frac{1/\eta - \theta \mu}{2}\|\boldsymbol{x}_i^{r,k}-\boldsymbol{x}_i^{\star} \|^2 \hspace{-0.6mm} \nonumber \\
 \hspace{-2mm}&\geq \sum_{i=1}^m \hspace{-0.5mm} \frac{1}{K}\hspace{-0.5mm} \sum_{k=0}^{K-1} \Big[ f_i(\boldsymbol{x}_i^{r, k+1} ) - f_i(\boldsymbol{x}_i^{\star}) \hspace{-0.6mm} + \frac{1}{2\eta}  
\|\boldsymbol{x}_i^{\star} - \boldsymbol{x}_i^{r, k+1} \|^2  \nonumber \\
& \hspace{10mm} - (  \boldsymbol{x}_i^{r, k+1} - \boldsymbol{x}_i^{\star} )^T  [ \rho(\hspace{-0.6mm} \boldsymbol{x}_s^{r} \hspace{-0.7mm} - \hspace{-0.6mm}  \boldsymbol{x}_i^{r,k+1} ) \hspace{-0.6mm} -\hspace{-0.6mm}  \boldsymbol{\lambda}_{s|i}^{r}] \nonumber \\
&+\hspace{-0.6mm} \frac{1/\eta \hspace{-0.5mm}-  \hspace{-0.5mm} L}{2} \|\boldsymbol{x}_i^{r, k+1}  \hspace{-0.7mm}- \hspace{-0.7mm}\boldsymbol{x}_i^{r, k} \|^2 \hspace{-0.7mm} + \hspace{-0.7mm}  \frac{1  \hspace{-0.5mm}- \hspace{-0.5mm} \theta}{2L}\| \nabla f_i(\boldsymbol{x}_i^{r, k})  \hspace{-0.7mm} -\hspace{-0.7mm}  \nabla f_i ( \boldsymbol{x}_i^{\star}) \|^2 \Big] \nonumber\\
 \hspace{-2mm}&\stackrel{(a)}{=} \sum_{i=1}^m \hspace{-0.5mm} \frac{1}{K}\hspace{-0.5mm} \sum_{k=0}^{K-1} \Big[ f_i(\boldsymbol{x}_i^{r, k+1} ) - f_i(\boldsymbol{x}_i^{\star}) \hspace{-0.6mm} + \frac{1}{2\eta}  
\|\boldsymbol{x}_i^{\star} - \boldsymbol{x}_i^{r, k+1} \|^2  \nonumber\\
& \hspace{10mm} - (  \boldsymbol{x}_i^{r, k+1} - \boldsymbol{x}_i^{\star} )^T  [ \rho(\hspace{-0.6mm} \boldsymbol{x}_s^{r} \hspace{-0.7mm} - \hspace{-0.6mm}  \boldsymbol{x}_i^{r,k+1} ) \hspace{-0.6mm} -\hspace{-0.6mm}  \boldsymbol{\lambda}_{s|i}^{r}] \nonumber \\
\hspace{-3mm}&\hspace{9mm} +\hspace{-0.6mm} \frac{1/\eta - L}{2} \|\boldsymbol{x}_i^{r, k+1}  \hspace{-0.6mm}- \hspace{-0.6mm}\boldsymbol{x}_i^{r, k} \|^2 \hspace{-0.7mm}  + \hspace{-0.7mm}  \frac{1-\theta}{2L}\|  \rho(\hspace{-0.6mm} \boldsymbol{x}_s^{r} \hspace{-0.7mm} - \hspace{-0.6mm}  \boldsymbol{x}_i^{r,k+1} ) \hspace{-0.6mm}\nonumber \\
\hspace{-3mm}&  \hspace{9mm}  -\hspace{-0.6mm}  \boldsymbol{\lambda}_{s|i}^{r} \hspace{-0.6mm} - \boldsymbol{\lambda}_{i|s}^{\star} -\hspace{-0.6mm} (1/\eta)(\boldsymbol{x}_i^{r, k+1} \hspace{-0.6mm} -\hspace{-0.6mm}  \boldsymbol{x}_i^{r, k}) \hspace{-0.6mm}\|^2 \Big] \nonumber \\
 \hspace{-2mm}&\stackrel{(b)}{=} \sum_{i=1}^m \Big[ f_i(\bar{\boldsymbol{x}}_i^{r, K} ) - f_i(\boldsymbol{x}_i^{\star}) \hspace{-0.6mm} + \frac{1}{K}\sum_{k=0}^{K-1} \frac{1}{2\eta}  
\|\boldsymbol{x}_i^{\star} - \boldsymbol{x}_i^{r, k+1} \|^2  \nonumber\\
& \hspace{10mm} - (  \bar{\boldsymbol{x}}_i^{r, K} - \boldsymbol{x}_i^{\star} )^T \boldsymbol{\lambda}_{i|s}^{r+1} \nonumber \\
\hspace{-3mm}&\hspace{9mm} +\hspace{-0.6mm} \frac{1/\eta - L}{2} \|\boldsymbol{x}_i^{r, k+1}  \hspace{-0.6mm}- \hspace{-0.6mm}\boldsymbol{x}_i^{r, k} \|^2 \hspace{-0.7mm}  + \hspace{-0.7mm}  \frac{1-\theta}{2L}\|  \rho(\hspace{-0.6mm} \boldsymbol{x}_s^{r} \hspace{-0.7mm} - \hspace{-0.6mm}  \boldsymbol{x}_i^{r,k+1} ) \hspace{-0.6mm}\nonumber \\
\hspace{-3mm}&  \hspace{9mm}  -\hspace{-0.6mm}  \boldsymbol{\lambda}_{s|i}^{r} \hspace{-0.6mm} - \boldsymbol{\lambda}_{i|s}^{\star} -\hspace{-0.6mm} (1/\eta)(\boldsymbol{x}_i^{r, k+1} \hspace{-0.6mm} -\hspace{-0.6mm}  \boldsymbol{x}_i^{r, k}) \hspace{-0.6mm}\|^2 \Big] ,
\label{equ:upper_bound1}
 \end{align}
 where step $(a)$ uses the optimality condition (\ref{equ:opti_r}) and  $ \{\nabla f_i ( \boldsymbol{x}_i^{\star}) = \boldsymbol{\lambda}_{i|s}^{\star}\}_{i=1}^m$. Step $(b)$ is obtained by employing Jensen's inequality, $ \bar{\boldsymbol{x}}_i^{r,K} = \frac{1}{K}\sum_{k=1}^K \boldsymbol{x}_i^{r,k}$, and $ \boldsymbol{\lambda}_{i|s}^{r+1}= \rho(\hspace{-0.6mm} \boldsymbol{x}_s^{r} \hspace{-0.7mm} - \hspace{-0.6mm}  \bar{\boldsymbol{x}}_i^{r,K} ) \hspace{-0.6mm} -\hspace{-0.6mm}  \boldsymbol{\lambda}_{s|i}^{r}$.

To further simplify (\ref{equ:upper_bound1}), we first present a lemma below:
\begin{lemma}
Suppose the estimates $\{\boldsymbol{x}_i^{r, k}\}_{k=1}^{K}$ are obtained by performing (\ref{equ:gradient_xi})-(\ref{equ:f_i_approximate}) under the condition that $1/\eta \geq L$.  Then the expression $\sum_{i=1}^m ( \bar{\boldsymbol{x}}_i^{r, K} - \boldsymbol{x}_i^{\star} )^T  \boldsymbol{\lambda}_{i|s}^{r+1} $ in the RHS of (\ref{equ:upper_bound1}) can be alternatively represented
as
\begin{align}
 &\hspace{-0mm} 2\sum_{i=1}^m ( \bar{\boldsymbol{x}}_i^{r,K} - \boldsymbol{x}_i^{\star} )^T  \boldsymbol{\lambda}_{i|s}^{r+1} \nonumber \\
&= 2\sum_{i=1}^m \boldsymbol{\lambda}_{i|s}^{\star} \bar{\boldsymbol{x}}_i^{r, K} + \sum_{i=1}^m  \frac{1}{2\rho}  \|\rho(\bar{\boldsymbol{x}}_i^{r, K} - \boldsymbol{x}_i^{\star}) + (\boldsymbol{\lambda}_{i|s}^{r+1} -\boldsymbol{\lambda}_{i|s}^{\star} )   \|^2  \nonumber \\
& \hspace{3mm}  -   \sum_{i=1}^m \frac{1}{2\rho}  \|\rho(\bar{\boldsymbol{x}}_i^{r+1, K} - \boldsymbol{x}_i^{\star}) + (\boldsymbol{\lambda}_{i|s}^{r+2} -\boldsymbol{\lambda}_{i|s}^{\star} )   \|^2.
\label{equ:client_ineq_final}
\end{align}
We postpone the proof for Lemma~\ref{lemma:client_ineq_final} in Appendix~\ref{appendix:proof_client_ineq}. 
\label{lemma:client_ineq_final}
\end{lemma}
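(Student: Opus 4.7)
The plan is to treat the identity as a purely algebraic consequence of the PDMM update rules and the KKT conditions in \eqref{equ:KKT3}. Introduce the shorthands $\boldsymbol{a}_i := \rho(\bar{\boldsymbol{x}}_i^{r, K} - \boldsymbol{x}_i^{\star})$, $\boldsymbol{b}_i := \boldsymbol{\lambda}_{i|s}^{r+1} - \boldsymbol{\lambda}_{i|s}^{\star}$, so that $\boldsymbol{u}_i := \boldsymbol{a}_i + \boldsymbol{b}_i$ is the $r+1$-indexed squared quantity on the RHS of \eqref{equ:client_ineq_final}; denote the $r+2$-indexed squared quantity by $\boldsymbol{v}_i := \rho(\bar{\boldsymbol{x}}_i^{r+1, K} - \boldsymbol{x}_i^{\star}) + (\boldsymbol{\lambda}_{i|s}^{r+2} - \boldsymbol{\lambda}_{i|s}^{\star})$, and set $\boldsymbol{p} := 2\rho(\boldsymbol{x}_s^{r+1} - \boldsymbol{x}_s^{\star})$.

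First I would reduce the LHS of \eqref{equ:client_ineq_final}. Moving the $\boldsymbol{\lambda}_{i|s}^{\star}$ term across and using the KKT facts $\boldsymbol{x}_i^{\star} = \boldsymbol{x}_s^{\star}$ and $\sum_i \boldsymbol{\lambda}_{i|s}^{\star} = 0$ from \eqref{equ:KKT3}, the cross term $\sum_i \boldsymbol{x}_i^{\star,T}\boldsymbol{\lambda}_{i|s}^{\star} = \boldsymbol{x}_s^{\star,T} \sum_i \boldsymbol{\lambda}_{i|s}^{\star} = 0$ vanishes, so the identity collapses to showing $\frac{1}{2\rho}\sum_i (\|\boldsymbol{u}_i\|^2 - \|\boldsymbol{v}_i\|^2) = \frac{2}{\rho}\sum_i \boldsymbol{a}_i^T \boldsymbol{b}_i$.

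Next I would put $\boldsymbol{v}_i$ in canonical form. Substituting $\boldsymbol{\lambda}_{i|s}^{r+2} = \rho(\boldsymbol{x}_s^{r+1} - \bar{\boldsymbol{x}}_i^{r+1, K}) - \boldsymbol{\lambda}_{s|i}^{r+1}$ (Alg.~1, line~9 at iteration $r+1$) into $\boldsymbol{v}_i$ cancels $\bar{\boldsymbol{x}}_i^{r+1, K}$; using $\boldsymbol{\lambda}_{s|i}^{\star} = -\boldsymbol{\lambda}_{i|s}^{\star}$ yields $\boldsymbol{v}_i = \rho(\boldsymbol{x}_s^{r+1} - \boldsymbol{x}_s^{\star}) - (\boldsymbol{\lambda}_{s|i}^{r+1} - \boldsymbol{\lambda}_{s|i}^{\star})$. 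Then substituting the server's dual update $\boldsymbol{\lambda}_{s|i}^{r+1} = \rho(\bar{\boldsymbol{x}}_i^{r, K} - \boldsymbol{x}_s^{r+1}) - \boldsymbol{\lambda}_{i|s}^{r+1}$ (Alg.~1, line~13) gives the clean form $\boldsymbol{v}_i = \boldsymbol{b}_i - \boldsymbol{a}_i + \boldsymbol{p}$. Expanding the squares then produces
\begin{align}
\|\boldsymbol{u}_i\|^2 - \|\boldsymbol{v}_i\|^2 = 4\boldsymbol{a}_i^T\boldsymbol{b}_i - 2\boldsymbol{p}^T(\boldsymbol{b}_i - \boldsymbol{a}_i) - \|\boldsymbol{p}\|^2. \nonumber
\end{align}

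Finally, summing across clients requires $\sum_i (\boldsymbol{b}_i - \boldsymbol{a}_i)$. The server aggregation $\boldsymbol{x}_s^{r+1} = \frac{1}{m}\sum_i(\bar{\boldsymbol{x}}_i^{r, K} - \boldsymbol{\lambda}_{i|s}^{r+1}/\rho)$ from Alg.~1 gives $\sum_i \boldsymbol{\lambda}_{i|s}^{r+1} = \rho(\sum_i \bar{\boldsymbol{x}}_i^{r, K} - m\boldsymbol{x}_s^{r+1})$, which combined with $\sum_i \boldsymbol{\lambda}_{i|s}^{\star} = 0$ and $\boldsymbol{x}_i^{\star} = \boldsymbol{x}_s^{\star}$ yields $\sum_i (\boldsymbol{b}_i - \boldsymbol{a}_i) = -m\rho(\boldsymbol{x}_s^{r+1} - \boldsymbol{x}_s^{\star}) = -m\boldsymbol{p}/2$. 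Plugging this in, the cross term contributes $+m\|\boldsymbol{p}\|^2$, which exactly cancels the $-m\|\boldsymbol{p}\|^2$ remainder, leaving $\sum_i(\|\boldsymbol{u}_i\|^2 - \|\boldsymbol{v}_i\|^2) = 4\sum_i \boldsymbol{a}_i^T \boldsymbol{b}_i$; dividing by $2\rho$ matches the reduced target. The main obstacle is recognising that the stray $\|\boldsymbol{p}\|^2$ created client-wise in Step~2 does not vanish individually but only telescopes after the server-side sum, a cancellation driven precisely by the dual-sum KKT condition $\sum_i \boldsymbol{\lambda}_{i|s}^{\star} = 0$; without this condition the identity would fail.
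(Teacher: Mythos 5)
Your proposal is correct: I checked each step, and in particular the key identity $\boldsymbol{v}_i=\boldsymbol{b}_i-\boldsymbol{a}_i+\boldsymbol{p}$ does follow from substituting the client update $\boldsymbol{\lambda}_{i|s}^{r+2}=\rho(\boldsymbol{x}_s^{r+1}-\bar{\boldsymbol{x}}_i^{r+1,K})-\boldsymbol{\lambda}_{s|i}^{r+1}$ and the server update $\boldsymbol{\lambda}_{s|i}^{r+1}=\rho(\bar{\boldsymbol{x}}_i^{r,K}-\boldsymbol{x}_s^{r+1})-\boldsymbol{\lambda}_{i|s}^{r+1}$ together with $\boldsymbol{x}_i^{\star}=\boldsymbol{x}_s^{\star}$ and $\boldsymbol{\lambda}_{s|i}^{\star}=-\boldsymbol{\lambda}_{i|s}^{\star}$; the subsequent expansion gives $\|\boldsymbol{u}_i\|^2-\|\boldsymbol{v}_i\|^2=4\boldsymbol{a}_i^T\boldsymbol{b}_i-2\boldsymbol{p}^T(\boldsymbol{b}_i-\boldsymbol{a}_i)-\|\boldsymbol{p}\|^2$, and $\sum_i(\boldsymbol{b}_i-\boldsymbol{a}_i)=-m\boldsymbol{p}/2$ indeed follows from the server aggregation plus $\sum_i\boldsymbol{\lambda}_{i|s}^{\star}=0$, so the $m\|\boldsymbol{p}\|^2$ terms cancel exactly as you claim. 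However, your route is genuinely different from the paper's. The paper never isolates a single canonical form for the $(r{+}1)$-indexed vector; instead it writes the cross term $\sum_i(\bar{\boldsymbol{x}}_i^{r,K}-\boldsymbol{x}_i^{\star})^T\boldsymbol{\lambda}_{i|s}^{r+1}$ in two different ways (once by substituting for $\boldsymbol{\lambda}_{i|s}^{r+1}$, once by substituting for $\bar{\boldsymbol{x}}_i^{r,K}$), adds two expansions of the identically zero quantity $\sum_i\boldsymbol{\lambda}_{s|i}^{r+1,T}(\boldsymbol{x}_s^{r+1}-\boldsymbol{x}_s^{\star})$, and then collapses the resulting eight-way sum using the polarization identity of Lemma~\ref{lemma:identity} and the relations $\rho(\boldsymbol{x}_s^{r}-\bar{\boldsymbol{x}}_i^{r,K})-(\boldsymbol{\lambda}_{s|i}^{r}+\boldsymbol{\lambda}_{i|s}^{r+1})=0$ and $\rho(\boldsymbol{x}_s^{r+1}-\bar{\boldsymbol{x}}_i^{r,K})+(\boldsymbol{\lambda}_{s|i}^{r+1}+\boldsymbol{\lambda}_{i|s}^{r+1})=0$. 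Both arguments consume the same raw ingredients (the two dual updates, the server aggregation, and the KKT conditions), but yours buys a considerably shorter and more transparent derivation — the telescoping structure is visible at a glance once $\boldsymbol{v}_i=\boldsymbol{b}_i-\boldsymbol{a}_i+\boldsymbol{p}$ is established, whereas the paper's symmetric double-expansion obscures it; the paper's version, on the other hand, stays closer to the generic PDMM variational-inequality template and does not require guessing the right change of variables. One small caveat: your closing remark attributes the $\|\boldsymbol{p}\|^2$ cancellation to the condition $\sum_i\boldsymbol{\lambda}_{i|s}^{\star}=0$ alone, but it equally relies on the server aggregation step (equivalently $\sum_i\boldsymbol{\lambda}_{s|i}^{r+1}=0$, Eq.~(\ref{equ:s_lambda_equality})); this does not affect correctness, only the attribution.
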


Plugging (\ref{equ:client_ineq_final}) into (\ref{equ:upper_bound1}) and rearranging the quantities produces (\ref{equ:upper_bound_final}). The proof is complete. \end{proof}

\vspace{-3mm}
\section{Proof for Lemma~\ref{lemma:client_ineq_final}}
\label{appendix:proof_client_ineq}
\vspace{-3mm}
\begin{proof}
In the first step, we derive two different but mathematically equivalent expressions for the quantity $\sum_{i=1}^m ( \bar{\boldsymbol{x}}_i^{r,K}  - \boldsymbol{x}_i^{\star} )^T  \boldsymbol{\lambda}_{i|s}^{r+1}$.
Firstly,  by plugging the expressions $\{\boldsymbol{\lambda}_{i|s}^{r+1} = \rho (\boldsymbol{x}_s^r \hspace{-0.7mm}-\hspace{-0.7mm} \bar{\boldsymbol{x}}_i^{r,K} ) -\boldsymbol{\lambda}_{s|i}^r \}$ into $\sum_{i=1}^m (  \bar{\boldsymbol{x}}_i^{r,K} - \boldsymbol{x}_i^{\star} )^T  \boldsymbol{\lambda}_{i|s}^{r+1}$, we have 
\begin{align}
&\sum_{i=1}^m (  \bar{\boldsymbol{x}}_i^{r,K} - \boldsymbol{x}_i^{\star} )^T  \boldsymbol{\lambda}_{i|s}^{r+1} \nonumber \\
 &=\sum_{i=1}^m ( \rho ( \boldsymbol{x}_s^r - \bar{\boldsymbol{x}}_i^{r,K} ) 
-\boldsymbol{\lambda}_{s|i}^r)^T (\bar{\boldsymbol{x}}_i^{r,K} - \boldsymbol{x}_i^{\star})   \nonumber \\
&= \sum_{i=1}^m \Big( \rho  ( \boldsymbol{x}_s^r - \bar{\boldsymbol{x}}_i^{r, K} ) 
+\boldsymbol{\lambda}_{s|i}^{r+1}-\boldsymbol{\lambda}_{s|i}^r\Big)^T  (\bar{\boldsymbol{x}}_i^{r, K} - \boldsymbol{x}_i^{\star})  \hspace{-0.6mm} \nonumber \\
&\hspace{3mm}-\hspace{-0.6mm} \sum_{i=1}^m \boldsymbol{\lambda}_{s|i}^{r+1} (\bar{\boldsymbol{x}}_i^{r, K} - \boldsymbol{x}_i^{\star}) \nonumber \\
&= \sum_{i=1}^m \Big( \rho ( \boldsymbol{x}_s^r - \boldsymbol{x}_s^{r+1} ) 
+\boldsymbol{\lambda}_{s|i}^{r+1}-\boldsymbol{\lambda}_{s|i}^r\Big)^T  (\bar{\boldsymbol{x}}_i^{r,K} - \boldsymbol{x}_i^{\star}) \nonumber \\
&\hspace{3mm} -\hspace{-0.6mm} \sum_{i=1}^m \boldsymbol{\lambda}_{s|i}^{r+1} (\bar{\boldsymbol{x}}_i^{r, K} - \boldsymbol{x}_i^{\star}) \nonumber \\
&\hspace{3mm} +\sum_{i=1}^m \rho(\boldsymbol{x}_s^{r+1} - \bar{\boldsymbol{x}}_i^{r, K} )^T(\bar{\boldsymbol{x}}_i^{r,K} - \boldsymbol{x}_i^{\star}).
\label{equ:client_inequality_current_sum1}
\end{align}

Next we derive the 2nd expression for $\sum_{i=1}^m (  \bar{\boldsymbol{x}}_i^{r,K} - \boldsymbol{x}_i^{\star} )^T  \boldsymbol{\lambda}_{i|s}^{r+1}$.   To do so, we note that  $\bar{\boldsymbol{x}}_i^{r,K}$ can be represented in terms of $\boldsymbol{\lambda}_{i|s}^{r+1}$ as
\begin{align}
\hspace{-3mm} \bar{\boldsymbol{x}}_i^{r, K} \hspace{-0.5mm}=\hspace{-0.5mm} \boldsymbol{x}_s^{r} \hspace{-0.5mm}-\hspace{-0.5mm} \frac{1}{\rho } (\boldsymbol{\lambda}_{s|i}^r \hspace{-0.5mm}+\hspace{-0.5mm} \boldsymbol{\lambda}_{i | s}^{r+1})  \;\;  i=1,\ldots, m, \label{equ:lambda_update_reverse}
\end{align}
Similarly to the derivation for (\ref{equ:client_inequality_current_sum1}),  we plug the expression (\ref{equ:lambda_update_reverse}) for $\bar{\boldsymbol{x}}_i^{r,K}$ where appropriate, which is given by 
\begin{align}
&\sum_{i=1}^m (  \bar{\boldsymbol{x}}_i^{r,K} - \boldsymbol{x}_i^{\star} )^T  \boldsymbol{\lambda}_{i|s}^{r+1} \nonumber \\
&=\hspace{-0.7mm} \sum_{i=1}^m (\boldsymbol{\lambda}_{i|s}^{r+1} \hspace{-0.7mm}-\hspace{-0.7mm} \boldsymbol{\lambda}_{i | s }^{\star}  )^T \hspace{-0.5mm}  \bar{\boldsymbol{x}}_i^{r,K} \hspace{-0.7mm}-\hspace{-0.7mm} \sum_{i=1}^m \hspace{-0.7mm} \boldsymbol{\lambda}_{i | s}^{r+1,T}\hspace{-0.5mm}    \boldsymbol{x}_i^{\star}  + \sum_{i=1}^m \boldsymbol{\lambda}_{i | s}^{\star,T}  \bar{\boldsymbol{x}}_i^{r,K}   \nonumber 
\end{align}
\begin{align}
&= \sum_{i=1}^m \left [ \boldsymbol{x}_s^{r} \hspace{-0.5mm}-\hspace{-0.5mm} \frac{1}{\rho} (\boldsymbol{\lambda}_{s|i}^r \hspace{-0.5mm}+\hspace{-0.5mm} \boldsymbol{\lambda}_{i | s}^{r+1}) \right]^T (\boldsymbol{\lambda}_{i|s}^{r+1} -\boldsymbol{\lambda}_{i|s}^{\star}  )   \nonumber \\
&\hspace{3mm} -\hspace{-0.7mm} \sum_{i=1}^m \hspace{-0.7mm} \boldsymbol{\lambda}_{i | s}^{r+1,T}\hspace{-0.5mm}  \boldsymbol{x}_i^{\star} + \sum_{i=1}^m  \boldsymbol{\lambda}_{i | s}^{\star,T}   \bar{\boldsymbol{x}}_i^{r,K}    \nonumber \\
&= \sum_{i=1}^m  \left [(\boldsymbol{x}_s^{r} -\boldsymbol{x}_s^{r+1} ) \hspace{-0.5mm}-\hspace{-0.5mm} \frac{1}{\rho}(\boldsymbol{\lambda}_{s|i}^r \hspace{-0.5mm}+\hspace{-0.5mm} \boldsymbol{\lambda}_{i | s}^{r+1}) \right]^T (\boldsymbol{\lambda}_{i|s}^{r+1} -\boldsymbol{\lambda}_{i|s}^{\star} ) \hspace{-0.5mm} \nonumber \\
&+\hspace{-0.5mm}  \sum_{i =1}^m  \boldsymbol{x}_s^{r+1,T}  (\boldsymbol{\lambda}_{i|s}^{r+1} \hspace{-0.5mm}-\hspace{-0.5mm} \boldsymbol{\lambda}_{i|s}^{\star} )    -\hspace{-0.7mm} \sum_{i=1}^m \hspace{-0.7mm} \boldsymbol{\lambda}_{i | s}^{r+1,T}\hspace{-0.5mm}   \boldsymbol{x}_i^{\star}  + \sum_{i=1}^m  \boldsymbol{\lambda}_{i | s}^{\star,T}     \bar{\boldsymbol{x}}_i^{r,K} \nonumber \\
&= \sum_{i=1}^m   \left [(\boldsymbol{x}_s^{r} -\boldsymbol{x}_s^{r+1} ) \hspace{-0.5mm}-\hspace{-0.5mm} \frac{1}{\rho}(\boldsymbol{\lambda}_{s|i}^r \hspace{-0.5mm}-\hspace{-0.5mm} \boldsymbol{\lambda}_{s|i}^{r+1}) \right]^T (\boldsymbol{\lambda}_{i|s}^{r+1} -\boldsymbol{\lambda}_{i|s}^{\star} ) \hspace{-0.5mm} \nonumber \\
&+\hspace{-0.5mm}  \sum_{i =1}^m  \boldsymbol{x}_s^{r+1,T}  (\boldsymbol{\lambda}_{i|s}^{r+1} \hspace{-0.5mm}-\hspace{-0.5mm} \boldsymbol{\lambda}_{i|s}^{\star} ) -\hspace{-0.7mm} \sum_{i=1}^m \hspace{-0.7mm} \boldsymbol{\lambda}_{i | s}^{r+1,T}\hspace{-0.5mm}   \boldsymbol{x}_i^{\star}  + \sum_{i=1}^m  \boldsymbol{\lambda}_{i | s}^{\star,T}     \bar{\boldsymbol{x}}_i^{r,K}  \nonumber \\
& \hspace{-0mm} -  \sum_{i=1}^m \frac{1}{\rho}(\boldsymbol{\lambda}_{s|i}^{r+1} \hspace{-0.5mm}+\hspace{-0.5mm} \boldsymbol{\lambda}_{i | s}^{r+1})^T(\boldsymbol{\lambda}_{i|s}^{r+1} -\boldsymbol{\lambda}_{i|s}^{\star} ).
\label{equ:client_inequality_current_sum2}
\end{align}

In the 2nd step, we derive two different but mathematically equivalent expressions for $ \sum_{i=1}^m \boldsymbol{\lambda}_{s|i}^{r+1,T}( \boldsymbol{x}_s^{r+1} - \boldsymbol{x}_s^{\star}) $.  By using  (\ref{equ:s_lambda_equality}) and the expression for $\boldsymbol{\lambda}_{s|i}^{r+1} = \rho(\bar{\boldsymbol{x}}_i^{r, K} - \boldsymbol{x}_s^{r+1} ) -  \boldsymbol{\lambda}_{i|s}^{r+1} $, we have 
\begin{align}
0 &= \sum_{i=1}^m \boldsymbol{\lambda}_{s|i}^{r+1,T}( \boldsymbol{x}_s^{r+1} - \boldsymbol{x}_s^{\star})  \nonumber \\
 &  = \sum_{i=1}^m \left[\rho (\bar{\boldsymbol{x}}_i^{r,K} - \boldsymbol{x}_s^{r+1} ) 
- \boldsymbol{\lambda}_{i|s}^{r+1} \right]^{T}( \boldsymbol{x}_s^{r+1} - \boldsymbol{x}_s^{\star})  \nonumber \\
 &  = \sum_{i=1}^m \rho(\bar{\boldsymbol{x}}_i^{r, K} - \boldsymbol{x}_s^{r+1} )^T( \boldsymbol{x}_s^{r+1} -   \boldsymbol{x}_s^{\star})  
\nonumber \\
&- \sum_{i=1}^m \boldsymbol{\lambda}_{i|s}^{r+1,T}( \boldsymbol{x}_s^{r+1} - \boldsymbol{x}_s^{\star}).
\label{equ:server_inequality_current_sum1} 
 \end{align}
The 2nd expression for $ \sum_{i=1}^m \boldsymbol{\lambda}_{s|i}^{r+1,T}( \boldsymbol{x}_s^{r+1} - \boldsymbol{x}_s^{\star}) $ can be derived by utilising $\boldsymbol{x}_s^{r+1} = \bar{\boldsymbol{x}}_i^{r,K}-\frac{1}{\rho}(\boldsymbol{\lambda}_{s|i}^{r+1} + \boldsymbol{\lambda}_{i|s}^{r+1})$ as: 
\begin{align}
\hspace{-2mm}0 &= \sum_{i=1}^m \boldsymbol{\lambda}_{s|i}^{r+1,T} ( \boldsymbol{x}_s^{r+1} - \boldsymbol{x}_s^{\star})  \nonumber \\
&= \hspace{-0.7mm}  \sum_{i=1}^m \left( \boldsymbol{\lambda}_{s|i}^{r+1}  \hspace{-0.7mm}- \hspace{-0.7mm}  \boldsymbol{\lambda}_{s|i}^{\star} \right)^T  \hspace{-0.7mm}\boldsymbol{x}_s^{r+1} 
 \hspace{-0.7mm}+ \hspace{-0.7mm}\sum_{i=1}^m  \hspace{-0.7mm}  \boldsymbol{\lambda}_{s|i}^{\star,T} \boldsymbol{x}_s^{r+1}
  \hspace{-0.7mm}- \hspace{-0.7mm}  \sum_{i=1}^m  \hspace{-0.7mm} \boldsymbol{\lambda}_{s|i}^{r+1,T}  \boldsymbol{x}_s^{\star} \nonumber \\
&=  \sum_{i=1}^m \left( \boldsymbol{\lambda}_{s|i}^{r+1} -  \boldsymbol{\lambda}_{s|i}^{\star} \right)^T \left[\bar{\boldsymbol{x}}_i^{r, K}-\frac{1}{\rho}(\boldsymbol{\lambda}_{s|i}^{r+1} + \boldsymbol{\lambda}_{i|s}^{r+1}) \right]  \nonumber\\
&+\sum_{i=1}^m  \boldsymbol{\lambda}_{s|i}^{\star,T} \boldsymbol{x}_s^{r+1}
 - \sum_{i=1}^m \boldsymbol{\lambda}_{s|i}^{r+1,T}  \boldsymbol{x}_s^{\star} \nonumber \\ 
&=  \sum_{i=1}^m \left( \boldsymbol{\lambda}_{s|i}^{r+1} -  \boldsymbol{\lambda}_{s|i}^{\star} \right)^T  \bar{\boldsymbol{x}}_i^{r,K} 
+\sum_{i=1}^m  \boldsymbol{\lambda}_{s|i}^{\star,T} \boldsymbol{x}_s^{r+1}
 \nonumber \\
&- \hspace{-0.7mm} \sum_{i=1}^m  \hspace{-0.7mm} \left( \boldsymbol{\lambda}_{s|i}^{r+1}  \hspace{-0.7mm}- \hspace{-0.7mm}  \boldsymbol{\lambda}_{s|i}^{\star} \right)^T\frac{1}{\rho}(\boldsymbol{\lambda}_{s|i}^{r+1}  \hspace{-0.7mm}+ \hspace{-0.7mm} \boldsymbol{\lambda}_{i|s}^{r+1})    \hspace{-0.7mm}- \hspace{-0.7mm} \sum_{i=1}^m \boldsymbol{\lambda}_{s|i}^{r+1,T}  \boldsymbol{x}_s^{\star}. \hspace{-2mm}
 \label{equ:server_inequality_current_sum2}  
 \end{align}

Finally, combining (\ref{equ:client_inequality_current_sum1}) and (\ref{equ:client_inequality_current_sum2})-(\ref{equ:server_inequality_current_sum2}) produces
\begin{align}
 &2\sum_{i=1}^m (  \bar{\boldsymbol{x}}_i^{r, K} - \boldsymbol{x}_i^{\star} )^T  \boldsymbol{\lambda}_{i|s}^{r+1} \nonumber 
 \end{align}
 \begin{align}
&= \sum_{i=1}^m \Big( \rho( \boldsymbol{x}_s^r - \boldsymbol{x}_s^{r+1} ) 
\hspace{-0.8mm}+\hspace{-0.8mm}\boldsymbol{\lambda}_{s|i}^{r+1}\hspace{-0.7mm}-\hspace{-0.7mm}\boldsymbol{\lambda}_{s|i}^r\Big)^T  (\bar{\boldsymbol{x}}_i^{r, K} \hspace{-0.8mm}-\hspace{-0.8mm} \boldsymbol{x}_i^{\star}) \nonumber \\
&-\hspace{-0.6mm} \sum_{i=1}^m \boldsymbol{\lambda}_{s|i}^{r+1} (\bar{\boldsymbol{x}}_i^{r, K} \hspace{-0.8mm}-\hspace{-0.8mm} \boldsymbol{x}_i^{\star})  \hspace{-0.8mm}+\hspace{-0.8mm}\sum_{i=1}^m \rho(\boldsymbol{x}_s^{r+1} \hspace{-0.8mm}-\hspace{-0.8mm} \bar{\boldsymbol{x}}_i^{r,K} )^T(\bar{\boldsymbol{x}}_i^{r, K} \hspace{-0.8mm}-\hspace{-0.8mm} \boldsymbol{x}_i^{\star}) \nonumber \\
&\hspace{0mm} +\sum_{i=1}^m   \left [(\boldsymbol{x}_s^{r} -\boldsymbol{x}_s^{r+1} ) \hspace{-0.5mm}-\hspace{-0.5mm} \frac{1}{\rho}(\boldsymbol{\lambda}_{s|i}^r \hspace{-0.5mm}-\hspace{-0.5mm} \boldsymbol{\lambda}_{s|i}^{r+1}) \right]^T (\boldsymbol{\lambda}_{i|s}^{r+1}  \hspace{-0.7mm} -\hspace{-0.7mm}  \boldsymbol{\lambda}_{i|s}^{\star} ) \hspace{-0.5mm}\nonumber \\
&+\hspace{-0.5mm}  \sum_{i =1}^m  \boldsymbol{x}_s^{r+1,T}  (\boldsymbol{\lambda}_{i|s}^{r+1} \hspace{-0.7mm}-\hspace{-0.7mm} \boldsymbol{\lambda}_{i|s}^{\star} )  \hspace{-0.7mm} -\hspace{-0.7mm} \sum_{i=1}^m \hspace{-0.7mm} \boldsymbol{\lambda}_{i | s}^{r+1,T}\hspace{-0.5mm}   \boldsymbol{x}_i^{\star}  \hspace{-0.7mm} +\hspace{-0.7mm}  \sum_{i=1}^m  \boldsymbol{\lambda}_{i | s}^{\star,T}  \bar{\boldsymbol{x}}_i^{r, K}   \nonumber \\
& -  \sum_{i=1}^m \frac{1}{\rho}(\boldsymbol{\lambda}_{s|i}^{r+1} \hspace{-0.5mm}+\hspace{-0.5mm} \boldsymbol{\lambda}_{i | s}^{r+1})^T(\boldsymbol{\lambda}_{i|s}^{r+1} -\boldsymbol{\lambda}_{i|s}^{\star} ) \nonumber \\
& \hspace{0mm}  + \sum_{i=1}^m \rho(\bar{\boldsymbol{x}}_i^{r, K}  \hspace{0.7mm}- \hspace{0.7mm} \boldsymbol{x}_s^{r+1} )^T( \boldsymbol{x}_s^{r+1}  \hspace{0.7mm}- \hspace{0.7mm} \boldsymbol{x}_s^{\star})  \nonumber \\
&\hspace{0.7mm} - \hspace{0.7mm} \sum_{i=1}^m \boldsymbol{\lambda}_{i|s}^{r+1,T}( \boldsymbol{x}_s^{r+1} - \boldsymbol{x}_s^{\star})  + \sum_{i=1}^m \left( \boldsymbol{\lambda}_{s|i}^{r+1} -  \boldsymbol{\lambda}_{s|i}^{\star} \right)^T  \bar{\boldsymbol{x}}_i^{r, K} \nonumber \\
&- \sum_{i=1}^m \left( \boldsymbol{\lambda}_{s|i}^{r+1} -  \boldsymbol{\lambda}_{s|i}^{\star} \right)^T\frac{1}{\rho}(\boldsymbol{\lambda}_{s|i}^{r+1} + \boldsymbol{\lambda}_{i|s}^{r+1})  
+\sum_{i=1}^m  \boldsymbol{\lambda}_{s|i}^{\star,T} \boldsymbol{x}_s^{r+1}  \nonumber \\
& - \sum_{i=1}^m \boldsymbol{\lambda}_{s|i}^{r+1,T}  \boldsymbol{x}_s^{\star}  \nonumber \\
& \stackrel{(a)}{=}  \sum_{i=1}^m \frac{1}{\rho} \Big[ \rho( \boldsymbol{x}_s^r - \boldsymbol{x}_s^{r+1} ) 
+\boldsymbol{\lambda}_{s|i}^{r+1}-\boldsymbol{\lambda}_{s|i}^r\Big]^T \nonumber \\
& \hspace{3mm} \cdot [ \rho(\bar{\boldsymbol{x}}_i^{r,K} - \boldsymbol{x}_i^{\star}) + \boldsymbol{\lambda}_{i|s}^{r+1} -\boldsymbol{\lambda}_{i|s}^{\star}]  + 2\sum_{i=1}^m \boldsymbol{\lambda}_{i|s}^{\star} \bar{\boldsymbol{x}}_i^{r, K} \nonumber \\
& \hspace{3mm} - \sum_{i=1}^m \rho \| \boldsymbol{x}_s^{r+1} - \bar{\boldsymbol{x}}_i^{r,K} \|^2 - \sum_{i=1}^m\frac{1}{\rho}\|\boldsymbol{\lambda}_{s|i}^{r+1} + \boldsymbol{\lambda}_{i|s}^{r+1}  \|^2 \nonumber \\
& \stackrel{(b)}{=}  \sum_{i=1}^m \frac{1}{2\rho}  \|\rho(\boldsymbol{x}_s^r - \boldsymbol{x}_i^{\star}) - (\boldsymbol{\lambda}_{s|i}^r +\boldsymbol{\lambda}_{i|s}^{\star} )   \|^2 \nonumber \\
& \hspace{3mm} -  \sum_{i=1}^m \frac{1}{2\rho}  \|\rho(\boldsymbol{x}_s^{r+1} - \boldsymbol{x}_i^{\star}) - (\boldsymbol{\lambda}_{s|i}^{r+1} +\boldsymbol{\lambda}_{i|s}^{\star} )   \|^2 \nonumber\\
&  \hspace{3mm}  - \sum_{i=1}^m \frac{1}{2\rho}  \|\rho(\boldsymbol{x}_s^r - \bar{\boldsymbol{x}}_i^{r, K}) - (\boldsymbol{\lambda}_{s|i}^r +\boldsymbol{\lambda}_{i|s}^{r+1} )   \|^2  \nonumber \\
&  \hspace{3mm}  +  \sum_{i=1}^m \frac{1}{2\rho}  \|\rho(\boldsymbol{x}_s^{r+1} - \bar{\boldsymbol{x}}_i^{r, K}) - (\boldsymbol{\lambda}_{s|i}^{r+1} +\boldsymbol{\lambda}_{i|s}^{r+1} )   \|^2 \nonumber\\
&  \hspace{3mm} + 2\sum_{i=1}^m \boldsymbol{\lambda}_{i|s}^{\star} \bar{\boldsymbol{x}}_i^{r, K} - \sum_{i=1}^m \rho \| \boldsymbol{x}_s^{r+1} - \bar{\boldsymbol{x}}_i^{r, K} \|^2 \nonumber \\
&  \hspace{3mm}  - \sum_{i=1}^m\frac{1}{\rho}\|\boldsymbol{\lambda}_{s|i}^{r+1} + \boldsymbol{\lambda}_{i|s}^{r+1}  \|^2 \nonumber \\
& \stackrel{(c)}{=}  \sum_{i=1}^m \frac{1}{2\rho}  \|\rho(\boldsymbol{x}_s^r - \boldsymbol{x}_i^{\star}) - (\boldsymbol{\lambda}_{s|i}^r +\boldsymbol{\lambda}_{i|s}^{\star} )   \|^2  \nonumber \\
& \hspace{3mm}  - \hspace{-0.7mm}  \sum_{i=1}^m \frac{1}{2\rho}  \|\rho(\boldsymbol{x}_s^{r+1} \hspace{-0.7mm} -\hspace{-0.7mm}  \boldsymbol{x}_i^{\star}) 
\hspace{-0.7mm} -\hspace{-0.7mm}  (\boldsymbol{\lambda}_{s|i}^{r+1} \hspace{-0.7mm} +\hspace{-0.7mm} \boldsymbol{\lambda}_{i|s}^{\star} )   \|^2 \hspace{-0.7mm} +\hspace{-0.7mm}  2\sum_{i=1}^m \boldsymbol{\lambda}_{i|s}^{\star} \bar{\boldsymbol{x}}_i^{r, K}  \nonumber \\
& \stackrel{(d)}{=}  \sum_{i=1}^m \frac{1}{2\rho}  \|\rho(\bar{\boldsymbol{x}}_i^{r, K} - \boldsymbol{x}_i^{\star}) + (\boldsymbol{\lambda}_{i|s}^{r+1} -\boldsymbol{\lambda}_{i|s}^{\star} )   \|^2  \nonumber 
\end{align}
\begin{align}
& \hspace{3mm}  -   \sum_{i=1}^m \frac{1}{2\rho}  \|\rho(\bar{\boldsymbol{x}}_i^{r+1, K} - \boldsymbol{x}_i^{\star}) + (\boldsymbol{\lambda}_{i|s}^{r+2} -\boldsymbol{\lambda}_{i|s}^{\star} )   \|^2 \nonumber \\
& \hspace{3mm} + 2\sum_{i=1}^m \boldsymbol{\lambda}_{i|s}^{\star} \bar{\boldsymbol{x}}_i^{r, K},
\end{align}
where step $(a)$ uses the fact that $\sum_{i=1}^m \boldsymbol{\lambda}_{s|i}^{\star} = \sum_{i=1}^m \boldsymbol{\lambda}_{i|s}^{\star}  = 0$, step $(b)$ follows from Lemma~\ref{lemma:identity},  step $(c)$ uses the identities of $\rho(\boldsymbol{x}_s^r-\bar{\boldsymbol{x}}_i^{r, K}) - (\boldsymbol{\lambda}_{s|i}^r + \boldsymbol{\lambda}_{i|s}^{r+1}) =0 $ and $\rho(\boldsymbol{x}_s^{r+1}-\bar{\boldsymbol{x}}_i^{r, K}) + (\boldsymbol{\lambda}_{s|i}^{r+1} + \boldsymbol{\lambda}_{i|s}^{r+1}) =0 $ from (\ref{equ:client_update_split})-(\ref{equ:server_update_split}), and step $(d)$ uses $\rho(\boldsymbol{x}_s^k-\bar{\boldsymbol{x}}_i^{r,K}) - (\boldsymbol{\lambda}_{s|i}^r + \boldsymbol{\lambda}_{i|s}^{r+1}) =0 $ and $\rho(\boldsymbol{x}_s^{r+1}-\bar{\boldsymbol{x}}_i^{r+1, K}) - (\boldsymbol{\lambda}_{s|i}^{r+1} + \boldsymbol{\lambda}_{i|s}^{r+2}) =0 $. The proof is complete. 
\end{proof}

\section{Proof for Lemma~\ref{lemma:lower_bound} }
\label{appendix:lemma_lowerbound}
\begin{proof}
The lower bound in (\ref{equ:lowerbound}) can be easily proved to be:
\begin{align}
& \sum_{i=1}^m  \Big[   f_i(\boldsymbol{x}_i ) - f_i(\boldsymbol{x}_i^{\star}) - \boldsymbol{x}_i^{T}\boldsymbol{\lambda}_{i|s}^{\star} \Big] \nonumber \\
&\stackrel{(a)}{\geq} \sum_{i=1}^m  \Big[    -f_i^{\ast}(\boldsymbol{\lambda}_{i|s}^{\star}) - f_i(\boldsymbol{x}_i^{\star})  \Big] \stackrel{}{=} 0 \nonumber,
\end{align}
where $f_i^{\ast}(\cdot)$ is the conjugate function of $f_i(\cdot)$ as defined in (\ref{equ:conj_def}). Step~$(a)$ uses Fenchel's inequality (see \cite{Boyd04ConvexOptimization}).  It is known that for a convex function, the duality gap is 0 at the optimal solution. The proof is complete. 
\end{proof}

\section{Proof for Theorem~\ref{theorem:linear_conv}}
\label{appendix:linear_conv}

\begin{proof}
The proof for Theorem~\ref{theorem:linear_conv} is mainly based on the results in Lemma~\ref{lemma:twoBounds} and \ref{lemma:lower_bound}. Assume that $1>\theta>0$ and $1/\eta > L\geq \mu >0 \}$. The RHS of (\ref{equ:upper_bound_final}) in Lemma~\ref{lemma:twoBounds} can be further lower bounded by
\begin{align}
 & \sum_{i=1}^m  \frac{1}{K}\sum_{k=0}^{K-1} \hspace{-0.6mm} \frac{1/\eta - \theta \mu}{2}\|\boldsymbol{x}_i^{r,k}-\boldsymbol{x}_i^{\star} \|^2 \hspace{-0.6mm} \nonumber \\
  &+  \sum_{i=1}^m  \frac{1}{4\rho}  \|\rho(\bar{\boldsymbol{x}}_i^{r, K} - \boldsymbol{x}_i^{\star}) + (\boldsymbol{\lambda}_{i|s}^{r+1} -\boldsymbol{\lambda}_{i|s}^{\star} )   \|^2 \nonumber \\
  \hspace{-2mm}&\geq \sum_{i=1}^m \Big[ f_i(\bar{\boldsymbol{x}}_i^{r, K} ) - (\bar{\boldsymbol{x}}_i^{r,K})^T \boldsymbol{\lambda}_{i|s}^{\star} - f_i(\boldsymbol{x}_i^{\star})  \nonumber \\
  & \hspace{9mm} + \frac{1}{K}\sum_{k=0}^{K-1} \Big(\frac{1}{2\eta}  \|\boldsymbol{x}_i^{\star} - \boldsymbol{x}_i^{r, k+1} \|^2  \nonumber\\
\hspace{-3mm}&\hspace{9mm} +\hspace{-0.6mm} \frac{1/\eta - L}{2} \|\boldsymbol{x}_i^{r, k+1}  \hspace{-0.6mm}- \hspace{-0.6mm}\boldsymbol{x}_i^{r, k} \|^2 \hspace{-0.7mm}  + \hspace{-0.7mm}  \frac{1-\theta}{2L}\|  \rho(\hspace{-0.6mm} \boldsymbol{x}_s^{r} \hspace{-0.7mm} - \hspace{-0.6mm}  \boldsymbol{x}_i^{r,k+1} ) \hspace{-0.6mm}\nonumber \\
\hspace{-3mm}&  \hspace{9mm}  -\hspace{-0.6mm}  \boldsymbol{\lambda}_{s|i}^{r} \hspace{-0.6mm} - \boldsymbol{\lambda}_{i|s}^{\star} -\hspace{-0.6mm} (1/\eta)(\boldsymbol{x}_i^{r, k+1} \hspace{-0.6mm} -\hspace{-0.6mm}  \boldsymbol{x}_i^{r, k}) \hspace{-0.6mm}\|^2 \Big)\nonumber \\
 &\hspace{9mm} +  \frac{1}{4\rho}  \|\rho(\bar{\boldsymbol{x}}_i^{r+1, K} - \boldsymbol{x}_i^{\star}) + (\boldsymbol{\lambda}_{i|s}^{r+2} -\boldsymbol{\lambda}_{i|s}^{\star} )   \|^2 \Big]  \nonumber \\
&\stackrel{(a)}{\geq} \sum_{i=1}^m \Big[ \frac{1}{K}\sum_{k=0}^{K-1} \Big(\frac{1}{2\eta}  \|\boldsymbol{x}_i^{\star} - \boldsymbol{x}_i^{r, k+1} \|^2  \nonumber\\
\hspace{-3mm}&\hspace{9mm} +\hspace{-0.6mm} \frac{1/\eta - L}{2} \|\boldsymbol{x}_i^{r, k+1}  \hspace{-0.6mm}- \hspace{-0.6mm}\boldsymbol{x}_i^{r, k} \|^2 \hspace{-0.7mm}  + \hspace{-0.7mm}  \frac{1-\theta}{2\eta^2 L}\|  \eta( \rho(\hspace{-0.6mm} \boldsymbol{x}_s^{r} \hspace{-0.7mm} - \hspace{-0.6mm}  \boldsymbol{x}_i^{r,k+1} ) \hspace{-0.6mm}\nonumber 
\end{align}
\begin{align}
\hspace{-3mm}&  \hspace{9mm}  -\hspace{-0.6mm}  \boldsymbol{\lambda}_{s|i}^{r} \hspace{-0.6mm} - \boldsymbol{\lambda}_{i|s}^{\star}) -\hspace{-0.6mm} (\boldsymbol{x}_i^{r, k+1} \hspace{-0.6mm} -\hspace{-0.6mm}  \boldsymbol{x}_i^{r, k}) \hspace{-0.6mm}\|^2  \Big) \nonumber \\
 &\hspace{5mm} +  \frac{1}{4\rho}  \|\rho(\bar{\boldsymbol{x}}_i^{r+1, K} - \boldsymbol{x}_i^{\star}) + (\boldsymbol{\lambda}_{i|s}^{r+2} -\boldsymbol{\lambda}_{i|s}^{\star} )   \|^2 \Big] \nonumber \\
 &\stackrel{(b)}{\geq} \sum_{i=1}^m \Big[ \frac{1}{K}\sum_{k=0}^{K-1} \Big(  \frac{1/\eta -\theta\mu \phi + \theta\mu \phi}{2} \|\boldsymbol{x}_i^{\star} - \boldsymbol{x}_i^{r, k+1} \|^2  \nonumber\\
\hspace{-3mm}&\hspace{9mm} + \hspace{-0.7mm}  \frac{\gamma_1}{2} \|  \eta( \rho(\hspace{-0.6mm} \boldsymbol{x}_s^{r} \hspace{-0.7mm} - \hspace{-0.6mm}  \boldsymbol{x}_i^{r,k+1} )   -\hspace{-0.6mm}  \boldsymbol{\lambda}_{s|i}^{r} \hspace{-0.6mm} - \boldsymbol{\lambda}_{i|s}^{\star}) \hspace{-0.6mm}\|^2\Big)  \nonumber \\
 &\hspace{5mm} +  \frac{1}{4\rho}  \|\rho(\bar{\boldsymbol{x}}_i^{r+1, K} - \boldsymbol{x}_i^{\star}) + (\boldsymbol{\lambda}_{i|s}^{r+2} -\boldsymbol{\lambda}_{i|s}^{\star} )   \|^2 \Big] \nonumber \\
  &\stackrel{(c)}{\geq} \sum_{i=1}^m \Big[ \frac{1}{K}\sum_{k=0}^{K-1}  \frac{1/\eta -\theta\mu \phi}{2}  \|\boldsymbol{x}_i^{\star} - \boldsymbol{x}_i^{r, k+1} \|^2  \nonumber\\
\hspace{-3mm}&\hspace{9mm} + \frac{\theta\mu \phi}{2}  \|\boldsymbol{x}_i^{\star} - \bar{\boldsymbol{x}}_i^{r, K} \|^2 + \hspace{-0.7mm}  \frac{\gamma_1 \eta^2}{2} \|   \boldsymbol{\lambda}_{i|s}^{r+1} \hspace{-0.6mm} - \boldsymbol{\lambda}_{i|s}^{\star} \hspace{-0.6mm}\|^2  \nonumber \\
 &\hspace{5mm} +  \frac{1}{4\rho}  \|\rho(\bar{\boldsymbol{x}}_i^{r+1, K} - \boldsymbol{x}_i^{\star}) + (\boldsymbol{\lambda}_{i|s}^{r+2} -\boldsymbol{\lambda}_{i|s}^{\star} )   \|^2 \Big] \nonumber \\
&\stackrel{(d)}{\geq} \sum_{i=1}^m \Big[ \frac{1}{K}\sum_{k=0}^{K-1}  \frac{1/\eta -\theta\mu \phi}{2}  \|\boldsymbol{x}_i^{\star} - \boldsymbol{x}_i^{r, k+1} \|^2  \nonumber\\
\hspace{-3mm}&\hspace{9mm} +\frac{\gamma_2}{2}  \|\rho(\bar{\boldsymbol{x}}_i^{r, K} - \boldsymbol{x}_i^{\star}) + (\boldsymbol{\lambda}_{i|s}^{r+1} -\boldsymbol{\lambda}_{i|s}^{\star} )   \|^2  \nonumber \\
 &\hspace{5mm} +  \frac{1}{4\rho}  \|\rho(\bar{\boldsymbol{x}}_i^{r+1, K} - \boldsymbol{x}_i^{\star}) + (\boldsymbol{\lambda}_{i|s}^{r+2} -\boldsymbol{\lambda}_{i|s}^{\star} )   \|^2 \Big], \label{equ:proof_theoremLinear_1} 
\end{align}
where step $(a)$ follows from Lemma~\ref{lemma:lower_bound}. Step $(b)$ introduces $1>\phi>0$ and utilises the inequality $\|\boldsymbol{b}\|^2+\|\boldsymbol{c}\|^2\geq \frac{1}{2}\|\boldsymbol{b}+ \boldsymbol{c}\|^2$. The parameter $\gamma_{1}$ is defined as  
\begin{align}
\gamma_{1} &= \min\left(\frac{1-\theta}{2L \eta^2} , \frac{1/\eta - L}{2}\right).
\label{equ:gamma_i1}
\end{align}
Step $(c)$ employs Jensen's inequality and $\boldsymbol{\lambda}_{i|s}^{r+1} = \rho(\boldsymbol{x}_s^r - \bar{\boldsymbol{x}}_i^{r,K}) - \boldsymbol{\lambda}_{s|i}^r$. Step $(d)$ utilises the inequality $\|\boldsymbol{b}\|^2+\|\boldsymbol{c}\|^2\geq \frac{1}{2}\|\boldsymbol{b}+ \boldsymbol{c}\|^2$ again, and the parameter $\gamma_{2}$ is defined as  
\begin{align}
\gamma_{2} &= \min\left( \frac{\theta\mu\phi}{2\rho^2},  \frac{\gamma_{1}\eta^2}{2}  \right).
\label{equ:gamma_i2}
\end{align}

By using  $\{\boldsymbol{x}_i^{r-1,K}= \boldsymbol{x}_i^{r,k=0}\}$, the inequality (\ref{equ:proof_theoremLinear_1}) can be reformulated as
\begin{align}
 & \sum_{i=1}^m   \hspace{-0.6mm} \frac{1/\eta - \theta \mu}{2K}\|\boldsymbol{x}_i^{r-1,K}-\boldsymbol{x}_i^{\star} \|^2 \hspace{-0.6mm} \nonumber \\
  &+  \sum_{i=1}^m  \Big(\frac{1}{4\rho} - \frac{\gamma_2}{2} \Big) \|\rho(\bar{\boldsymbol{x}}_i^{r, K} - \boldsymbol{x}_i^{\star}) + (\boldsymbol{\lambda}_{i|s}^{r+1} -\boldsymbol{\lambda}_{i|s}^{\star} )   \|^2 \nonumber \\
 & \geq \sum_{i=1}^m \frac{1}{K}\sum_{k=1}^{K-1}  \frac{\theta\mu (1-\phi)}{2}  \|\boldsymbol{x}_i^{\star} - \boldsymbol{x}_i^{r, k} \|^2  \nonumber\\
& + \sum_{i=1}^m  \frac{1/\eta -\theta\mu \phi}{2K}  \|\boldsymbol{x}_i^{\star} - \boldsymbol{x}_i^{r, K} \|^2  \nonumber\\
 &\hspace{5mm} + \sum_{i=1}^m  \frac{1}{4\rho}  \|\rho(\bar{\boldsymbol{x}}_i^{r+1, K} - \boldsymbol{x}_i^{\star}) + (\boldsymbol{\lambda}_{i|s}^{r+2} -\boldsymbol{\lambda}_{i|s}^{\star} )   \|^2 \nonumber \\
 & \geq \sum_{i=1}^m  \frac{1/\eta -\theta\mu \phi}{2K}  \|\boldsymbol{x}_i^{\star} - \boldsymbol{x}_i^{r, K} \|^2  \nonumber 
 \end{align}
 \begin{align}
 &\hspace{5mm} + \sum_{i=1}^m  \frac{1}{4\rho}  \|\rho(\bar{\boldsymbol{x}}_i^{r+1, K} - \boldsymbol{x}_i^{\star}) + (\boldsymbol{\lambda}_{i|s}^{r+2} -\boldsymbol{\lambda}_{i|s}^{\star} )   \|^2.
\label{equ:proof_theoremLinear_2}
\end{align}
We note that when $\phi$ is chosen to satisfy $ \frac{1}{4\rho} > \frac{\theta\mu \phi}{4\rho^2}$, we have $ \frac{1}{4\rho} > \frac{\theta\mu \phi}{4\rho^2} \geq \frac{\gamma_{2}}{2} $ based on the definition of $\gamma_{2}$ in (\ref{equ:gamma_i2}). As a result, it is clear from (\ref{equ:proof_theoremLinear_2}) that the coefficients before $\|\rho(\boldsymbol{x}_i^{r-1,K}-\boldsymbol{x}_i^{\star}) \|^2 $  and $\|\rho(\bar{\boldsymbol{x}}_i^{r, K} - \boldsymbol{x}_i^{\star}) + (\boldsymbol{\lambda}_{i|s}^{r+1} -\boldsymbol{\lambda}_{i|s}^{\star} )   \|^2$ are smaller than those coefficients before  $\|\rho(\boldsymbol{x}_i^{r, K}-\boldsymbol{x}_i^{\star}) \|^2 $  and $\|\rho\bar{\boldsymbol{x}}_i^{r+1, K} - \boldsymbol{x}_i^{\star}) + (\boldsymbol{\lambda}_{i|s}^{r+2} -\boldsymbol{\lambda}_{i|s}^{\star} )   \|^2$. Therefore, we can conclude that GPDMM has linear convergence rate under certain conditions. The expression for the parameter $\beta$ in Theorem~\ref{theorem:linear_conv} can be easily derived from (\ref{equ:proof_theoremLinear_2}). The proof is complete.  
\end{proof}

\section{Proof for Theorem~\ref{theorem:sublinear}}
\label{appendix:sublinear}

\begin{proof}
Similar to Appendix~\ref{appendix:linear_conv}, the proof for Theorem~\ref{theorem:sublinear} is also based on the results in Lemma~\ref{lemma:twoBounds} and \ref{lemma:lower_bound}. Summing the inequality (\ref{equ:upper_bound_final}) in Lemma~\ref{lemma:twoBounds} from $r=1$ until $r=R$ and setting $\mu=0$ and $\theta=0$ produces   
\begin{align}
 &\frac{1}{R}\sum_{r=1}^R  \frac{1}{K} \Big[\hspace{-0.6mm} \frac{1/\eta }{2}\|\boldsymbol{x}_i^{0,K}-\boldsymbol{x}_i^{\star} \|^2 \hspace{-0.6mm} \nonumber \\
  &\hspace{8mm} + \frac{1}{4\rho}  \|\rho(\bar{\boldsymbol{x}}_i^{1, K} - \boldsymbol{x}_i^{\star}) + (\boldsymbol{\lambda}_{i|s}^{2} -\boldsymbol{\lambda}_{i|s}^{\star} )   \|^2 \Big] \nonumber \\
  \hspace{-2mm}&\geq \frac{1}{R}\sum_{r=1}^R\sum_{i=1}^m \Big[ f_i(\bar{\boldsymbol{x}}_i^{r, K} ) - (\bar{\boldsymbol{x}}_i^{r,K})^T \boldsymbol{\lambda}_{i|s}^{\star} - f_i(\boldsymbol{x}_i^{\star})  \nonumber \\
\hspace{-1mm}&\hspace{1mm} +\hspace{-0.7mm}\frac{1}{K}\hspace{-0.7mm}\sum_{k=0}^{K-1}\hspace{-0.7mm}\Big(\hspace{-0.6mm} \frac{1/\eta \hspace{-0.7mm}-\hspace{-0.7mm} L}{2} \|\boldsymbol{x}_i^{r, k+1}  \hspace{-0.7mm}- \hspace{-0.7mm}\boldsymbol{x}_i^{r, k} \|^2 \hspace{-0.7mm}  + \hspace{-0.7mm}  \frac{1}{2L\eta^2}\|  \eta(\rho(\hspace{-0.6mm} \boldsymbol{x}_s^{r} \hspace{-0.7mm} - \hspace{-0.6mm}  \boldsymbol{x}_i^{r,k+1} ) \hspace{-0.6mm}\nonumber \\
\hspace{-1mm}&  \hspace{9mm}  -\hspace{-0.6mm}  \boldsymbol{\lambda}_{s|i}^{r} \hspace{-0.6mm} - \boldsymbol{\lambda}_{i|s}^{\star}) -\hspace{-0.6mm} (\boldsymbol{x}_i^{r, k+1} \hspace{-0.6mm} -\hspace{-0.6mm}  \boldsymbol{x}_i^{r, k}) \hspace{-0.6mm}\|^2 \Big) \Big] \nonumber \\
  \hspace{-2mm}&\stackrel{(a)}{\geq} \frac{1}{R}\sum_{r=1}^R\sum_{i=1}^m \Big[ f_i(\bar{\boldsymbol{x}}_i^{r, K} ) - (\bar{\boldsymbol{x}}_i^{r,K})^T \boldsymbol{\lambda}_{i|s}^{\star} - f_i(\boldsymbol{x}_i^{\star})  \nonumber \\
\hspace{-1mm}&\hspace{1mm} +\hspace{-0.7mm}\frac{1}{K}\hspace{-0.7mm}\sum_{k=0}^{K-1}\hspace{-0.7mm}\Big(\hspace{-0.6mm}  \frac{\gamma_1}{2}\|  \eta(\rho(\hspace{-0.6mm} \boldsymbol{x}_s^{r} \hspace{-0.7mm} - \hspace{-0.6mm}  \boldsymbol{x}_i^{r,k+1} ) \hspace{-0.6mm}  - \hspace{-0.6mm}  \boldsymbol{\lambda}_{s|i}^{r} \hspace{-0.6mm} - \boldsymbol{\lambda}_{i|s}^{\star}) \|^2 \Big) \Big] \nonumber \\
  \hspace{-2mm}&\stackrel{(b)}{\geq} \frac{1}{R}\sum_{r=1}^R\sum_{i=1}^m \Big[ f_i(\bar{\boldsymbol{x}}_i^{r, K} ) - (\bar{\boldsymbol{x}}_i^{r,K})^T \boldsymbol{\lambda}_{i|s}^{\star} - f_i(\boldsymbol{x}_i^{\star})  \nonumber \\
\hspace{-1mm}&\hspace{15mm} +\hspace{-0.7mm}\Big(\hspace{-0.6mm}  \frac{\gamma_1\eta^2}{2}\|  \boldsymbol{x}_{i|s}^{r+1}  \hspace{-0.6mm} - \boldsymbol{\lambda}_{i|s}^{\star} \|^2 \Big) \Big] \nonumber \\
  \hspace{-2mm}&\stackrel{(c)}{\geq} \sum_{i=1}^m \Big[ f_i(\bar{\boldsymbol{x}}_i^{R, K} ) - (\bar{\boldsymbol{\lambda}}_i^{R,K})^T \boldsymbol{\lambda}_{i|s}^{\star} - f_i(\boldsymbol{x}_i^{\star})  \nonumber \\
\hspace{-1mm}&\hspace{15mm} +\hspace{-0.7mm}\Big(\hspace{-0.6mm}  \frac{\gamma_1\eta^2}{2}\|  \bar{\boldsymbol{\lambda}}_{i|s}^{R}  \hspace{-0.6mm} - \boldsymbol{\lambda}_{i|s}^{\star} \|^2 \Big) \Big],
\label{appendix:theorem_sublinear_proof_1}
\end{align}
where step $(a)$ utilises the inequality $\|\boldsymbol{b}\|^2+\|\boldsymbol{c}\|^2\geq \frac{1}{2}\|\boldsymbol{b}+ \boldsymbol{c}\|^2$, and the parameter $\gamma_{1}$ is given by (\ref{equ:gamma_i1}) with $\theta=0$. Step $(b)$ employs Jensen's inequality and $\boldsymbol{\lambda}_{i|s}^{r+1} = \rho(\boldsymbol{x}_s^r - \bar{\boldsymbol{x}}_i^{r,K}) - \boldsymbol{\lambda}_{s|i}^r$. Step $(b)$ employs Jensen's inequality again. The results in Theorem \ref{theorem:sublinear} follows directly using the property that the LHS of (\ref{appendix:theorem_sublinear_proof_1}) decays in the order of $O(1/R)$. The proof is complete. \end{proof}

\ifCLASSOPTIONcaptionsoff
  \newpage
\fi

\bibliographystyle{IEEEtran}
\end{document}